\documentclass[12pt]{article}
\usepackage{fullpage}

\usepackage{algorithm}
\usepackage{algpseudocode}
\usepackage{amsmath}
\usepackage{amssymb}
\usepackage{amsthm}
\usepackage{appendix}
\usepackage{acro}
\usepackage{bm}
\usepackage{booktabs} 
\usepackage{circledsteps}
\usepackage{enumitem}
\usepackage{graphicx}
\usepackage{hyperref}
\usepackage{cleveref}
\usepackage{microtype}
\usepackage{natbib}
\usepackage{physics}
\usepackage{placeins}
\usepackage{xcolor}
\usepackage{subcaption}
\usepackage{booktabs}
\usepackage[normalem]{ulem}

\usepackage{tikz}
\usepackage{longtable}
\usepackage{colortbl}
\usepackage{pgfplots}

\newtheorem{definition}{Definition}
\newtheorem{example}{Example}
\newtheorem{theorem}{Theorem}
\newtheorem{lemma}{Lemma}
\newtheorem{remark}{Remark}

\DeclareMathAlphabet\matheuvm{U}{zeur}{m}{n}

\title{Reinforcement Learning for Adaptive MCMC}

\author{Congye Wang$^1$, Wilson Chen$^2$, Heishiro Kanagawa$^1$, Chris. J. Oates$^{1,3}$ \\
\small $^1$ Newcastle University, UK \\
\small $^2$ University of Sydney, Australia \\
\small $^3$ Alan Turing Institute, UK }


\acsetup{format/first-long=\itshape}

\DeclareAcronym{pdf}{short=PDF, long=probability density function}
\DeclareAcronym{mala}{short=MALA, long=Metropolis-adjusted Langevin algorithm}
\DeclareAcronym{amala}{short=AMALA, long=adaptive Metropolis-adjusted Langevin algorithm}
\DeclareAcronym{mcmc}{short=MCMC, long=Markov chain Monte Carlo}
\DeclareAcronym{rl}{short=RL, long=reinforcement learning}
\DeclareAcronym{mdp}{short=MDP, long=Markov decision process}
\DeclareAcronym{ddpg}{short=DDPG, long=deep deterministic policy gradient}
\DeclareAcronym{esjd}{short=ESJD, long=expected squared jump distance}
\DeclareAcronym{ssage}{short=SSAGE, long=simultaneously strongly aperiodically geometrically ergodic}
\DeclareAcronym{rlmh}{short=RLMH, long=Reinforcement Learning Metropolis--Hastings}
\DeclareAcronym{arwmh}{short=ARWMH, long=adaptive random walk Metropolis--Hastings}
\DeclareAcronym{nuts}{short=NUTS, long=no U-turn sampler}
\DeclareAcronym{mmd}{short=MMD, long=maximum mean discrepancy}
\DeclareAcronym{ess}{short=ESS, long=effective sample size}

\DeclareMathOperator*{\argmin}{arg\,min}


\definecolor{rlcolor}{rgb}{0.41, 0.4, 0.54}
\definecolor{amcolor}{rgb}{0.72, 0.5, 0.44}
\definecolor{nutscolor}{rgb}{0.7, 0.69, 0.71}
\definecolor{malacolor}{rgb}{0.57, 0.68, 0.62}


\begin{document}

\maketitle

\begin{abstract}
An informal observation, made by several authors, is that the adaptive design of a Markov transition kernel has the flavour of a reinforcement learning task.
Yet, to-date it has remained unclear how to actually exploit modern reinforcement learning technologies for adaptive MCMC.
The aim of this paper is to set out a general framework, called \emph{Reinforcement Learning Metropolis--Hastings}, that is theoretically supported and empirically validated.
Our principal focus is on learning fast-mixing Metropolis--Hastings transition kernels, which we cast as deterministic policies and optimise via a policy gradient.
Control of the learning rate provably ensures conditions for ergodicity are satisfied.
The methodology is used to construct a gradient-free sampler that out-performs a popular gradient-free adaptive Metropolis--Hastings algorithm on $\approx 90 \%$ of tasks in the \texttt{PosteriorDB} benchmark. 
\end{abstract}

\section{Introduction}

A vast literature on algorithms, tips, and tricks is testament to the success of \ac{mcmc}, which remains the most popular approach to numerical approximation of probability distributions characterised up to an intractable normalisation constant.
Yet the breadth of methodology also presents a difficulty in selecting an appropriate algorithm for a specific task. 
The goal of \emph{adaptive} \ac{mcmc} is to automate, as much as possible, the design of a fast-mixing Markov transition kernel.
To achieve this, one alternates between observing the performance of the current transition kernel, and updating the transition kernel in a manner that is expected to improve its future performance \citep{andrieu2008tutorial}.
Though the online adaptation of a Markov transition kernel in principle sacrifices the ergodicy of  \ac{mcmc}, there are several ways to prove that ergodicity is in fact retained if the transition kernel converges fast enough (in an appropriate sense) to a sensible limit.

There is at least a superficial relationship between adaptive \ac{mcmc} and \ac{rl}, with both attempting to perform optimisation in a \ac{mdp} context. 
Several authors have noted this similarity, yet to-date it has remained unclear whether state-of-the-art \ac{rl} technologies can be directly exploited for adaptive \ac{mcmc}.
The demonstrated success of \ac{rl} in tackling diverse \acp{mdp}, including autonomous driving \citep{kiran2021deep}, gaming \citep{silver2016mastering}, and natural language processing \citep{shinn2024reflexion}, suggests there is a considerable untapped potential if \ac{rl} can be brought to bear on adaptive \ac{mcmc}.

This paper sets out a general framework in \Cref{sec: methods}, called \emph{Reinforcement Learning Metropolis--Hastings}, in which the parameters of Metropolis--Hastings transition kernels are iteratively optimised along the \ac{mcmc} sample path via a policy gradient.
In particular, we explore transition kernels parametrised by neural networks, and leverage state-of-the-art deterministic policy gradient algorithms from \ac{rl} to learn suitable parameters for the neural network.
Despite the apparent complexity of the set-up, at least compared to more standard methods in adaptive \ac{mcmc}, it is shown in \Cref{sec: theory} how control of the learning rate and gradient clipping can be used to provably guarantee that diminishing adaptation and containment conditions, which together imply ergodicity of the resulting Markov process, are satisfied.
The methodology was objectively stress-tested, with results on the \texttt{PosteriorDB} benchmark reported in \Cref{sec: empirical}.

\subsection{Related Work}

Before presenting our methodology, we first provide a brief summary of the extensive literature on adaptive \ac{mcmc} (\Cref{subsec: adapt mcmc}), and a comprehensive review of existing work at the interface of \ac{mcmc} and \ac{rl} (\Cref{subsec: rl meets mcmc}).
Let $\mathcal{X}$ be a topological space equipped with the Borel $\sigma$-algebra; henceforth the measurability of relevant functions and sets will always be assumed.
For notation, we mainly work with densities throughout and use $p(\cdot)$ denote the density of the target distribution, with respect to an appropriate reference measure on $\mathcal{X}$.

\subsubsection{Adaptive MCMC}
\label{subsec: adapt mcmc}

For the most part, research into adaptive \ac{mcmc} has focused on the adaptive design of a fast-mixing Metropolis--Hastings transition kernel \citep{haario2006dram,roberts2009examples}, though the adaptive design of other classes of \ac{mcmc} method, such as Hamiltonian Monte Carlo, have also been considered \citep{wang2013adaptive,hoffman2014no,christiansen2023self}.
Recall that Metropolis--Hastings refers to a Markov chain $(x_n)_{n \in \mathbb{N}} \subset \mathcal{X}$ such that, to generate $x_{n+1}$ from $x_n$, we first simulate a candidate state $x_{n+1}^\star \sim q(\cdot | x_n)$, where the collection $\{ q(\cdot | x) : x \in \mathcal{X}\}$ is called the \emph{proposal}, and then set $x_{n+1} = x_{n+1}^\star$ with probability
\begin{align}
    \alpha(x_n,x_{n+1}^\star) := \min\left\{ 1 , \frac{p(x_{n+1}^\star)}{p(x_n)} \frac{q(x_n | x_{n+1}^\star) }{ q(x_{n+1}^\star | x_n)} \right\} , \label{eq: std accept reject}
\end{align}
else we set $x_{n+1} = x_n$.
To develop an adaptive \ac{mcmc} method in this context three main ingredients are required:

\paragraph{Performance criterion}
The first ingredient is a criterion to be (approximately) optimised.
Standard choices include: the negative correlation between consecutive states $x_n$ and $x_{n+1}$, or between $x_n$ and $x_{n+l}$ for some lag $l$; the average return time to a given set; the expected squared jump distance \citep{pasarica2010adaptively}; or the asymptotic variance associated to a function $f(x_n)$ of interest \citep{andrieu2001controlled}.
For Metropolis--Hastings chains specifically, the average acceptance rate is a criterion that is widely-used, but alternatives include criteria based on raw acceptance probabilities \citep{titsias2019gradient}, and using a divergence between the proposal and the target distributions \citep{andrieu2006ergodicity,dharamshi2023sampling}.

\paragraph{Candidate transition kernels}
The second ingredient is a set of candidate transition kernels, among which an adaptive \ac{mcmc} method aims to identify an element that is optimal with respect to the performance criterion.
In the Metropolis--Hastings context, a popular approach is to use previous samples $( x_i )_{i=1}^n$ to construct a rough approximation $p_n$ to the target distribution $p$, and then to exploit $p_n$ in the construction of a Metropolis--Hastings proposal.
For example, $p_n$ might be a Gaussian approximation based on the mean and covariance of the states previously visited, and the proposal may be $p_n$ itself.
Several authors have proposed more flexible approximation methods, such as a mixture of $t$-distributions \citep{tran2016adaptive}, local polynomial approximation of the target \citep{conrad2016accelerating}, and normalising flows \citep{gabrie2022adaptive}.
Such methods can suffer from substantial autocorrelation during the warm-up phase, but once a good approximation has been learned, samples generated using this approach can be almost independent \citep{davis2022rate}.

\paragraph{Mechanism for adaptation}
The final ingredient is a mechanism to adaptively select a suitable transition kernel, based on the current sample path $(x_i)_{i=1}^n$, in such a manner that the ergodicity of the Markov chain is still assured.
General sufficient conditions for ergodicity of adaptive \ac{mcmc} have been obtained, and these can guide the construction of a mechanism for selecting a transition kernel \citep{atchade2005adaptive,andrieu2006ergodicity,atchade2011adaptive}.
Some of the more recent research directions in adaptive \ac{mcmc} include the use of Bayesian optimisation \citep{mahendran2012adaptive}, incorporating global jumps between modes as they are discovered \citep{pompe2020framework}, local adaptation of the step size in the \ac{mala} \citep{biron2023automala}, tuning gradient-based \ac{mcmc} by directly minimising a divergence between $p$ and the \ac{mcmc} output \citep{coullon2023efficient}, and the online training of diffusion models to approximate the target \citep{hunt2024accelerating}.
Our contribution will, in effect, demonstrate that state-of-the-art methods from \ac{rl} can be added to this list.

\subsubsection{Reinforcement Learning Meets MCMC}
\label{subsec: rl meets mcmc}

Consider a \ac{mdp} with a \textit{state} set $\mathcal{S}$, an \textit{action} set $\mathcal{A}$, and an \textit{environment} which determines both how the state $s_n$ is updated to $s_{n+1}$ in response to an action $a_n$, and the (scalar) \emph{reward} $r_n$ that resulted.
\textit{Reinforcement learning} refers to a broad class of methods that attempt to learn a \emph{policy} $\pi : \mathcal{S} \rightarrow \mathcal{A}$, meaning a mechanism to select actions $a_n = \pi(s_n)$, which aims to maximise the expected cumulative reward \citep{kaelbling1996reinforcement}.
A useful taxonomy of modern \ac{rl} algorithms is based on whether the policy $\pi$ is deterministic or stochastic, and whether the action space is discrete or continuous \citep{franccois2018introduction}.
Techniques from deep learning are widely used in modern \ac{rl}, for example in \emph{Deep $Q$-Learning} (for discrete actions) \citep{mnih2015human} and in \emph{Deep Deterministic Policy Gradient} (for continuous actions) \citep{lillicrap2015continuous}.
The impressive performance of modern \ac{rl} serves as strong motivation for investigating if and how techniques from \ac{rl} can be harnessed for adaptive \ac{mcmc}.
Several authors have speculated on this possibility, but to-date the problem remained unsolved:

\textit{Policy Guided Monte Carlo} is an approach developed to sample configurations from discrete lattice models, proposed in \citet{bojesen2018policy}.
The approach identifies the state $s_n$ with the state $x_n$ of the Markov chain,  the (stochastic) policy $\pi$ with the proposal distribution $q$ in Metropolis--Hastings \ac{mcmc}, and the (discrete) action $a_n$ with the candidate $x_{n+1}^\star$ that is being proposed.
At first sight this seems natural, but we contend that this set-up is not appropriate \ac{rl}, because the action $x_{n+1}^\star$ contains insufficient information to determine the acceptance probability for the proposed state $x_{n+1}^\star$; to achieve that, we would also need to know the ratio of probabilities $q(x_n | x_{n+1}^\star) / q(x_{n+1}^\star | x_n)$ appearing in \eqref{eq: std accept reject}.
A possible mitigation is to restrict attention to symmetric proposals, for which this ratio becomes identically 1, but symmetry places a limitation on the potential performance of \ac{mcmc}.
As a result, this set-up does not fit well with the usual formulation of \ac{rl}, and modern \ac{rl} methods cannot be directly deployed.

An alternative set-up is to again associate the state $s_n$ with the state $x_n$ of the Markov chain, but now associate the action set $\mathcal{A}$ with a set of Markov transition kernels, so that the policy determines which transition kernel to use to move from the current state to the next.
This approach was called \textit{Reinforcement Learning Accelerated MCMC} in \cite{chung2020multi}, where a finite set of Markov transition kernels were developed for a particular multi-scale inverse problem, and called \textit{Reinforcement Learning-Aided MCMC} in \cite{wang2021reinforcement}, where \ac{rl} was used to learn frequencies for updating the blocks of a random-scan Gibbs sampler.
The main challenge in extending this approach to general-purpose adaptive \ac{mcmc} is the difficulty in parametrising a collection of transition kernels in such a manner that the gradient-based policy search will work well; perhaps as a consequence, existing work did not leverage state-of-the-art techniques from \ac{rl}.
In addition, it appears difficult to establish conditions for ergodicity with this set-up, with these existing methods being only empirically validated.
Our contribution is to propose an alternative set-up, where we parametrise the \emph{proposal} in Metropolis--Hastings \ac{mcmc} instead of the transition kernel, showing how this enables both of these open challenges to be resolved.

\section{Methods}
\label{sec: methods}

Recall that we work with densities\footnote{This choice sacrifices generality and is not required for our methodology, but serves to make the presentation more transparent.} and associate $p(\cdot)$ with a probability measure on $\mathcal{X}$.
The task is to generate samples from $p(\cdot)$ using adaptive \ac{mcmc}.
Let $\Phi$ be a topological space, and for each $\varphi \in \Phi$ let $\{q_\varphi(\cdot | x) : x \in \mathcal{X}\}$ denote a proposal.
Simple proposals will serve as building blocks for constructing more sophisticated Markov transition kernels, as we explain next.

\subsection{$\phi$-Metropolis--Hastings}

Suppose now that we are given a map $\phi : \mathcal{X} \rightarrow \Phi$.
This mapping defines a Markov chain such that, if the current state is $x_n$, the next state $x_{n+1}$ is generated using a Metropolis--Hastings transition kernel with proposal $q_{\phi(x_n)}(\cdot | x_n)$.
In other words, we have a transition kernel
\begin{align*}
    x_{n+1}^\star \sim q_{\phi(x_n)}(\cdot | x_n), \qquad 
    x_{n+1} \gets \left\{ \begin{array}{ll} x_{n+1}^\star & \text{with probability } \alpha_\phi(x_n,x_{n+1}^\star) \\ x_n & \text{otherwise} \end{array} \right. 
\end{align*}
with a $\phi$-dependent acceptance probability 
\begin{align}
    \alpha_\phi(x_n,x_{n+1}^\star) := \min\left\{ 1 , \frac{p(x_{n+1}^\star)}{p(x_n)} \frac{q_{\phi(x_{n+1}^\star)}(x_n | x_{n+1}^\star) }{ q_{\phi(x_n)}(x_{n+1}^\star | x_n)} \right\} . \label{eq: new accept reject}
\end{align}
The design of a fast-mixing Markov chain can then be formulated as the design of a suitable map $\phi$, characterising the state-dependent proposal.
The Markov chain associated to $\phi$ will be called \textit{$\phi$-MH} in the sequel.
Two illustrative examples are presented, where in each case we suppose $p(\cdot)$ is a distribution with (for illustrative purposes, \emph{known}) mean $\mu$ and standard deviation $\sigma$ on $\mathcal{X} = \mathbb{R}$:

\begin{example}[$\phi$-MH based on symmetric random walk proposal]
    Consider the case where $q_\varphi(\cdot | x)$ is Gaussian with mean $x$ and standard deviation $\varphi$, corresponding to a \emph{symmetric random walk}.
    Then $\phi$-MH with $\phi(x) = \sigma + |x-\mu|$ is a Metropolis--Hastings chain that attempts to take larger steps when the chain is further away from the mean $\mu$ of the target.
\end{example}

\begin{example}[$\phi$-MH based on independence sampler proposal]
\label{ex: independence sampler}
    Consider the case where $q_\varphi(\cdot | x)$ is Gaussian with mean $\varphi$ and standard deviation $\sigma$, corresponding to an \emph{independence sampler}.
    Then $\phi$-MH with $\phi(x) = 2 \mu - x$ is a Metropolis--Hastings chain that induces anti-correlation by promoting jumps from one side of $\mu$ to the other.
\end{example}

These examples of $\phi$-MH can be analysed using existing techniques and their ergodicity can be established, but for general $\phi$ we will require some regularity to ensure $\phi$-MH generates samples from the correct target.
\Cref{lem: ergodic} below establishes weak conditions\footnote{The conclusion of \Cref{lem: ergodic} can be extended to general topological spaces $\mathcal{X}$, by noting that the arguments used in the proof are all topological, but for the present purposes such a generalisation was not pursued.} under which $\phi$-MH transition is $p$-invariant and \emph{ergodic}, the latter understood in this paper to mean convergence in total variation of $\mathrm{Law}(x_n)$ to $p(\cdot)$ is guaranteed from any initial state $x_0 \in \mathcal{X}$.

\begin{lemma}[Ergodicity of $\phi$-MH; $\mathcal{X} = \mathbb{R}^d$] \label{lem: ergodic}
    Let $\phi$ be continuous, and let both $x \mapsto p(x)$ and $(\varphi,x,y) \mapsto q_\varphi(x,y)$ be positive and continuous.
    Then $\phi$-MH is $p$-invariant and ergodic.
\end{lemma}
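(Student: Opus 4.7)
The plan is to decompose the claim into two parts: $p$-invariance, established via detailed balance, and ergodicity, established by combining $\psi$-irreducibility, aperiodicity, and a local minorization argument that upgrades to Harris ergodicity on $\mathbb{R}^d$.

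For $p$-invariance, I would write the $\phi$-MH kernel explicitly as
\[
    P(x, A) = \int_A q_{\phi(x)}(y \mid x)\, \alpha_\phi(x,y)\, dy + r(x)\, \mathbf{1}_A(x),
\]
with rejection mass $r(x) = 1 - \int q_{\phi(x)}(y \mid x)\, \alpha_\phi(x,y)\, dy$. The expression \eqref{eq: new accept reject} is exactly the Metropolis--Hastings correction corresponding to the forward proposal density $q_{\phi(x)}(y \mid x)$ and the reverse proposal density $q_{\phi(y)}(x \mid y)$, so a direct calculation gives $p(x)\, q_{\phi(x)}(y \mid x)\, \alpha_\phi(x,y) = \min\{p(x) q_{\phi(x)}(y\mid x),\, p(y) q_{\phi(y)}(x \mid y)\} = p(y)\, q_{\phi(y)}(x \mid y)\, \alpha_\phi(y,x)$, i.e.\ detailed balance. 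The atomic part of $P(x, \cdot)$ trivially preserves $p$, so $p$ is invariant.

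For ergodicity, I would first use positivity: since $p$ and $q_\varphi$ are strictly positive, $\alpha_\phi > 0$, so for any $x$ and any Borel set $A$ with positive Lebesgue measure, $P(x,A) > 0$. This yields $\psi$-irreducibility with respect to Lebesgue measure. Aperiodicity follows because, as $q_{\phi(x)}(\cdot \mid x) \neq p$ in any nontrivial case, $r(x) > 0$ on a set of positive Lebesgue measure, giving a rejection atom that rules out any cyclic decomposition (the degenerate case $r \equiv 0$ is handled separately and directly gives independent samples from $p$). Next, I would invoke joint continuity: since $\phi$, $p$, and $(\varphi,x,y)\mapsto q_\varphi(x,y)$ are all continuous, $(x,y) \mapsto q_{\phi(x)}(y \mid x)\, \alpha_\phi(x,y)$ is continuous and strictly positive on $\mathbb{R}^d \times \mathbb{R}^d$. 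For any compact $C, D \subset \mathbb{R}^d$, this forces the transition density to be bounded below on $C \times D$ by some $\varepsilon > 0$, giving the minorization $P(x, \cdot) \ge \varepsilon\, \lambda(\cdot \cap D)$ for $x \in C$, so every compact set is small.

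Finally, I would combine these ingredients with standard results for Metropolis--Hastings chains with everywhere-positive proposal densities (Tierney, 1994; Roberts and Rosenthal, 2004) to conclude Harris ergodicity, yielding $\|P^n(x_0, \cdot) - p\|_{\mathrm{TV}} \to 0$ for every $x_0 \in \mathbb{R}^d$. The main obstacle is the last step: the local minorization plus $\psi$-irreducibility gives TV convergence from $p$-a.e.\ starting point, but upgrading this to \emph{every} starting point requires Harris recurrence. The standard workaround, which I would follow, is to exploit that from any $x_0$ the chain can reach any compact set $C$ in one step with positive probability (by positivity of the transition density), and once the chain enters $C$ the small-set minorization ensures it regenerates in finite expected time; this closes the gap and gives Harris ergodicity for arbitrary initial states.
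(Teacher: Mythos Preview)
Your proposal is correct and follows essentially the same route as the paper: both establish $p$-invariance from the Metropolis--Hastings structure, obtain irreducibility and aperiodicity from positivity and continuity of $(x,y)\mapsto q_{\phi(x)}(y\mid x)$, and then invoke Tierney (1994) for Harris recurrence and hence total-variation ergodicity from every starting point. The paper's proof is terser---it cites Tierney's Corollary~2 directly for Harris recurrence rather than building the compact-sets-are-small minorization you sketch---so your explicit small-set argument and the discussion of upgrading from $p$-a.e.\ to every initial state are extra work that Tierney's MH-specific result absorbs, but the strategy is the same.
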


The proof is contained in \Cref{app: ergodic proof}.
Of course, a suitable map $\phi$ is unlikely to be known \emph{a priori} since $p(\cdot)$ is intractable, so instead a suitable $\phi$ will need to be \emph{learned}.
Armed with a guarantee of correctness for $\phi$-MH, we now seek to cast the learning of a suitable map $\phi$ as a problem that can be addressed using modern techniques from \ac{rl}.

\subsection{$\phi$-Metropolis--Hastings as Reinforcement Learning}

Our main methodological contribution is to set out a correct framework for casting adaptive \ac{mcmc} as \ac{rl}, and represents a fundamental departure from the earlier attempts described in \Cref{subsec: rl meets mcmc}.
To cast the learning of the map $\phi$ appearing in $\phi$-MH as an \ac{mdp}, the state set $\mathcal{S}$, action set $\mathcal{A}$, and the environment each need to be specified:

\paragraph{State set $\mathcal{S}$}  
The \emph{state} in our \ac{mdp} set-up is
\begin{align*}
    s_n = \left[ \begin{array}{c} x_n \\ x_{n+1}^\star \end{array} \right] \in  \mathcal{X} \times \mathcal{X} =: \mathcal{S} ,
\end{align*}
consisting of the current state $x_n$ of the Markov chain and the proposed state $x_{n+1}^\star$, which will either be accepted or rejected.

\paragraph{Action set $\mathcal{A}$} 
The \emph{action} set in our set-up is $\mathcal{A} = \Phi \times \Phi$.
A map $\phi : \mathcal{X} \rightarrow \Phi$ induces a (deterministic) policy of the form
\begin{align}
    \pi(s_n) = a_n = \left[ \begin{array}{c} \phi(x_n) \\ \phi(x_{n+1}^\star) \end{array} \right] \in \mathcal{A} . \label{eq: policy}
\end{align}
The motivation for this set-up is that the environment will need to compute not just the probability $q_{\phi(x_n)}(x_{n+1}^\star | x_n)$ of sampling the candidate $x_{n+1}^\star$, but also the reverse probability $q_{\phi(x_{n+1}^\star)}(x_n | x_{n+1}^\star)$, to calculate the overall acceptance probability in \eqref{eq: new accept reject}.

\paragraph{Environment}
The \emph{environment}, given the current state $s_n$ and action $a_n$, executes three tasks:
First, an accept/reject decision is made so that $x_{n+1} = x_{n+1}^\star$ with probability \eqref{eq: new accept reject}, else $x_{n+1} = x_n$.
Second, the environment simulates $\smash{x_{n+2}^\star \sim q_{\phi(x_{n+1})}(\cdot | x_{n+1})}$ and returns the updated state $\smash{s_{n+1} = [x_{n+1},x_{n+2}^\star]}$.
This is possible since $\phi(x_{n+1})$ is equal to either $\phi(x_n)$ or $\phi(x_{n+1}^\star)$, each of which were provided (via $a_n$) to the environment.
Third, the environment computes a reward $r_n$.
For the case $\mathcal{X} = \mathbb{R}^d$, we define our reward as
\begin{align*}
    r_n =2 \log \|x_n - x_{n+1}^\star \| + \log \alpha_\phi(x_n,x_{n+1}^\star) ,
\end{align*}
which is the logarithm of the \ac{esjd} from $x_n$ to $x_{n+1}$, where the expectation is computed with respect to the randomness in the accept/reject step.

\begin{remark}[Choice of action]
    Could we instead define the action to be $\phi$, so that at each iteration the policy picks a transition kernel?
    In principle yes, but then the action space would be high- or infinite-dimensional, severely increasing the difficulty of the \ac{rl} task.
    Our set-up allows us to flexibly parametrise $\phi$ using a neural network, while ensuring the number of parameters in this network has no bearing on the dimension of the action set. 
\end{remark}

\begin{remark}[Choice of reward]
    The \ac{esjd} is a popular criterion for use in adaptive \ac{mcmc}, due to its close relationship with mixing times \citep{sherlock2009optimal} and the ease with which it can be computed.
    Our initial investigations found the logarithm of \ac{esjd} to be more useful for \ac{rl}, as it provides the reward with a greater dynamic range to distinguish bad policies from very bad policies, leading to improved performance of methods based on policy gradient (see \Cref{subsec: policy gradient}).
\end{remark}

\subsection{Learning $\phi$ via Policy Gradient}
\label{subsec: policy gradient}

The rigorous formulation of $\phi$-MH as an \ac{mdp} enables modern techniques from \ac{rl} to be immediately brought to bear on adaptive \ac{mcmc}.
The objective that we aim to maximise is the expected reward at stationarity; $J(\phi) := \mathbb{E}_{\phi} [ r ]$, where the state $s$ is sampled from the stationary distribution of the \ac{mdp} and the action $a = \pi(s)$ is determined by the $\phi$-MH policy $\pi$ in \eqref{eq: policy}.
In practice one restricts attention to a parametric family $\phi_\theta$ for some $\theta \in \mathbb{R}^p$, $p \in \mathbb{N}$.
To optimise $J(\phi_\theta)$ we employ a (deterministic) \emph{policy gradient} method \citep{silver2014deterministic}, meaning in our case that we alternate between updating the Markov chain from $x_n$ to $x_{n+1}$ using $\phi_\theta$-MH with $\theta = \theta_n$ fixed, and updating the parameter $\theta$ using approximate gradient ascent 
\begin{align}
    \theta_{n+1} \gets \theta_n + \alpha_n \left. \nabla_\theta J_n(\phi_\theta) \right|_{\theta = \theta_n}  \label{eq: grad ascent}
\end{align}
where $\alpha_n \geq 0$ is called a \emph{learning rate}. 
Here $\nabla_\theta J_n$ indicates an approximation to the \emph{policy gradient} $\nabla_\theta J$, which may be constructed using any of the random variables generated up to that point.
A considerable amount of research effort in \ac{rl} has been devoted to approximating the policy gradient, and for the experiments reported in \Cref{sec: empirical} we used the \ac{ddpg} method of \citet{lillicrap2015continuous}.
\ac{ddpg} is suitable for deterministic policies and a continuous action set, and operates by training a \emph{critic} based on experience stored in a \emph{replay buffer} to enable a stable approximation to the policy gradient.
Since the details of \ac{ddpg} are not a novel contribution of our work, we reserve them for \Cref{app: implementation}.

The proposed \ac{rlmh} method is stated in \Cref{alg: rlmh}, and is compatible with \emph{any} approach to approximation of the policy gradient, not just \ac{ddpg}.
Indeed, the theoretical results that we present in \Cref{sec: theory} leverage the summability of the learning rate sequence $(\alpha_n)_{n \geq 0}$ and norm-based \emph{gradient clipping} in \Cref{alg: rlmh} to prove that sufficient conditions for ergodicity are satisfied.

\begin{algorithm}
\caption{Reinforcement Learning Metropolis--Hastings (RLMH)}
\label{alg: rlmh}
\begin{algorithmic}
\Require $x_0 \in \mathcal{X}$, $\theta_0 \in \mathbb{R}^p$, gradient threshold $\tau > 0$, learning rate $(\alpha_n)_{n \geq 0} \subset [0,\infty)$
\Ensure {\footnotesize $\sum_{n \geq 0}$} $\alpha_n < \infty$
\For{$n = 0,1,2,\dots $}
    \State \smash{$x_{n+1}^\star \sim q_{\phi_{\theta_n}(x_n)}(\cdot | x_n)$} \Comment{propose next state}
    \State \smash{$x_{n+1} \gets x_{n+1}^\star$ with probability $\alpha_{\phi_{\theta_n}}(x_n,x_{n+1}^\star)$, else $x_{n+1} \gets x_n$} \Comment{accept/reject}
    \State \smash{$g_n \gets \nabla_\theta J_n(\phi_\theta) |_{\theta = \theta_n}$} \Comment{approximate policy gradient}
    \State \textbf{if} \; $\|g_n\| > \tau$ \; \textbf{then} \; $g_n \gets \tau g_n / \|g_n\|$ 
        \Comment{gradient clipping}
    \State $\theta_{n+1} \gets \theta_n + \alpha_n g_n$  \Comment{policy update}
\EndFor
\end{algorithmic}
\end{algorithm}

\begin{remark}[On-policy requirement]
\label{rem: on policy}
    Our set-up is \emph{on-policy}, meaning that only actions specified by the policy are used to evolve the Markov chain; this is necessary for maintaining detailed balance in $\phi$-MH, which in turn ensures the Markov transition kernels are $p$-invariant.
    On the other hand, off-policy exploration can be concurrently conducted to aid in approximating the policy gradient \citep[for example, as training data for the critic in \ac{ddpg};][]{ladosz2022exploration}.
\end{remark}

\section{Theoretical Assessment}
\label{sec: theory}

Such is the generality of $\phi$-MH that it is possible to develop diverse gradient-free and gradient-based adaptive \ac{mcmc} algorithms using \ac{rlmh}.
Indeed, the building block proposals for $\phi$-MH could range from simple proposals (e.g. random walks) to complicated proposals (e.g. involving higher-order gradients of the target).
This section establishes sufficient conditions for the ergodicity of \ac{rlmh}, and to this end it is necessary to be specific about what building block proposals will be employed.
The remainder of this paper develops a \emph{gradient-free} sampling scheme in detail; our motivation is a tractable end-to-end theoretical analysis, guaranteeing correctness of the proposed method.

An accessible introduction to the theory of adaptive \ac{mcmc} is provided in \citet{andrieu2006ergodicity}.
At a high-level, the parameter $\theta$ of a Markov transition kernel is being allowed to depend on the sample path, i.e. $\theta_n \equiv \theta_n(x_0,\dots,x_n)$; intuitively, the hope is that $\theta_n$ will converge to a fixed limiting value $\theta_\star$, and that ergodicity properties of the adaptive chain will be inherited from those of the chain associated with $\theta_\star$.
However, carried out na\"{i}vely, the sequence $\theta_n$ can fail to converge, or converge to a value $\theta_\star$ for which the chain is no longer ergodic; see Section 2 in \citet{andrieu2008tutorial}.
In brief, the first issue can be solved by \emph{diminishing adaptation}; ensuring that the differences between $\theta_n$ and $\theta_{n+1}$ are decreasing, or even zero after a finite number of iterations have been performed.
The second issue can be resolved by \emph{containment}; restricting $\theta$ to e.g. compact subsets for which all transition kernels are well-behaved.

Our strategy below is to force diminishing adaptation via control of the learning rate and gradient clipping, and then to demonstrate containment via careful parametrisation of the $\phi$-MH Markov transition kernel.
The particular form of containment that we consider is motivated by the analytical framework of \citet{roberts2007coupling}, and requires us to establish that $\phi$-MH is \ac{ssage}; this condition is precisely stated in \Cref{def: ssge}.
For notation, recall that the \emph{transition kernel} $P(x,\cdot)$ of a Markov chain specifies the distribution of $x_{n+1}$ if the current state is $x_n = x$. 
For a function $f : \mathcal{X} \rightarrow \mathbb{R}$ we write $(Pf)(x) = \int f(y) P(x,\mathrm{d}y)$.
Let $\mathrm{1}_S$ denote the indicator function associated to a set $S$.

\begin{definition}[SSAGE; \citealp{roberts2007coupling}]
\label{def: ssge}
    A family of $p$-invariant Markov transition kernels $\{P_\theta\}_{\theta \in \Theta}$ is \emph{\ac{ssage}} if there exists $S \subset \mathcal{X}$, $V : \mathcal{X} \rightarrow [1,\infty)$, $\delta > 0$, $\lambda < 1$, and $b < \infty$, such that $\sup_{x \in S} V(x) < \infty$ and the following hold:
\begin{enumerate}
\itemsep0em 
    \item \emph{(minorisation)} For each $\theta \in \Theta$, there exists a probability measure $\nu_\theta$ on $S$ with $P_\theta(x,\cdot) \geq \delta \nu_\theta(\cdot)$ for all $x \in S$.
    \item \emph{(simultaneous drift)} $(P_\theta V)(x) \leq \lambda V(x) + b \mathrm{1}_S(x)$ for all $x \in \mathcal{X}$ and $\theta \in \Theta$.
\end{enumerate}
\end{definition}

To the best of our knowledge, existing \ac{ssage} results for Metropolis--Hastings focused on the case of a random walk proposal \citep{roberts1996geometric,roberts2007coupling,jarner2000geometric,bai2009containment,saksman2010ergodicity}.
However, the the proposal of $\phi$-MH can in principle be (much) more sophisticated than a random walk.
Our first contribution is therefore to generalise existing sufficient conditions for \ac{ssage} beyond the case of a random walk proposal, and in this sense \Cref{lem: ssge for phi mh} below may be of independent interest.

Let $\smash{\mathcal{P}(\mathbb{R}^d)}$ denote the set of probability distributions $p(\cdot)$ on $\mathbb{R}^d$ that are positive, continuously differentiable, and \emph{sub-exponential}, i.e.
\begin{align}
\lim_{\|x\| \rightarrow \infty} n(x) \cdot (\nabla \log p)(x) = - \infty  \label{eq: def subexponential}
\end{align}
where $\smash{n(x) := x / \|x\|}$.
To simplify notation, we use $\smash{q_\theta(\cdot | x)}$ as shorthand for $\smash{q_{\phi_\theta(x)}(\cdot|x)}$ and $\smash{\alpha_\theta(x,y)}$ as shorthand for $\smash{\alpha_{\phi_\theta}(x,y)}$.
Let $\smash{A_\theta(x) = \{ y \in \mathbb{R}^d : \alpha_\theta(x,y) = 1 \}}$ denote the region where proposals are always accepted, and let $\smash{\partial A_\theta^\delta(x) := \{y + s n(y) : y \in \partial A_\theta(x), |s| \leq \delta \}}$ denote a tube of radial width $\delta > 0$ around the boundary $\smash{\partial A_\theta(x)}$ of $A_\theta(x)$.

\begin{theorem}[SSAGE for general Metropolis--Hastings]
\label{lem: ssge for phi mh}
    Let $p \in \mathcal{P}(\mathbb{R}^d)$.
    Consider a family of $p$-invariant Metropolis--Hastings transition kernels $\{P_\theta\}_{\theta \in \Theta}$, where $P_\theta$ corresponds to a proposal $\{q_\theta(\cdot|x) : x \in \mathbb{R}^d\}$, and denote $Q_\theta(S|x) := \int_S q_\theta(y|x) \mathrm{d}y$ for $S \subseteq \mathbb{R}^d$.
    Let $(\theta,x,y) \mapsto q_\theta(y|x)$ be positive and continuous, and assume further that:

    \vspace{-10pt}
    \begin{enumerate}
    \itemsep0em 
        \item \emph{(quasi-symmetry)} $\rho := \sup_{x,y \in \mathbb{R}^d,  \theta \in \Theta} q_\theta(y|x) / q_\theta(x|y) < \infty$. 
        \item \emph{(regular acceptance boundary)} For all $\epsilon > 0$ there exists $\delta >0$ and $R > 0$ such that, for all $\|x\| \geq R$ and $\theta \in \Theta$, we have $Q_\theta( \partial A_\theta^\delta(x) | x ) < \epsilon$. 
        \item \emph{(minimum performance level)} $\liminf_{\|x\| \rightarrow \infty} \inf_{\theta \in \Theta} Q_\theta( A_\theta(x) | x ) > 0$. 
    \end{enumerate}

    \vspace{-10pt}
    \noindent Then $\{P_\theta\}_{\theta \in \Theta}$ is SSAGE.
\end{theorem}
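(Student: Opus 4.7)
The strategy is to verify the two SSAGE conditions directly, taking $S = \overline{B}_R$ to be a closed ball of sufficiently large radius $R$ and using the Lyapunov function $V(x) = 1 + c_V p(x)^{-s}$ for a small $s \in (0,1)$ and a constant $c_V > 0$, following the sub-exponential drift framework of \citet{jarner2000geometric} but adapted to the non-random-walk proposal family permitted here. Note that integrability of $p$ forces $p(x) \to 0$ as $\|x\| \to \infty$, so $V \geq 1$ and $V(x) \to \infty$ at infinity, as required.

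For the minorisation, I would combine positivity and continuity of $(\theta,x,y) \mapsto q_\theta(y|x)$ on compact sets with the quasi-symmetry bound $\alpha_\theta(x,y) \geq \min\{1, p(y)/(\rho p(x))\}$; since $p$ is continuous and positive, the latter is bounded below by a positive constant as $x,y$ range over any subcompact $S' \subset S$. Choosing $\nu_\theta(\mathrm{d}y) \propto \mathbf{1}_{S'}(y)\, \inf_{x' \in S} q_\theta(y|x')\,\mathrm{d}y$ (normalised) and exploiting the joint continuity of $q_\theta$ yields $P_\theta(x,\cdot) \geq \delta \nu_\theta(\cdot)$ for all $x \in S$ with a $\theta$-uniform $\delta > 0$.

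The bulk of the work is the simultaneous drift. Starting from the identity
\[
\frac{(P_\theta V)(x) - V(x)}{V(x)} = \int \alpha_\theta(x,y)\left[\frac{V(y)}{V(x)} - 1\right] q_\theta(y|x)\, \mathrm{d}y,
\]
I would decompose the integration domain into three disjoint pieces: the deep accept region $A_\theta(x) \setminus \partial A_\theta^\delta(x)$, the deep reject region $A_\theta(x)^c \setminus \partial A_\theta^\delta(x)$, and the boundary tube $\partial A_\theta^\delta(x)$. On the boundary tube, condition 2 drives the $q_\theta$-mass below any prescribed $\epsilon$ uniformly in $\theta$ for $\|x\| \geq R$; on the deep reject region, $\alpha_\theta \leq 1$ together with $V(y)/V(x) - 1 \geq -1$ gives a uniform lower bound on the contribution; on the deep accept region, quasi-symmetry combined with $\alpha_\theta = 1$ forces $p(y)/p(x) \geq 1/\rho$, so that $V(y)/V(x) \leq \rho^s$, and the sub-exponentiality \eqref{eq: def subexponential} combined with a Taylor expansion of $\log p$ along the outward radial direction $n(x)$ pushes $V(y)/V(x) - 1$ below some uniform $-\eta < 0$ on a sub-region carrying non-vanishing $q_\theta$-mass (the mass lower bound coming from condition 3). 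Choosing $R$ large then gives $(P_\theta V)(x) \leq \lambda V(x)$ for $\|x\| > R$; on $S$ a crude bound absorbed into $b\,\mathbf{1}_S(x)$ completes the simultaneous drift.

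The hard part will be extracting uniform quantitative control on the accept-region contribution. In the classical random-walk proof of \citet{jarner2000geometric} the proposal's translation invariance directly links the step $y-x$ to both the acceptance geometry and to the directional derivative $n(x) \cdot \nabla \log p(x)$; here the only link between the proposal family and the target is through the acceptance set $A_\theta(x)$ itself. Producing a uniform decay of $V(y)/V(x)$ on a non-vanishing sub-region of $A_\theta(x)$ therefore requires simultaneously invoking sub-exponentiality, the thinness of the boundary tube, and the minimum performance level in a way that does not rely on the geometry of any particular $q_\theta$ --- which is exactly what the three hypotheses of the theorem are designed to enable.
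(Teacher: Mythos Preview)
Your overall architecture---Lyapunov function $V \propto p^{-s}$, minorisation via positivity and continuity on compacts, and a three-way splitting of the drift integral into deep accept, deep reject, and boundary tube---is exactly the paper's approach (which takes $s=1/2$ and works with the bound \eqref{eq: explicit ratio form} rather than the difference identity you wrote, but this is cosmetic). However, your handling of the deep reject region contains a genuine gap.

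On $R_\theta(x) \setminus \partial A_\theta^\delta(x)$ one has $p(y) \ll p(x)$, so $V(y)/V(x) \sim (p(x)/p(y))^s$ is \emph{large and positive}; the integrand $\alpha_\theta(x,y)[V(y)/V(x)-1]$ is therefore positive and you need an \emph{upper} bound on this piece, not a lower bound. The inequalities ``$\alpha_\theta \leq 1$ together with $V(y)/V(x)-1 \geq -1$'' only give that the integrand is $\geq -1$, which is useless for establishing $(P_\theta V)/V \leq \lambda < 1$. The correct control is that on $R_\theta(x)$ the acceptance probability satisfies $\alpha_\theta(x,y) \leq \rho\, p(y)/p(x)$ by quasi-symmetry, so the integrand is at most $\rho\, (p(y)/p(x))^{1-s}$; since $s<1$ and, on the deep reject region, $p(y)/p(x)$ can be made arbitrarily small for $\|x\|$ large (\Cref{lem: super exp tails}), this contribution in fact vanishes as $\|x\|\to\infty$. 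This is precisely the paper's estimate \eqref{eq: second integral vanishes}.

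Your deep-accept argument is also looser than needed. The bound $V(y)/V(x)\leq \rho^s$ from $p(y)/p(x)\geq 1/\rho$ gives a number exceeding $1$, so by itself yields no contraction; invoking a ``sub-region carrying non-vanishing mass'' on which $V(y)/V(x)-1 \leq -\eta$ is unnecessary extra work. The paper instead shows that on the \emph{entire} deep-accept region $p(x)/p(y)\leq \epsilon$ for arbitrary $\epsilon$ (again \Cref{lem: super exp tails}), so $V(y)/V(x)\leq \epsilon^{s}\to 0$ and the whole accept-region integral vanishes (\eqref{eq: first integral vanishes}). What survives in the limit is exactly $Q_\theta(R_\theta(x)\mid x)=1-Q_\theta(A_\theta(x)\mid x)$, which is strictly less than $1$ by the minimum performance level---cleaner than carving out a sub-region.
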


The proof is contained in \Cref{app: general proof}, and follows the general strategy used to establish \ac{ssage} for random walk proposals in \citet{jarner2000geometric}, but with additional technical work to relax the random walk requirement (which corresponds to the case $\rho = 1$).

Quasi-symmetry imposes a non-trivial constraint on the form of the $\phi$-MH chains that can be analysed.
For both our theoretical and empirical assessment we focus on a setting that can be considered a multi-dimensional extension of \Cref{ex: independence sampler}, where the mean $\varphi$ of the proposal is specified as the output of a map $\phi_\theta(\cdot)$ with parameters $\theta \in \mathbb{R}^p$.
To be precise, let the set of $d \times d$ symmetric positive-definite matrices be denoted $\smash{\mathrm{S}_d^+}$ and, for $\smash{\Sigma \in \mathrm{S}_d^+}$, let $\smash{\|x\|_{1,\Sigma} := \|\Sigma^{-1/2}x\|_1}$ denote an associated norm on $\mathbb{R}^d$.
Then, for the remainder, we consider a Laplace proposal 
\begin{align}
    q_\theta(y|x) \propto  \exp\left( -  \|y - \phi_\theta(x)\|_{1,\Sigma} \right) , \label{eq: Laplace proposal}
\end{align}
for which sufficient conditions for ergodicity, including quasi-symmetry, can be shown to hold under appropriate assumptions on $\{\phi_\theta\}_{\theta \in \mathbb{R}^p}$.
This is the content of \Cref{thm: explicit}, which we present next.

To set notation, recall that a function $f : \mathbb{R}^p \rightarrow \mathbb{R}$ is said to be \emph{locally bounded} if, for all $\theta \in \mathbb{R}^p$, there is an open neighbourhood of $\theta$ on which $f$ is bounded.
Such a function is said to be \emph{locally Lipschitz} at $\theta \in \mathbb{R}^p$ if 
$$
\mathrm{LocLip}_\theta(f) := \limsup_{\vartheta \rightarrow \theta} \frac{|f(\vartheta) - f(\theta)|}{\|\vartheta - \theta\|} < \infty ,
$$
and, when this is the case, $\mathrm{LocLip}_\theta(f)$ is called the \emph{local Lipschitz constant} of $f$ at $\theta \in \mathbb{R}^p$.
Such a function is called \emph{Lipschitz} if $\smash{\mathrm{Lip}(f) := \sup_{\theta} \mathrm{LocLip}_\theta(f) < \infty}$ and, when this is the case, $\mathrm{Lip}(f)$ is called the \emph{Lipschitz constant}.
Let $\smash{\mathcal{P}_0(\mathbb{R}^d) \subset \mathcal{P}(\mathbb{R}^d)}$ denote the subset of distributions for which the \emph{interior cone condition} 
$$
\limsup_{\|x\| \rightarrow \infty} \; n(x) \cdot \frac{ (\nabla p)(x) }{ \|(\nabla p)(x)\| } < 0 
$$ 
is also satisfied.

\begin{theorem}[Ergodicity of \ac{rlmh}]
    \label{thm: explicit}
    Let $p \in \mathcal{P}_0(\mathbb{R}^d)$, $\Sigma \in \mathrm{S}_d^+$, $\Theta = \mathbb{R}^p$.
    Consider $\phi$-MH with proposal \eqref{eq: Laplace proposal}.
    Assume that each $x \mapsto \phi_\theta(x)$ is Lipschitz and $\sup_{x \in \mathbb{R}^d} \|x - \phi_\theta(x) \| < \infty$.
    Further assume the parametrisation of $\phi_\theta$ in terms of $\theta$ is \emph{regular}, meaning that the following maps are well-defined and locally bounded:
    (i) $\theta \mapsto \sup_{x \in \mathbb{R}^d} \|x - \phi_\theta(x) \|$; 
    (ii) $\theta \mapsto \mathrm{Lip}(x \mapsto \phi_\theta(x))$;
    (iii) $\theta \mapsto \sup_{x \in \mathbb{R}^d} \mathrm{LocLip}_\theta(\vartheta \mapsto \phi_\vartheta(x))$.
    Then \ac{rlmh} in \Cref{alg: rlmh} is $p$-invariant and ergodic, irrespective of the approach used to approximate the policy gradient.
\end{theorem}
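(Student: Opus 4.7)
The plan is to verify the diminishing adaptation and containment conditions of \citet{roberts2007coupling}, which together guarantee convergence in total variation of $\mathrm{Law}(x_n)$ to $p$ for any initial state. The $p$-invariance assertion is immediate, since every kernel used in \Cref{alg: rlmh} is a $\phi_{\theta_n}$-MH kernel, and the regularity hypotheses on $\phi_\theta$ combined with the form of the Laplace proposal \eqref{eq: Laplace proposal} satisfy the positivity and continuity premises of \Cref{lem: ergodic}. The remaining work is therefore to establish diminishing adaptation and containment, both relative to the realised parameter trajectory $(\theta_n)_{n \geq 0}$.

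The cornerstone observation is that this trajectory is confined to a compact set. Gradient clipping guarantees $\|g_n\| \leq \tau$, and the summability of $(\alpha_n)$ imposed by \Cref{alg: rlmh} gives $\|\theta_n - \theta_0\| \leq \tau \sum_{k \geq 0} \alpha_k =: R < \infty$ for all $n$. Hence $\theta_n \in \Theta' := \{\theta \in \mathbb{R}^p : \|\theta - \theta_0\| \leq R\}$, which is compact in $\mathbb{R}^p$. On $\Theta'$ the local boundedness in regularity conditions (i)--(iii) upgrades to uniform bounds; in particular, condition (iii) yields a constant $L$ such that $\|\phi_\vartheta(x) - \phi_\theta(x)\| \leq L \|\vartheta - \theta\|$ for every $x \in \mathbb{R}^d$ and $\theta, \vartheta \in \Theta'$. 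Diminishing adaptation then follows: since the Laplace density \eqref{eq: Laplace proposal} is Lipschitz in its location with respect to the $L^1$ norm and the accept/reject step preserves this modulus of continuity, $\sup_x \|P_{\theta_{n+1}}(x,\cdot) - P_{\theta_n}(x,\cdot)\|_{\mathrm{TV}}$ is controlled by a continuous function of $\|\theta_{n+1} - \theta_n\| \leq \tau \alpha_n$, which vanishes as $n \to \infty$ by summability.

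Containment reduces to verifying SSAGE for $\{P_\theta\}_{\theta \in \Theta'}$ via \Cref{lem: ssge for phi mh}, whose three hypotheses I would check in turn. Quasi-symmetry follows from two applications of the triangle inequality, which give
\[
\|x - \phi_\theta(y)\|_{1,\Sigma} - \|y - \phi_\theta(x)\|_{1,\Sigma} \leq \|y - \phi_\theta(y)\|_{1,\Sigma} + \|x - \phi_\theta(x)\|_{1,\Sigma},
\]
and the right-hand side is bounded uniformly in $x,y \in \mathbb{R}^d$ and $\theta \in \Theta'$ by the assumption $\sup_x \|x - \phi_\theta(x)\| < \infty$ together with regularity (i). The minimum performance level uses $p \in \mathcal{P}_0(\mathbb{R}^d)$: the interior cone condition produces, for $\|x\|$ sufficiently large, an inward half-space $H_x$ on which $p(y)/p(x) \geq 1$, and since $\phi_\theta(x)$ differs from $x$ by a uniformly bounded amount the Laplace proposal charges $H_x$ by at least a positive constant uniformly over $\theta \in \Theta'$, whereupon acceptance is automatic by quasi-symmetry. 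The main obstacle will be the regular acceptance boundary condition: here I would exploit the sub-exponential decay of $p$ to argue that, outside a large ball, the implicitly defined boundary $\partial A_\theta(x)$ has a large radial derivative in the log density ratio, so that the Laplace proposal charges the tube $\partial A_\theta^\delta(x)$ only by $O(\delta)$ uniformly in $\theta \in \Theta'$ and $\|x\| \geq R$, with the uniform Lipschitz bound on $\phi_\theta$ from condition (ii) preventing the proposal mass from concentrating along the boundary. Combining the three verifications, \Cref{lem: ssge for phi mh} delivers SSAGE on $\Theta'$, containment follows, and together with diminishing adaptation this yields the claimed ergodicity of \ac{rlmh}.
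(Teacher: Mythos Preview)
Your proposal follows the same route as the paper: confine $(\theta_n)$ to a compact ball via gradient clipping and summable learning rates, deduce diminishing adaptation from the $\theta$-Lipschitz property of $\phi_\vartheta(x)$ supplied by regularity (iii), and obtain containment by verifying the three hypotheses of \Cref{lem: ssge for phi mh} over that compact set. Two points of your sketch need tightening, however. For the minimum performance level, the interior cone condition of $\mathcal{P}_0$ yields a \emph{cone}, not a half-space, and on the region $\{p(y) \geq p(x)\}$ acceptance is \emph{not} automatic when $\rho > 1$; the paper instead places the cone with apex on the level set $C_{\rho p(x)}$, so that $p(y) \geq \rho p(x)$ forces $y \in A_\theta(x)$ by \eqref{eq: quasi sym A}, and then shows the cone lies within a bounded $\|\cdot\|_{1,\Sigma}$-ball of $\phi_\theta(x)$. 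For the regular acceptance boundary, the role of condition (ii) is not to prevent proposal mass from concentrating near the boundary but to guarantee that $\partial A_\theta(x)$ is radially single-valued (i.e.\ $\partial A_\theta(x) \cong \mathbb{S}^{d-1}$), so that the tube $\partial A_\theta^\delta(x)$ has Lebesgue measure $O(\delta)$ inside any fixed ball; the paper then splits $Q_\theta(\partial A_\theta^\delta(x)\mid x)$ into the part outside a large ball about $x$ (small by Laplace tail decay, using condition (i)) and the part inside (small by this Lebesgue estimate times the bounded density of the proposal).
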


The proof is contained in \Cref{app: proof of explicit theorem}, where containment is established using \Cref{lem: ssge for phi mh} and diminishing adaptation follows from control of the learning rate and gradient clipping, as in \Cref{alg: rlmh}.
The conditions (i)-(iii) in \Cref{thm: explicit} are satisfied in the experiments reported in \Cref{sec: empirical} through careful choice of the family of maps $\{\phi_\theta\}_{\theta \in \mathbb{R}^p}$; full details are reserved for \Cref{app: parametrisation}.

\section{Empirical Assessment}
\label{sec: empirical}

The aim of this section is to illustrate the behaviour of the gradient-free \ac{rlmh} algorithm that was analysed in \Cref{sec: theory}, and to compare performance against an established and popular gradient-free adaptive Metropolis--Hastings algorithm using a community benchmark.

\paragraph{Implementation of \ac{rlmh}}
An established adaptive sampling scheme was used to \emph{warm start} \ac{rlmh}.
For the purposes of this paper, we first run $m = 10^4$ iterations $(x_i)_{i=-m+1}^0$ of a popular \ac{arwmh} algorithm based on a gradient-free random walk proposal, and let $\Sigma$ denote an estimate of the covariance of $p(\cdot)$ so-obtained; full details are contained in \Cref{app: adaptive mcmc}.
The matrix $\Sigma$ is then used to define the norm appearing in \eqref{eq: Laplace proposal}, and the final iteration $x_0$ is used to initialise \ac{rlmh}.
To control for the benefit of a warm start, we use \ac{arwmh} as a comparator in the subsequent empirical assessment.
The proposal mean $\phi_\theta$ was taken as a neural network designed to satisfy the assumptions of \Cref{thm: explicit}; see \Cref{app: parametrisation}.
Sensitivity to the neural architecture was explored in \Cref{app: choice of network}.
The policy gradient was approximated using \ac{ddpg} with default settings as described in \Cref{subsec: training detail}.
Code to reproduce our experiments is available at \url{https://github.com/congyewang/Reinforcement-Learning-for-Adaptive-MCMC}.

\paragraph{Illustration:  Learning a Global Proposal}


    

\begin{figure}
    \centering
    \includegraphics[width=\linewidth,clip,trim = 2.8cm 0.2cm 2.7cm 0.6cm]{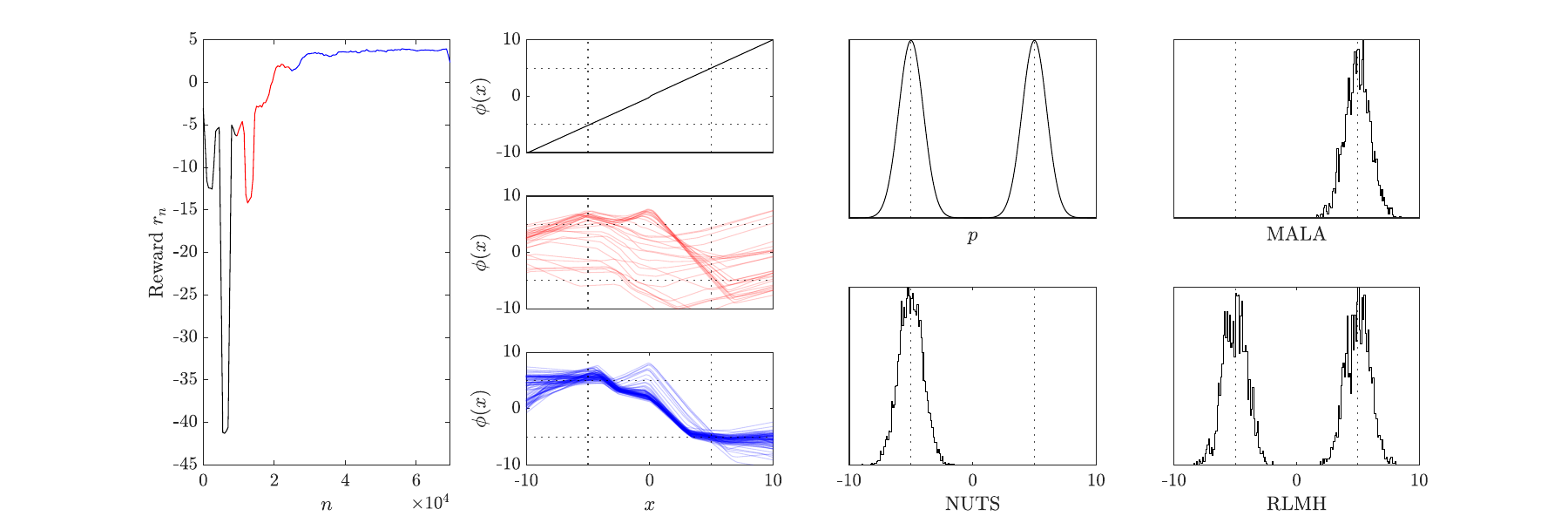}
    \tikz[overlay] \node at (0,5.4) {\scriptsize \Circled{0}};
    \tikz[overlay] \node at (-0.15,4.3) {\scriptsize \Circled{1}};
    \tikz[overlay] \node at (-0.3,2.3 ) {\scriptsize \Circled{2}};
    \tikz[overlay] \node at (-6.8,3.8) {\scriptsize \Circled{1}};
    \tikz[overlay] \node at (-5.5,5.5) {\scriptsize \Circled{2}};
    \tikz[overlay] \draw[|-|] (-7.3,4.2) -- (-6.8,4.2);
    \tikz[overlay] \draw[|-|] (-6.7,5.9) -- (-4.8,5.9);
    \caption{\acf{rlmh}, illustrated.  Here the task is to sample from the Gaussian mixture model $p(\cdot)$ whose equally-weighted components are $\mathcal{N}(\pm 5,1)$.
    Left: The reward sequence $(r_n)_{n \geq 0}$, where $r_n$ is the logarithm of the expected squared jump distance corresponding to iteration $n$ of \ac{rlmh}.
    Middle:  Proposal mean functions $x \mapsto \phi(x)$, at initialisation in \smash{\Circled{0}}, and corresponding to the rewards indicated in \smash{\Circled{1}} and \smash{\Circled{2}}.
    Right:  The density $p(\cdot)$, and histograms of the last $n = 5,000$ samples produced using \ac{mala}, \ac{nuts}, and \ac{rlmh}.
    [A smoothing window of length 5 was applied to the reward sequence to improve clarity of this plot.]
    \label{fig: illustration}
}
\end{figure}

To understand the behaviour of \ac{rlmh}, consider the (toy) task of sampling from a two-component Gaussian mixture model in dimension $d = 1$ as presented in \Cref{fig: illustration}.
The proposal mean was initialised such that $\phi_\theta(x) \approx x$, corresponding to a random walk; this is visually represented in panel \smash{\Circled{0}}.
The sequence of rewards $(r_n)_{n \geq 0}$ obtained during training is displayed in the left panel.
Considerable exploration is observed in the period from initialisation to $n = 10^4$, while between $10^4$ and $2.5 \times 10^4$ the policy gradient directs the algorithm to an effective policy; see panel \smash{\Circled{1}}.
From then onward, the policy appears to have essentially converged; see panel \smash{\Circled{2}}.
The final learned policy performs \emph{global mode-hopping}; if the chain is currently in the effective support of one mixture component, then it will propose to move to the other mixture component.
This behaviour is consistent with seeking a large \ac{esjd}.
The samples generated from \ac{rlmh} are displayed as a histogram in the right hand panel of \Cref{fig: illustration}, and are seen to form a good approximation of the Gaussian mixture target.
These can be visually compared against samples generated using \ac{mala} \citep{robert1996exponential} and the \ac{nuts} \citep{hoffman2014no}, each of which struggled to escape from the mixture component where the Markov chain was initialised.
Further illustrations of \ac{rlmh}, which can be visualised in dimensions $d \in \{1,2\}$, can be found in \Cref{app: additional illus}.

\paragraph{Benchmarking on PosteriorDB}

\begin{table}[t!]
  \centering
  \scalebox{0.66}{
    \begin{tabular}{ccccccccccc}
    \toprule
          &       & \multicolumn{2}{c}{RLMH} & \multicolumn{2}{c}{ARWMH} & \multicolumn{2}{c}{AMALA} \\
\cmidrule{3-8}    Task  & $d$     & ESJD  & MMD   & ESJD  & MMD   & ESJD  & MMD \\
    \midrule
    \rowcolor{gray!20} earnings-earn\_height & 3     & \textbf{4.5(0.6)E3} & \textbf{1.8(0.1)E-1} & 2.1(0.0)E3 & 1.5(0.0)E0 & 1.3(0.9)E3 & 1.8(0.3)E0 \\
    \rowcolor{gray!20} earnings-log10earn\_height & 3     & \textbf{1.4(0.0)E-1} & \textbf{1.6(0.0)E-1} & 4.4(0.0)E-2 & 1.4(0.0)E0 & 1.4(0.0)E-1 & 1.6(0.0)E-1 \\
    \rowcolor{gray!20} earnings-logearn\_height & 3     & \textbf{3.2(0.0)E-1} & \textbf{1.6(0.0)E-1} & 1.0(0.0)E-1 & 1.5(0.0)E0 & 3.3(0.0)E-1 & 1.7(0.0)E-1 \\
    \rowcolor{gray!20} gp\_pois\_regr-gp\_regr & 3     & \textbf{3.7(0.1)E-1} & \textbf{1.2(0.0)E-1} & 1.2(0.0)E-1 & 1.1(0.0)E0 & 3.6(0.0)E-1 & 1.2(0.0)E-1 \\
    kidiq-kidscore\_momhs & 3     & \textbf{1.3(0.2)E0} & \textbf{1.5(0.0)E-1} & 7.2(0.0)E-1 & 1.3(0.0)E0 & 2.3(0.0)E0 & 1.4(0.0)E-1 \\
    kidiq-kidscore\_momiq & 3     & \textbf{3.6(0.2)E0} & \textbf{1.7(0.0)E-1} & 1.3(0.0)E0 & 1.5(0.0)E0 & 4.2(0.0)E0 & 1.6(0.0)E-1 \\
    \rowcolor{gray!20} kilpisjarvi\_mod-kilpisjarvi & 3     & \textbf{1.3(0.1)E1} & \textbf{1.7(0.0)E-1} & 6.5(0.0)E0 & 1.5(0.0)E0 & 6.1(3.1)E0 & 1.2(0.2)E0 \\
    \rowcolor{gray!20} mesquite-logmesquite\_logvolume & 3     & \textbf{1.2(0.0)E-1} & \textbf{1.3(0.0)E-1} & 3.7(0.0)E-2 & 1.1(0.0)E0 & 1.1(0.0)E-1 & 1.3(0.0)E-1 \\
    \rowcolor{gray!20} arma-arma11 & 4     & \textbf{6.4(0.0)E-2} & \textbf{1.2(0.0)E-1} & 1.9(0.0)E-2 & 1.1(0.0)E0 & 3.7(1.0)E-2 & 9.2(3.3)E-1 \\
    \rowcolor{gray!20} earnings-logearn\_height\_male & 4     & \textbf{4.1(0.1)E-1} & \textbf{1.6(0.0)E-1} & 1.2(0.0)E-1 & 1.5(0.0)E0 & 4.2(0.1)E-1 & 1.6(0.0)E-1 \\
    \rowcolor{gray!20} earnings-logearn\_logheight\_male & 4     & \textbf{1.7(0.1)E0} & \textbf{1.6(0.0)E-1} & 5.3(0.0)E-1 & 1.5(0.0)E0 & 1.8(0.0)E0 & 1.6(0.0)E-1 \\
    \rowcolor{gray!20} garch-garch11 & 4     & \textbf{8.0(0.2)E-1} & \textbf{1.4(0.0)E-1} & 2.8(0.0)E-1 & 1.2(0.0)E0 & 7.1(0.1)E-1 & 1.4(0.0)E-1 \\
    \rowcolor{gray!20} hmm\_example-hmm\_example & 4     & \textbf{4.6(0.1)E-1} & \textbf{1.3(0.0)E-1} & 1.4(0.0)E-1 & 1.2(0.0)E0 & 4.6(0.1)E-1 & 1.3(0.0)E-1 \\
    \rowcolor{gray!20} kidiq-kidscore\_momhsiq & 4     & \textbf{2.7(0.2)E0} & \textbf{1.4(0.0)E-1} & 1.3(0.0)E0 & 1.3(0.0)E0 & 4.4(0.1)E0 & 1.4(0.0)E-1 \\
    \rowcolor{gray!20} earnings-logearn\_interaction & 5     & \textbf{8.8(0.3)E-1} & \textbf{1.4(0.0)E-1} & 2.9(0.0)E-1 & 1.2(0.0)E0 & 1.1(0.0)E0 & 1.4(0.0)E-1 \\
    \rowcolor{gray!20} earnings-logearn\_interaction\_z & 5     & \textbf{8.7(0.1)E-2} & \textbf{1.2(0.0)E-1} & 2.7(0.0)E-2 & 1.1(0.0)E0 & 9.1(0.1)E-2 & 1.2(0.0)E-1 \\
    kidiq-kidscore\_interaction & 5     & \textbf{5.5(0.5)E0} & \textbf{1.7(0.1)E-1} & 3.8(0.0)E0 & 1.3(0.0)E0 & 1.4(0.0)E1 & 1.4(0.0)E-1 \\
    kidiq\_with\_mom\_work-kidscore\_interaction\_c & 5     & \textbf{7.2(0.5)E-1} & \textbf{1.6(0.0)E-1} & 5.2(0.0)E-1 & 1.3(0.0)E0 & 1.8(0.0)E0 & 1.3(0.0)E-1 \\
    kidiq\_with\_mom\_work-kidscore\_interaction\_c2 & 5     & \textbf{7.3(0.6)E-1} & \textbf{1.7(0.0)E-1} & 5.3(0.0)E-1 & 1.3(0.0)E0 & 1.9(0.0)E0 & 1.4(0.0)E-1 \\
    kidiq\_with\_mom\_work-kidscore\_interaction\_z & 5     & \textbf{1.0(0.1)E0} & \textbf{1.5(0.1)E-1} & 1.0(0.0)E0 & 1.1(0.0)E0 & 3.5(0.0)E0 & 1.2(0.0)E-1 \\
    kidiq\_with\_mom\_work-kidscore\_mom\_work & 5     & 9.6(1.3)E-1 & \textbf{1.8(0.1)E-1} & \textbf{1.2(0.0)E0} & 1.1(0.0)E0 & 4.2(0.0)E0 & 1.2(0.0)E-1 \\
    \rowcolor{gray!20} low\_dim\_gauss\_mix-low\_dim\_gauss\_mix & 5     & \textbf{6.7(0.0)E-2} & \textbf{1.1(0.0)E-1} & 2.1(0.0)E-2 & 9.9(0.0)E-1 & 7.0(0.1)E-2 & 1.1(0.0)E-1 \\
    \rowcolor{gray!20} mesquite-logmesquite\_logva & 5     & \textbf{2.5(0.0)E-1} & \textbf{1.2(0.0)E-1} & 7.5(0.0)E-2 & 1.1(0.0)E0 & 2.6(0.0)E-1 & 1.2(0.0)E-1 \\
    bball\_drive\_event\_0-hmm\_drive\_0 & 6     & \textbf{4.6(0.5)E-1} & \textbf{1.6(0.3)E-1} & 1.8(0.0)E-1 & 1.1(0.0)E0 & 6.2(0.3)E-1 & 1.4(0.1)E-1 \\
    sblrc-blr & 6     & \textbf{4.2(0.1)E-2} & \textbf{1.7(0.0)E-1} & 1.4(0.0)E-2 & 1.5(0.0)E0 & 4.6(0.0)E-2 & 1.6(0.0)E-1 \\
    sblri-blr & 6     & \textbf{4.2(0.1)E-2} & \textbf{1.7(0.0)E-1} & 1.3(0.0)E-2 & 1.5(0.0)E0 & 4.5(0.1)E-2 & 1.6(0.0)E-1 \\
    \rowcolor{gray!20} arK-arK & 7     & \textbf{1.2(0.0)E-1} & \textbf{1.1(0.0)E-1} & 3.5(0.0)E-2 & 9.5(0.0)E-1 & 1.4(0.0)E-1 & 1.1(0.0)E-1 \\
    \rowcolor{gray!20} mesquite-logmesquite\_logvash & 7     & \textbf{3.6(0.1)E-1} & \textbf{1.1(0.0)E-1} & 1.1(0.0)E-1 & 9.9(0.0)E-1 & 4.1(0.0)E-1 & 1.1(0.0)E-1 \\
    \rowcolor{gray!20} mesquite-logmesquite & 8     & \textbf{3.3(0.0)E-1} & \textbf{1.1(0.0)E-1} & 1.1(0.0)E-1 & 9.5(0.0)E-1 & 4.1(0.1)E-1 & 1.1(0.0)E-1 \\
    \rowcolor{gray!20} mesquite-logmesquite\_logvas & 8     & \textbf{3.4(0.1)E-1} & \textbf{1.1(0.0)E-1} & 1.1(0.0)E-1 & 9.6(0.0)E-1 & 4.1(0.0)E-1 & 1.1(0.0)E-1 \\
    mesquite-mesquite & 8     & 2.0(0.4)E1 & \textbf{5.1(1.1)E-1} & \textbf{6.5(0.0)E1} & 9.6(0.0)E-1 & 2.5(0.0)E2 & 1.1(0.0)E-1 \\
    eight\_schools-eight\_schools\_centered & 10    & 2.3(0.5)E-1 & \textbf{7.7(1.2)E-1} & \textbf{1.4(0.0)E0} & 1.1(0.0)E0 & 4.1(0.3)E0 & 1.5(0.2)E-1 \\
    \rowcolor{gray!20} eight\_schools-eight\_schools\_noncentered & 10    & \textbf{8.0(0.5)E-1} & \textbf{1.2(0.0)E0} & 6.2(0.0)E-1 & 1.2(0.0)E0 & 2.5(0.1)E0 & 1.2(0.0)E0 \\
    nes1972-nes & 10    & \textbf{2.4(0.0)E-1} & \textbf{1.1(0.0)E-1} & 9.2(0.0)E-2 & 9.6(0.0)E-1 & 3.6(0.0)E-1 & 1.0(0.0)E-1 \\
    \rowcolor{gray!20} nes1976-nes & 10    & \textbf{2.5(0.0)E-1} & \textbf{1.1(0.0)E-1} & 9.2(0.0)E-2 & 9.6(0.0)E-1 & 3.7(0.0)E-1 & 1.1(0.0)E-1 \\
    \rowcolor{gray!20} nes1980-nes & 10    & \textbf{3.1(0.1)E-1} & \textbf{1.1(0.0)E-1} & 1.2(0.0)E-1 & 9.6(0.0)E-1 & 4.9(0.1)E-1 & 1.1(0.0)E-1 \\
    nes1984-nes & 10    & \textbf{2.4(0.0)E-1} & \textbf{1.2(0.0)E-1} & 9.5(0.1)E-2 & 9.5(0.0)E-1 & 3.7(0.0)E-1 & 1.1(0.0)E-1 \\
    nes1988-nes & 10    & \textbf{2.5(0.1)E-1} & \textbf{1.1(0.0)E-1} & 1.0(0.0)E-1 & 9.7(0.0)E-1 & 3.9(0.0)E-1 & 1.0(0.0)E-1 \\
    nes1992-nes & 10    & \textbf{2.3(0.0)E-1} & \textbf{1.1(0.0)E-1} & 8.4(0.0)E-2 & 9.6(0.0)E-1 & 3.3(0.1)E-1 & 1.0(0.0)E-1 \\
    \rowcolor{gray!20} nes1996-nes & 10    & \textbf{2.6(0.0)E-1} & \textbf{1.1(0.0)E-1} & 9.6(0.1)E-2 & 9.9(0.0)E-1 & 3.9(0.0)E-1 & 1.1(0.0)E-1 \\
    nes2000-nes & 10    & \textbf{4.2(0.1)E-1} & \textbf{1.2(0.0)E-1} & 1.6(0.0)E-1 & 9.9(0.0)E-1 & 6.5(0.1)E-1 & 1.1(0.0)E-1 \\
    gp\_pois\_regr-gp\_pois\_regr & 13    & 2.6(1.4)E-2 & 1.5(0.2)E0 & \textbf{1.9(0.0)E-1} & \textbf{1.2(0.0)E0} & 4.9(0.2)E-1 & 1.1(0.0)E-1 \\
    diamonds-diamonds & 26    & 4.1(2.1)E-4 & 2.0(0.1)E0 & \textbf{7.6(0.5)E-2} & \textbf{1.5(0.0)E0} & 3.4(0.4)E-1 & 3.1(2.0)E-1 \\
    mcycle\_gp-accel\_gp & 66    & 0 & 1.9(0.0)E0 & \textbf{3.2(0.2)E-1} & \textbf{1.3(0.0)E0} & 0 & 1.8(0.0)E0 \\
    \bottomrule
    \end{tabular}}%
  \caption{Benchmarking using \texttt{PosteriorDB}. 
  Here, we compared a gradient-free version of \ac{rlmh} to the gradient-free \ac{arwmh}, and also the gradient-based \ac{amala}. 
  Performance was measured using the \acf{esjd} and the \acf{mmd} relative to the gold-standard, and $d = \textrm{dim}(\mathcal{X})$. 
  Results are based on an average of 10 replicates, with standard errors (in parentheses) reported.
  The best performing gradient-free method is highlighted in \textbf{bold}. 
  Shaded rows indicate situations where \ac{rlmh} out-performed \ac{amala} for either \ac{esjd} or \ac{mmd}.
  }
  \label{tab: posteriordb}%
\end{table}%

\texttt{PosteriorDB} is a community benchmark for performance assessment in Bayesian computation, consisting of a collection of posteriors to be numerically approximated \citep{magnusson2022posterior}.
Here we use \texttt{PosteriorDB} to compare gradient-free \ac{arwmh} (using default settings detailed in \Cref{app: adaptive mcmc}) against \ac{rlmh} (default settings in \Cref{subsec: training detail}).
A plethora of other algorithms exist, but \ac{arwmh} represents arguably the most widely-used gradient-free algorithm for adaptive \ac{mcmc}.
As an additional point of reference, we also present results for an adaptive version of \acuse{amala} \ac{mala} (AMALA; \Cref{subsec: mala}), for which gradient information on $p(\cdot)$ is required.
For higher-dimensional problems gradient information is usually essential.
For assessment purposes, \ac{rlmh} was run for $n = 5 \times 10^4$ iterations, while \ac{arwmh} and \ac{amala} were each run for $n = 6 \times 10^4$ iterations; this equates computational cost when one accounts for the warm start of \ac{rlmh}.
At the end, all algorithms were then run for an additional $5 \times 10^3$  iterations with no adaptation permitted, and it was on these final samples that performance was assessed.
Two metrics are reported:  (i) \ac{esjd}, and (ii) the \ac{mmd} relative to a gold-standard provided as part of \texttt{PosteriorDB}.
Both performance metrics are precisely defined in \Cref{app: performance measures}.
Full results are provided in \Cref{subsec: full results}.
These results show the gradient-free version of \ac{rlmh} out-performed the natural gradient-free comparator, \ac{arwmh}, on 86\% of tasks in terms of \ac{esjd}, and 93\% of tasks in terms of \ac{mmd}.
Remarkably, the gradient-free version of \ac{rlmh} also out-performed \ac{amala} on the majority of low-dimensional tasks, while \ac{amala} demonstrated predictably superior performance on higher-dimensional tasks where gradient information is well-known to be essential.
\ac{rlmh} and \ac{amala} both failed on the most challenging $66$-dimensional task, with \ac{rlmh} converging to a policy for which all subsequent samples were rejected.

\section{Discussion}

This paper provided, for the first time, a correct framework that enables modern techniques from \ac{rl} to be brought to bear on adaptive \ac{mcmc}.
Though the framework is general, for the purposes of end-to-end theoretical analysis we focused on a gradient-free sampling algorithm whose state-dependent proposal mean function is actively learned.
Even in this context, an astonishing level of performance was observed on the \texttt{PosteriorDB} benchmark when we consider that we did not exploit gradient information on the target, and that an off-the-shelf implementation of \ac{ddpg} was used.
Of course, gradient-free sampling algorithms are limited to tasks that are low-dimensional, and a natural next step is to investigate the extent to which performance can be improved by exploiting gradient information in the proposal.

More broadly, our contribution comes at a time of increasing interest in exploiting \ac{rl} for adaptive Monte Carlo methods, with recent work addressing adaptive importance sampling \citep{el2021policy} and the adaptive design of control variates \citep{bras2023policy}.
It would be interesting to see whether the approach we have set out can be extended to the design of other related Monte Carlo algorithms, such as multiple-try Metropolis \citep{liu2000multiple}, and delayed acceptance \ac{mcmc} \citep{christen2005markov}.

\paragraph{Acknowledgements}
CW was supported by the China Scholarship Council under Grand Number 202208890004. HK and CJO were supported by EP/W019590/1.

\FloatBarrier


\newpage
\appendix

\section*{Appendices}

\Cref{app: proofs} contains the proofs for all theoretical results stated in the main text.
\Cref{app: implementation} contains full details of our implementation, so that the empirical results we report can be reproduced.
Full empirical results are contained in \Cref{app: full empirical}.

\section{Proofs}
\label{app: proofs}

\Cref{app: ergodic proof} contains the proof of \Cref{thm: explicit}.
\Cref{app: general proof} contains the prof of \Cref{lem: ssge for phi mh}.
\Cref{app: proof of explicit theorem} contains the proof of \Cref{thm: explicit}.
Auxiliary lemmas used for these proofs are contained in \Cref{app: auxiliary}.

\subsection{Proof of \Cref{lem: ergodic}}
\label{app: ergodic proof}

\begin{proof}[Proof of \Cref{lem: ergodic}]
    Since $\phi$-MH is a Metropolis--Hastings chain it is automatically $p$-invariant.
    The proposal distribution of $\phi$-MH has a density $q_{\phi(x)}(\cdot | x)$.
    Under our assumptions, $(x,y) \mapsto q_{\phi(x)}(y|x)$ is positive and continuous over $x,y \in \mathcal{X}$.
    It follows that $\phi$-MH is both (a) aperiodic, and (b) $p$-irreducible; in addition, from Corollary 2 of \citet{tierney1994markov}, $\phi$-MH is Harris recurrent.
    The conclusion then follows from Theorem 1 of \citet{tierney1994markov}.
\end{proof}

\subsection{Proof of \Cref{lem: ssge for phi mh}}
\label{app: general proof}

The minorisation and drift conditions in \Cref{def: ssge} must be established.
These are the content, respectively, of \Cref{lem: minorisation} and \Cref{lem: drift condition}.
In the sequel we let $\lambda_{\mathrm{Leb}}(C)$ denote the Lebesgue measure of a set $C \subset \mathbb{R}^d$, and let $R_\theta(x) = \mathbb{R}^d \setminus A_\theta(x)$ denote the region where proposals may be rejected.
The following argument builds on relatively standard arguments used to establish ergodicity of Metropolis--Hastings, such as Lemma 1.2 of \citet{mengersen1996rates}, but includes an additional dependence on the parameter $\theta \in \Theta$:

\begin{lemma}[Simultaneous minorisation condition for Metropolis--Hastings]
\label{lem: minorisation}
    Let $\mathcal{X} = \mathbb{R}^d$.
    Let $\Theta$ be a topological space and let $\Theta_0 \subset \Theta$ be compact.
    Consider a family of $p$-invariant Metropolis--Hastings transition kernels $P_\theta$, and corresponding proposals $\{q_\theta(\cdot|x) : x \in \mathbb{R}^d \}$, indexed by $\theta \in \Theta$.
    Let $x \mapsto p(x)$ be positive and continuous, let $(\theta,x,y) \mapsto q_\theta(y|x)$ be positive and continuous over $\theta \in \Theta$, $x,y \in \mathbb{R}^d$, and let $C \subset \mathbb{R}^d$ be compact with $\lambda_{\mathrm{Leb}}(C) > 0$.
    Then there exists $\delta > 0$ and a probability measure $\nu$ on $C$, such that $P_\theta(x,\cdot) \geq \delta \nu(\cdot)$ holds simultaneously for all $\theta \in \Theta_0$ and all $x \in C$.
\end{lemma}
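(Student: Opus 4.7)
The plan is to bound the absolutely continuous part of $P_\theta(x, \cdot)$ from below by a constant multiple of Lebesgue measure restricted to $C$, uniformly in $\theta \in \Theta_0$ and $x \in C$. Recall the Metropolis--Hastings kernel decomposes as
\[
P_\theta(x, A) = \int_A \alpha_\theta(x, y)\, q_\theta(y \mid x)\, \mathrm{d}y + r_\theta(x)\, \mathbf{1}_A(x),
\]
where $r_\theta(x)$ is the rejection probability. Since the rejection term is non-negative, it suffices to work with the integral.

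First, I would exploit compactness to get uniform positive bounds. Because $x \mapsto p(x)$ is positive and continuous and $C$ is compact, there exist $0 < p_{\min} \leq p_{\max} < \infty$ with $p_{\min} \leq p(x) \leq p_{\max}$ on $C$. Because $(\theta,x,y) \mapsto q_\theta(y \mid x)$ is positive and continuous on the compact set $\Theta_0 \times C \times C$, it attains a positive minimum $q_{\min} > 0$ there. Using $\alpha_\theta(x,y)\, q_\theta(y\mid x) = \min\{q_\theta(y\mid x),\, (p(y)/p(x))\, q_\theta(x\mid y)\}$, the right-hand side is bounded below by $\min\{q_{\min},\, (p_{\min}/p_{\max})\, q_{\min}\} = (p_{\min}/p_{\max})\, q_{\min} =: \delta_0 > 0$ for all $\theta \in \Theta_0$ and $x,y \in C$.

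Next, I would take $\nu$ to be the uniform probability measure on $C$, i.e.\ $\nu(A) := \lambda_{\mathrm{Leb}}(A \cap C)/\lambda_{\mathrm{Leb}}(C)$, which is well-defined since $\lambda_{\mathrm{Leb}}(C) > 0$. Then for any measurable $A \subseteq \mathbb{R}^d$ and any $x \in C$, $\theta \in \Theta_0$,
\[
P_\theta(x, A) \;\geq\; \int_{A \cap C} \alpha_\theta(x,y)\, q_\theta(y \mid x)\, \mathrm{d}y \;\geq\; \delta_0\, \lambda_{\mathrm{Leb}}(A \cap C) \;=\; \delta\, \nu(A),
\]
with $\delta := \delta_0\, \lambda_{\mathrm{Leb}}(C)$.

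The proof is essentially routine once the joint continuity hypothesis is in place; the only point requiring care is ensuring that the minimum of $q_\theta(y \mid x)$ is taken over the compact product $\Theta_0 \times C \times C$ (so the uniformity in $\theta$ comes entirely from compactness of $\Theta_0$ rather than from any additional structural hypothesis on the family of proposals). I would make this explicit to be sure the bounds on $p$ and $q_\theta$ do not silently depend on $\theta$ or on $x$.
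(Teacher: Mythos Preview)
Your proposal is correct and follows essentially the same approach as the paper: exploit compactness of $\Theta_0 \times C \times C$ to get uniform positive lower bounds on $q_\theta$ and $p$, then bound the absolutely continuous part of $P_\theta(x,\cdot)$ from below. The only cosmetic differences are that the paper takes $\nu$ proportional to $p$ restricted to $C$ (rather than uniform on $C$) and splits the integral over the accept/reject regions $A_\theta(x)$ and $R_\theta(x)$ explicitly, whereas you use the identity $\alpha_\theta(x,y)\,q_\theta(y\mid x)=\min\{q_\theta(y\mid x),\,(p(y)/p(x))\,q_\theta(x\mid y)\}$ directly; both routes yield the same uniform lower bound.
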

\begin{proof}
    From positivity, continuity and compactness $p_{\sup} := \sup_{x \in C} p(x) \in (0,\infty)$ and $q_{\inf} := \inf_{x,y \in C, \theta \in \Theta_0} q_\theta(y|x) \in (0,\infty)$.
    Let $p(C) := \int_C p(x) \mathrm{d}x$, which is positive since $p(\cdot)$ is bounded away from 0 on $C$ and $\lambda_{\mathrm{Leb}}(C) > 0$.
    For the result we will take $\delta = (q_{\inf} / p_{\sup}) p(C)$ and $\nu(\cdot) = p(\cdot) / p(C)$.
    Then for all $\theta \in \Theta_0$, $x \in C$, and $S \subseteq C$,
    \begin{align*}
        P_\theta(x,S) & = \underbrace{ \int_S q_\theta(y|x) \alpha_{\theta}(x,y) \mathrm{d}y }_{\text{probability of moving from } x \text{ to a different state in } S} \; + \;  \mathrm{1}_{S}(x) \underbrace{ \int_{\mathcal{X}} q_\theta(y|x) [ 1 - \alpha_\theta(x,y) ] \mathrm{d}y }_{\text{probability of rejecting a proposal}} \\
        & \geq \int_S q_\theta(y|x) \alpha_{\theta}(x,y) \mathrm{d}y \\
        & = \int_{S \cap A_\theta(x)} q_\theta(y|x) \alpha_{\theta}(x,y) \mathrm{d}y + \int_{S \cap R_\theta(x)} q_\theta(y|x) \alpha_{\theta}(x,y) \mathrm{d}y \\
        & = \int_{S \cap A_\theta(x)} q_\theta(y|x) \mathrm{d}y + \int_{S \cap R_\theta(x)} \frac{p(y)}{p(x)}  q_\theta(x|y) \mathrm{d}y \\
        & \geq \int_{S \cap A_\theta(x)}  \frac{p(y)}{p_{\sup}} q_{\inf} \mathrm{d}y + \int_{S \cap R_\theta(x)}  \frac{p(y)}{p_{\sup}} q_{\inf} \mathrm{d}y
        = \int_S \frac{p(y)}{p_{\sup}} q_{\inf} \mathrm{d}y
        = \delta \nu(S) ,
    \end{align*}
    as required.
\end{proof}
\noindent Note that the conclusion of \Cref{lem: minorisation} is slightly stronger than what we are minimally required to establish for minorisation in \Cref{def: ssge}, since it provides a probability measure $\nu$ that applies simultaneously for all $\theta$ in a compact set $\Theta$.

Our focus now turns to the drift condition.
For $x \in \mathbb{R}^d$ and $R \geq 0$, let $B_R(x) := \{y \in \mathbb{R}^d : \|y - x\| \leq R\}$ denote the $x$-centred, radius $R$ ball.
The following is based on an argument used in the proof of Theorem 4.1 of \citet{jarner2000geometric}, but generalised beyond the case of a random walk proposal.
For the proof we will use a technical lemma on the tail properties of sub-exponential distributions, provided as \Cref{lem: super exp tails} in \Cref{app: auxiliary}.

\begin{lemma}[Simultaneous drift condition for Metropolis--Hastings]
\label{lem: drift condition}
    In the setting of \Cref{lem: ssge for phi mh}, there exists $V : \mathbb{R}^d \rightarrow [1,\infty)$, $\lambda < 1$, $b < \infty$, and $R > 0$ such that, letting $S = B_R(0)$, we have $\sup_{x \in S} V(x) < \infty$ and $(P_\theta V)(x) \leq \lambda V(x) + b \mathrm{1}_S(x)$ for all $x \in \mathbb{R}^d$.
\end{lemma}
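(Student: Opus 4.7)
The plan is to follow the Lyapunov-function strategy of \citet{jarner2000geometric}, taking $V(x) = c \, p(x)^{-s}$ for some $s \in (0,1)$ sufficiently small and a constant $c$ chosen so that $V \geq 1$ on $\mathbb{R}^d$, but to adapt the argument so that only the quasi-symmetry hypothesis (rather than symmetry) of \Cref{lem: ssge for phi mh} is required. The central computation rewrites the drift as
\begin{align*}
\frac{(P_\theta V)(x)}{V(x)} - 1 = \int_{\mathbb{R}^d} \left[\left(\frac{p(x)}{p(y)}\right)^s - 1\right] q_\theta(y|x) \, \alpha_\theta(x,y) \, \mathrm{d}y ,
\end{align*}
so the goal becomes showing that, uniformly in $\theta \in \Theta$, this integral is bounded above by some $\lambda - 1 < 0$ for all $\|x\|$ exceeding a threshold $R$.

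The first step is to split the integration domain into the high-density region $I(x) := \{y : p(y) \geq p(x)\}$, on which the integrand is non-positive, and its complement $E(x)$, on which it is non-negative. On $I(x) \cap A_\theta(x)$ we have $\alpha_\theta(x,y) = 1$ and $(p(x)/p(y))^s - 1 \leq 0$, and by combining the minimum performance level assumption with sub-exponentiality of $p$ and continuity of $q_\theta$, I would extract a uniform-in-$\theta$ lower bound on the $q_\theta(\cdot|x)$-mass of a subset of $I(x) \cap A_\theta(x)$ on which $p(y)/p(x)$ is at least a constant strictly greater than $1$; this produces a strictly negative contribution from $I(x)$ whose magnitude does not vanish as $\|x\| \to \infty$.

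Next I would bound the positive contribution from $E(x)$. Quasi-symmetry yields $\alpha_\theta(x,y) \leq \rho \, p(y)/p(x)$, so writing $u = p(y)/p(x) \in (0,1)$ the integrand on $E(x)$ is bounded by $\rho (u^{1-s} - u) \, q_\theta(y|x)$, a uniformly bounded multiple of $q_\theta(y|x)$. The \emph{regular acceptance boundary} hypothesis isolates a thin tube $\partial A_\theta^\delta(x)$ carrying arbitrarily small proposal mass, while on the remainder of $E(x)$ sub-exponentiality of $p$ forces $p(y)/p(x)$ to be small, so $u^{1-s}$ is small and the entire positive contribution can be made arbitrarily small for $\|x\|$ large. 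Combining these two estimates yields $(P_\theta V)(x) \leq \lambda V(x)$ for some $\lambda < 1$ and all $\|x\| \geq R$, uniformly in $\theta$. For $x \in S := B_R(0)$, positivity and continuity of $p$ give $\sup_{x \in S} V(x) < \infty$, and a routine uniform-in-$\theta$ control of $\int V(y) q_\theta(y|x) \mathrm{d}y$ (using quasi-symmetry to transfer the tail behaviour of $q_\theta(y|x)$ to that of $q_\theta(x|y)$, and hence relate it to the normalising mass of $p$) allows the resulting constant to be absorbed into the additive $b \, \mathrm{1}_S(x)$ term.

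The main technical obstacle I anticipate is the positive-contribution bound on $E(x)$: in the random walk setting of \citet{jarner2000geometric}, the proposal depends only on $y - x$ and so radial decompositions directly yield the required estimate; under only quasi-symmetry, the proposal may be highly asymmetric in its arguments and need not be translation invariant, and it is precisely the \emph{regular acceptance boundary} assumption that substitutes for translation invariance by guaranteeing the set where the acceptance ratio switches from $1$ to strictly less than $1$ carries vanishing proposal mass as $\|x\| \to \infty$, so that the quasi-symmetric acceptance bound $\rho \, p(y)/p(x)$ can be propagated through without producing a residual term of order $V(x)$.
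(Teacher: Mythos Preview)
Your strategy is correct and essentially the same as the paper's: the Lyapunov function $V = c\,p^{-s}$, the tube $\partial A_\theta^\delta(x)$ handled via the regular acceptance boundary, sub-exponentiality to force the density ratio to be extreme off the tube, and the minimum performance level to secure a strict contraction. Two small points of difference are worth noting. First, the paper simply takes $s = 1/2$; there is no need for $s$ to be ``sufficiently small'', and committing to a fixed exponent simplifies the arithmetic. Second, the paper splits according to $A_\theta(x)$ versus $R_\theta(x)$ rather than your density-level split $I(x)$ versus $E(x)$; after showing the two ``cross'' integrals vanish, this leaves exactly $\int_{R_\theta(x)} q_\theta(y|x)\,\mathrm{d}y = 1 - Q_\theta(A_\theta(x)\mid x)$, so the minimum performance level hypothesis is invoked once at the very end rather than buried in a negative-contribution estimate. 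Finally, your ``routine'' control of $\int V(y)\,q_\theta(y|x)\,\mathrm{d}y$ on the compact set $S$ is left vague (the reference to the normalising mass of $p$ does not obviously lead anywhere); the paper instead proves the global bound $\sup_{x,\theta} (P_\theta V)(x)/V(x) \leq \rho^{1/2} + \rho^{3/2} + 1$ directly from quasi-symmetry and the inclusions $\{p(y)\geq \rho p(x)\}\subseteq A_\theta(x)$, $R_\theta(x)\subseteq\{p(y)< \rho p(x)\}$, which immediately gives $b = (\rho^{1/2}+\rho^{3/2}+1)\sup_{x\in S} V(x)$.
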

\begin{proof}
    First note that our quasi-symmetry assumption implies that
\begin{align}
\left\{y : \frac{p(y)}{p(x)} \geq \rho \right\} & \subseteq A_\theta(x) \subseteq \left\{y : \frac{p(y)}{p(x)} \geq \frac{1}{\rho} \right\} \label{eq: quasi sym A} \\
\left\{y : \frac{p(y)}{p(x)} < \frac{1}{\rho} \right\} & \subseteq R_\theta(x) \subseteq \left\{y : \frac{p(y)}{p(x)} < \rho \right\} \label{eq: quasi sym R}
\end{align}
and also that $x \in A_\theta(x)$ will always hold.
For the proof we will take $V(x) = c p(x)^{-1/2}$ for $c$ such that $V \geq 1$, which we can do since $p$ is continuous, positive, and vanishing in the tail.
It follows from continuity and compactness that $\sup_{x \in S} V(x) < \infty$ is satisfied.
It then suffices to show that
\begin{align}
\limsup_{\|x\| \rightarrow \infty} \sup_{\theta \in \Theta} \frac{(P_\theta V)(x)}{V(x)} < 1 \label{eq: ergodic tail} \\
\sup_{x \in \mathbb{R}^d} \sup_{\theta \in \Theta} \frac{(P_\theta V)(x)}{V(x)} < \infty . \label{eq: ergodic bound}
\end{align}
Indeed, if \eqref{eq: ergodic tail} holds then there exists $R > 0$, $\lambda < 1$ such that $P_\theta V(x) \leq \lambda V(x)$ for all $x \notin S$ and all $\theta \in \Theta$.
Further, if \eqref{eq: ergodic bound} holds then, since $V$ is continuous, we can set
$$
b = \sup_{x \in S} \sup_{\theta \in \Theta} (P_\theta V)(x) \leq \sup_{x \in S} V(x) \sup_{x' \in \mathbb{R}^d} \sup_{\theta \in \Theta} \frac{(P_\theta V)(x')}{V(x')} < \infty
$$
since $V$ is bounded on the compact set $S$.
Then the drift condition will have been established.

\smallskip

\noindent \textit{Establishing \eqref{eq: ergodic bound}:}
Decompose and then bound the integral as
\begin{align*}
    (P_\theta V)(x) & = \underbrace{ \int_{A_\theta(x)} q_\theta(y|x) V(y) \mathrm{d}y }_{\text{always accept}} \\
    & \qquad + \underbrace{ \int_{R_\theta(x)} q_\theta(y | x) \left[ \frac{q_\theta(x|y) p(y)}{q_\theta(y|x) p(x)} V(y) + \left( 1 - \frac{q_\theta(x|y) p(y)}{q_\theta(y|x) p(x)} \right) V(x) \right] \mathrm{d}y }_{\text{possibly reject}} \\
    & \leq \int_{A_\theta(x)} q_\theta(y|x) V(y) \mathrm{d}y + \int_{R_\theta(x)} \frac{q_\theta(x|y) p(y)}{p(x)} V(y) + q_\theta(y|x) V(x)  \mathrm{d}y
\end{align*}
so
\begin{align*}
    \frac{(P_\theta V)(x)}{V(x)} & \leq \int_{A_\theta(x)} q_\theta(y|x) \frac{V(y)}{V(x)} \mathrm{d}y + \int_{R_\theta(x)} \frac{q_\theta(x|y) p(y)}{p(x)}  \frac{V(y)}{V(x)} + q_\theta(y|x) \mathrm{d}y .
\end{align*}
(The first of these two inequalities is in fact strict, but we do not need a strict inequality for our argument.)
Then, plugging in our choice of $V$, we have
\begin{align}
    \frac{(P_\theta V)(x)}{V(x)} & \leq \int_{A_\theta(x)} q_\theta(y|x) \frac{p(x)^{1/2}}{p(y)^{1/2}} \mathrm{d}y + \int_{R_\theta(x)} q_\theta(x|y) \frac{p(y)^{1/2}}{p(x)^{1/2}}  + q_\theta(y|x) \mathrm{d}y . \label{eq: explicit ratio form}
\end{align}
From \eqref{eq: quasi sym A} and \eqref{eq: quasi sym R}, together with $q_\theta(x|y) \leq \rho q_\theta(y|x)$, we see that
\begin{align*}
    \frac{(P_\theta V)(x)}{V(x)} & \leq \int_{A_\theta(x)} q_\theta(y|x) \rho^{1/2} \mathrm{d}y + \int_{R_\theta(x)} \rho q_\theta(y|x) \rho^{1/2} + q_\theta(y|x) \mathrm{d}y \\
    & = \rho^{1/2} Q_\theta(A_\theta(x) | x ) + (\rho^{3/2} + 1) Q_\theta(R_\theta(x) | x ) 
    = \rho^{1/2} + \rho^{3/2} + 1
    < \infty
\end{align*}
where the final bound is $x$- and $\theta$-independent, so that \eqref{eq: ergodic bound} is established.

\smallskip

\noindent \textit{Establishing \eqref{eq: ergodic tail}:}
Fix $\epsilon > 0$.
By the regular acceptance boundary assumption, there exists $\delta > 0$ small enough and $R_1 > 0$ large enough that $Q_\theta(\partial A_\theta^\delta(x) | x ) < \epsilon$ for all $\theta$ and all $\|x\| \geq R_1$.
From \Cref{lem: super exp tails}, there exists $R_2 > 0$ large enough that, for all $\|x\| \geq R_2$, $y \in A_\theta(x) \cap ( \partial A_\theta^\delta(x)^\mathtt{c} )$ implies that $p(x) / p(y) \leq \epsilon$ and $y \in R_\theta(x) \cap \partial A_\theta^\delta(x)^c$ implies that $p(y) / p(x) \leq \epsilon$.
So, for $\|x\| \geq \max\{R_1,R_2\}$, and using \eqref{eq: quasi sym A},
\begin{align*}
    \int_{A_\theta(x)} q_\theta(y|x) \frac{p(x)^{1/2}}{p(y)^{1/2}} \mathrm{d}y & = \int_{A_\theta(x) \cap \partial A_\theta^\delta(x) } q_\theta(y|x) \frac{p(x)^{1/2}}{p(y)^{1/2}} \mathrm{d}y \\
    & \qquad + \int_{A_\theta(x) \cap ( \partial A_\theta^\delta(x)^{\mathtt{c}} ) } q_\theta(y|x) \frac{p(x)^{1/2}}{p(y)^{1/2}} \mathrm{d}y \\
    & \leq \rho^{1/2} Q_\theta(\partial A_\theta^\delta(x) | x ) + \epsilon^{1/2} Q_\theta(\mathbb{R}^d | x) 
    \leq \rho^{1/2} \epsilon + \epsilon^{1/2} .
\end{align*}
Since $\epsilon > 0$ was arbitrary, we have shown that
\begin{align}
\limsup_{\|x\| \rightarrow \infty} \sup_{\theta \in \Theta} \int_{A_\theta(x)} q_\theta(y|x) \frac{p(x)^{1/2}}{p(y)^{1/2}} \mathrm{d}y = 0 . \label{eq: first integral vanishes}
\end{align}
A similar argument shows that, for $\|x\| \geq \max\{R_1,R_2\}$, and using \eqref{eq: quasi sym R},
\begin{align*}
\int_{R_\theta(x)} q_\theta(x|y) \frac{p(y)^{1/2}}{p(x)^{1/2}}  \mathrm{d}y 
& \leq \int_{R_\theta(x) \cap \partial A_\theta^\delta(x)} q_\theta(x|y) \frac{p(y)^{1/2}}{p(x)^{1/2}} \mathrm{d}y \\
& \qquad + \int_{R_\theta(x) \cap ( \partial A_\theta^\delta(x)^{\mathtt{c}} ) } q_\theta(x|y) \frac{p(y)^{1/2}}{p(x)^{1/2}}  \mathrm{d}y \\
& \leq \rho Q_\theta(\partial A_\theta^\delta(x) | x) \rho^{1/2} + \rho Q_\theta(\mathbb{R}^d | x) \epsilon^{1/2} 
\leq \rho^{3/2} \epsilon + \rho \epsilon^{1/2}
\end{align*}
and since $\epsilon > 0$ was arbitrary, we have shown that
\begin{align}
\limsup_{\|x\| \rightarrow \infty} \sup_{\theta \in \Theta} \int_{R_\theta(x)} q_\theta(x|y) \frac{p(y)^{1/2}}{p(x)^{1/2}} \mathrm{d}y = 0 . \label{eq: second integral vanishes}
\end{align}
Thus, substituting \eqref{eq: first integral vanishes} and \eqref{eq: second integral vanishes} into \eqref{eq: explicit ratio form} yields 
\begin{align*}
    \limsup_{\|x\| \rightarrow \infty} \sup_{\theta \in \Theta} \frac{P_\theta V(x)}{V(x)} & = \limsup_{\|x\| \rightarrow \infty} \sup_{\theta \in \Theta} \int_{R_\theta(x)} q_\theta(y|x) \mathrm{d}y \\
    & = 1 - \liminf_{\|x\| \rightarrow \infty} \inf_{\theta \in \Theta} Q_\theta(A_\theta(x) | x) < 1,
\end{align*}
where we have used the minimum performance assumption to obtain the inequality. 
The claim \eqref{eq: ergodic tail} has been established. 
\end{proof}

\begin{proof}[Proof of \Cref{lem: ssge for phi mh}]
From \Cref{def: ssge} we must show that minorisation and simultaneous drift conditions are satisfied.
The simultaneous drift condition is established in \Cref{lem: drift condition} with $S = B_R(0)$ for some $R > 0$.
Since $S$ is compact with $\lambda_{\text{Leb}}(S) > 0$, the minorisation condition follows from \Cref{lem: minorisation}.
\end{proof}

\subsection{Proof of \Cref{thm: explicit}}
\label{app: proof of explicit theorem}

The proof of \Cref{thm: explicit} exploits the framework for analysis of adaptive \ac{mcmc} advocated in \citealp{roberts2007coupling}.
Auxiliary lemmas used in this proof are deferred to \Cref{app: auxiliary}.
For a (possibly signed) measure $\nu$ on $\mathcal{X}$, denote the total variation norm $\|\nu\|_{\mathrm{TV}} = \sup_{S \subset \mathcal{X}} |\nu(S)|$.
Then we aim to make use of the following well-known result:

\begin{theorem}[Theorem 3 of \citealp{roberts2007coupling}]
\label{thm: roberts}
    Let $\Theta_0$ be a set and consider an adaptive \ac{mcmc} algorithm $\theta_n \equiv \theta_n(x_0,\dots,x_n) \in \Theta_0$ with Markov transition kernels $\{P_\theta\}_{\theta \in \Theta_0}$ initialised at a fixed $x_0 \in \mathcal{X}$ and $\theta_0 \in \Theta_0$.
    Suppose $\{P_\theta\}_{\theta \in \Theta_0}$ is \ac{ssage} and that the \emph{diminishing adaptation} condition, meaning
    $$
    \sup_{x \in \mathcal{X}} \| P_{\theta_{n+1}}(x,\cdot) - P_{\theta_n}(x,\cdot) \|_{\mathrm{TV}} \rightarrow 0
    $$
    in probability as $n \rightarrow \infty$, is satisfied.
    Then $\| \mathrm{Law}(x_n) - p \|_{\mathrm{TV}} \rightarrow 0$.
\end{theorem}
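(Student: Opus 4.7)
My plan is to combine classical drift-and-minorisation machinery (which converts SSAGE into a uniform quantitative ergodicity bound) with a step-by-step coupling argument that exploits the diminishing adaptation hypothesis. The overall strategy is: for $\epsilon > 0$, choose a horizon $N$ such that the ``frozen'' chain started from $x_n$ and run for $N$ steps under the non-adapting kernel $P_{\theta_n}$ is within $\epsilon/2$ of $p$ in total variation, and then show that the adaptive chain can be coupled to this frozen chain so that the two coincide at time $n+N$ with probability $\to 1$ as $n \to \infty$.

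The first step is to convert SSAGE into a quantitative ergodicity statement. A simultaneous drift $P_\theta V \leq \lambda V + b \mathbf{1}_S$ together with a minorisation $P_\theta(x,\cdot) \geq \delta \nu_\theta(\cdot)$ for $x \in S$ are exactly the hypotheses of the standard Meyn--Tweedie / Baxendale-type $V$-geometric ergodicity theorem, which delivers constants $R < \infty$ and $r \in (0,1)$, depending only on $\lambda$, $b$, $\delta$, and $\sup_{x \in S} V(x)$, such that
\begin{align*}
\|P_\theta^N(x,\cdot) - p\|_{\mathrm{TV}} \leq R V(x) r^N \qquad \forall x \in \mathcal{X}, \; \theta \in \Theta_0, \; N \geq 0 .
\end{align*}
Iterating the drift inequality in expectation yields $\mathbb{E}[V(x_n)] \leq V(x_0) + b/(1-\lambda) =: M$ uniformly in $n$, whence
\begin{align*}
\mathbb{E}\,\|P_{\theta_n}^N(x_n,\cdot) - p\|_{\mathrm{TV}} \leq R M r^N ,
\end{align*}
which can be made smaller than $\epsilon/2$ by choosing $N = N(\epsilon)$ sufficiently large. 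This encodes the ``containment'' phenomenon: the adaptive chain's state is controlled well enough that, were we to freeze the parameter at time $n$ and run many more steps of the non-adaptive chain, stationarity would be approached at a rate that does not depend on which $\theta_n$ was realised.

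For the now-fixed $N$, I would build a step-by-step maximal coupling of the adaptive chain $(x_{n+j})_{j=0}^N$ with an auxiliary frozen chain $(\tilde x_{n+j})_{j=0}^N$ starting at $\tilde x_n = x_n$ and evolving under $P_{\theta_n}$. When $x_{n+j} = \tilde x_{n+j}$, the two chains can be coupled so as to differ at the next step with probability at most $\|P_{\theta_{n+j}}(x_{n+j},\cdot) - P_{\theta_n}(x_{n+j},\cdot)\|_{\mathrm{TV}}$, which by a telescoping triangle inequality is bounded by $\sum_{k=n}^{n+j-1} D_k$, where $D_k := \sup_{x} \|P_{\theta_{k+1}}(x,\cdot) - P_{\theta_k}(x,\cdot)\|_{\mathrm{TV}}$. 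A union bound over the $N$ steps gives
\begin{align*}
\mathbb{P}\bigl( x_{n+N} \neq \tilde x_{n+N} \bigr) \leq N^2 \, \mathbb{E}\!\left[\max_{n \leq k < n+N} D_k \right] ,
\end{align*}
which tends to $0$ as $n \to \infty$ by diminishing adaptation (for fixed $N$, the maximum of finitely many vanishing terms vanishes). The triangle inequality $\|\mathrm{Law}(x_{n+N}) - p\|_{\mathrm{TV}} \leq \mathbb{P}(x_{n+N} \neq \tilde x_{n+N}) + \mathbb{E}\|P_{\theta_n}^N(x_n,\cdot) - p\|_{\mathrm{TV}}$ then gives a bound below $\epsilon$ for all sufficiently large $n$, and since $\epsilon$ is arbitrary the conclusion follows.

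The main obstacle is the first step: deriving the uniform $V$-geometric ergodicity bound with constants $R, r$ that depend on $\theta$ only through the common SSAGE parameters $(S, V, \delta, \lambda, b)$. The value of SSAGE is precisely that it supplies a \emph{single} small set, drift function, and constants that all $P_\theta$ share, so that the classical regenerative split-chain construction (or a coupling across regeneration times in $S$) produces $(R, r)$ independent of $\theta$; the coupling argument in the second half of the proof is then comparatively routine, requiring only care to ensure the maximal coupling is measurable and that the telescoping bound on $\|P_{\theta_{n+j}} - P_{\theta_n}\|_{\mathrm{TV}}$ is valid uniformly over the starting point.
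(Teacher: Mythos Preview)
The paper does not prove this statement at all: it is quoted verbatim as Theorem~3 of \citet{roberts2007coupling} and used as a black-box tool in the proof of \Cref{thm: explicit}. So there is nothing in the paper to compare your argument against.

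That said, your sketch is a reasonable outline of the Roberts--Rosenthal strategy. A few remarks. First, the step $\mathbb{E}[V(x_n)] \leq V(x_0) + b/(1-\lambda)$ applies to the \emph{adaptive} chain, which you should note explicitly: it works because the simultaneous drift condition holds for every $\theta$, so $\mathbb{E}[V(x_{n+1}) \mid \mathcal{F}_n] = (P_{\theta_n} V)(x_n) \leq \lambda V(x_n) + b$ regardless of how $\theta_n$ was selected. Second, to pass from $D_k \to 0$ in probability to $\mathbb{E}[\max_{n \le k < n+N} D_k] \to 0$ you need the uniform bound $D_k \leq 1$; this is immediate from the definition of total variation but should be stated. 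Third, Roberts and Rosenthal actually structure the argument slightly differently: they first prove ergodicity under an abstract \emph{containment} condition (their Theorem~1), and then show that \ac{ssage} implies containment (their Theorem~3). Your approach collapses these two steps by going directly from $V$-geometric ergodicity to the coupling bound, which is fine and arguably cleaner, but the intermediate containment formulation is what makes the result applicable to non-geometrically-ergodic families as well.
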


The easier of these two conditions to establish is diminishing adaptation, which is satisfied due to our control of the learning rate and clipping of the gradient, as demonstrated in \Cref{lem: diminishing}.
A useful fact that we will use in the proof is that the norms $\|\cdot\|$ and $\|\cdot\|_{1,\Sigma}$ are equivalent with
\begin{align}
\lambda_{\max}^{-1/2}(\Sigma) \|x\| \leq \|x\|_{1,\Sigma} \leq \sqrt{d} \lambda_{\min}^{-1/2}(\Sigma) \|x\| \label{eq: norm equiv}
\end{align}
for all $x \in \mathbb{R}^d$, where $\lambda_{\min}(\Sigma)$ and $\lambda_{\max}(\Sigma)$ denote, respectively, the minimum and maximum eigenvalues of the matrix $\Sigma \in \mathrm{S}_d^+$.

\begin{lemma}[Diminishing adaptation for \ac{rlmh}]
\label{lem: diminishing}
    The gradient clipping with threshold $\tau > 0$ and summable learning rate $(\alpha_n)_{n \geq 0} \subset [0,\infty)$, appearing in \Cref{alg: rlmh}, ensure that
    \begin{align}
        (\theta_n)_{n \geq 0} \subset \Theta_0 := B_T(\theta_0), \qquad T := \tau \sum_{n = 0}^\infty \alpha_n < \infty . \label{eq: compact theta set}
    \end{align} 
    From local boundedness it follows that
    \begin{align}
        B := \sup_{\theta \in \Theta_0} \mathrm{Lip}(\phi_\theta) < \infty ,  \label{eq: uniform lips}
    \end{align}
    since $\Theta_0$ is compact.   
    In particular, in the setting of \Cref{thm: explicit}, diminishing adaptation is satisfied.
\end{lemma}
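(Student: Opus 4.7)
The proof has three parts. For the compact containment \eqref{eq: compact theta set}, gradient clipping forces $\|g_n\| \leq \tau$, and the triangle inequality applied to $\theta_n - \theta_0 = \sum_{k=0}^{n-1} \alpha_k g_k$ gives $\|\theta_n - \theta_0\| \leq \tau \sum_{k \geq 0} \alpha_k = T$, so $\theta_n \in B_T(\theta_0) = \Theta_0$. For \eqref{eq: uniform lips}, I would invoke the standard fact that a locally bounded real-valued function on a compact set is globally bounded (cover by open neighbourhoods on which the function is bounded, extract a finite subcover, take the maximum), applied to the map from condition (ii) of \Cref{thm: explicit}.

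For diminishing adaptation, my plan is to first apply the same compactness argument to condition (iii) to obtain $M := \sup_{\theta \in \Theta_0} \sup_{x \in \mathbb{R}^d} \mathrm{LocLip}_\theta(\vartheta \mapsto \phi_\vartheta(x)) < \infty$, then upgrade this local bound to the uniform-in-$x$ Lipschitz estimate $\sup_x \|\phi_{\theta'}(x) - \phi_\theta(x)\| \leq M \|\theta' - \theta\|$ on the convex ball $\Theta_0$ via a chaining argument along line segments. Because $\sum_{n \geq 0} \alpha_n < \infty$ forces $\alpha_n \to 0$, we deduce $\|\theta_{n+1} - \theta_n\| \leq \alpha_n \tau \to 0$, and hence $\sup_x \|\phi_{\theta_{n+1}}(x) - \phi_{\theta_n}(x)\| \to 0$ deterministically.

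To convert convergence of proposal means into TV convergence of the Metropolis--Hastings kernels, I would exploit that the Laplace proposal $q_\theta(y|x) \propto \exp(-\|y - \phi_\theta(x)\|_{1,\Sigma})$ depends on $\theta$ only through the translation $\phi_\theta(x)$, so a direct computation (translation of a common reference density) gives $\|q_{\theta'}(\cdot|x) - q_\theta(\cdot|x)\|_{\mathrm{TV}} \leq C_1 \|\phi_{\theta'}(x) - \phi_\theta(x)\|$ with $C_1$ depending only on $\Sigma$. The acceptance probability $\alpha_\theta(x,y)$ involves both $q_\theta(y|x)$ and $q_\theta(x|y)$, whose variation in $\theta$ is controlled by $\|\phi_{\theta'}(x) - \phi_\theta(x)\|$ and $\|\phi_{\theta'}(y) - \phi_\theta(y)\|$ via the reverse triangle inequality on $\|\cdot\|_{1,\Sigma}$ inside the exponent. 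A standard accept/reject decomposition of the Metropolis--Hastings kernel then combines these bounds into $\sup_x \|P_{\theta_{n+1}}(x,\cdot) - P_{\theta_n}(x,\cdot)\|_{\mathrm{TV}} \to 0$, which is the required (deterministic, hence in-probability) form of diminishing adaptation.

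The main obstacle I anticipate is the chaining step: condition (iii) bounds a pointwise-in-$x$ $\limsup$, so upgrading to a genuine uniform-in-$x$ Lipschitz bound along a segment in $\Theta_0$ is delicate in general. The structural helper is that the $\sup_{x \in \mathbb{R}^d}$ lies \emph{inside} the definition in (iii), so the relevant local modulus is already $x$-uniform by construction; covering the segment with a finite chain of balls on which this uniform modulus applies should then suffice. If that argument proves subtle for unbounded $x$, the $x$-independent envelope $\sup_x \|x - \phi_\theta(x)\| < \infty$ from condition (i) provides a fallback, allowing a tail truncation to reduce to a compact-in-$x$ problem where continuity of $(\theta, x) \mapsto \phi_\theta(x)$ yields the needed uniformity.
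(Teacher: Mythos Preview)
Your proposal is correct and follows essentially the same route as the paper: the containment argument, the locally-bounded-on-compact argument for \eqref{eq: uniform lips}, the local-to-global Lipschitz upgrade on the convex ball $\Theta_0$ via condition (iii), and the accept/reject decomposition of $P_\theta(x,\cdot) - P_\vartheta(x,\cdot)$ into a proposal-density piece and an acceptance-probability piece are exactly what the paper does. The one detail your sketch elides, and which the paper makes explicit, is that bounding $|\alpha_\theta(x,y) - \alpha_\vartheta(x,y)|$ requires controlling the prefactor $p(y)/p(x)$, which is \emph{a priori} unbounded; the paper handles this by invoking quasi-symmetry (the constant $\rho$ from \Cref{lem: quasi-symmetric}) to split into the case $p(y)/p(x) \geq \rho$, where both acceptance probabilities equal $1$, and the case $p(y)/p(x) < \rho$, where the prefactor is bounded by $\rho$ and your reverse-triangle-inequality argument on the exponent goes through.
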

\begin{proof}
    The set-up in \Cref{alg: rlmh} implies that
    \begin{align*}
    \|\theta_n - \theta_0\| \leq \sum_{i=0}^{n-1} \|\theta_{i+1} - \theta_i\| \leq \sum_{i=0}^{n-1} \alpha_i \tau \leq \tau \sum_{i=0}^\infty \alpha_i < \infty 
    \end{align*}
    where the final bound is $n$-independent and finite, since the summability of the learning rate $(\alpha_n)_{n \geq 0} \subset [0,\infty)$ was assumed.
    From this, \eqref{eq: compact theta set} is immediately established.

    The main idea of this proof is to exploit the triangle inequality and the definition of the total variation norm, as follows:
    \begin{align*}
        \| P_\theta(x,\cdot) - P_\vartheta(x,\cdot) \|_{\mathrm{TV}} & \leq \left\| \int q_\theta(y|x) [ \delta_y(\cdot) \alpha_\theta(x,y) + \delta_x(\cdot) (1 - \alpha_\theta(x,y)) ] \; \mathrm{d}y  \right. \\ & \left. \hspace{30pt} - \int q_\vartheta(y|x) [ \delta_y(\cdot)\alpha_\vartheta(x,y) + \delta_x(\cdot) (1 - \alpha_\vartheta(x,y)) ] \; \mathrm{d}y  \right\|_{\mathrm{TV}} \\
        & \leq \left\| \int {\delta}_y(\cdot)[ q_\theta(y|x) \alpha_\theta(x,y) - q_\vartheta(y|x)  \alpha_\vartheta(x,y) ] \; \mathrm{d}y  \right\|_{\mathrm{TV}} \\
        & \hspace{30pt} + \left\| \delta_x(\cdot) \int [ q_\theta(y|x) \alpha_\theta(x,y) - q_\vartheta(y|x)  \alpha_\vartheta(x,y) ] \; \mathrm{d}y  \right\|_{\mathrm{TV}} \\
        & \leq \int | q_\theta(y|x) \alpha_\theta(x,y) - q_\vartheta(y|x)  \alpha_\vartheta(x,y) | \; \mathrm{d}y ,
        \end{align*}
        where $\delta_x$ denotes the probability distribution that puts all mass at $x \in \mathbb{R}^d$.
    From the triangle inequality again,
    \begin{align}
        \| P_\theta(x,\cdot) - P_\vartheta(x,\cdot) \|_{\mathrm{TV}} & \leq \int \left| q_\theta(y|x) - q_\vartheta b(y|x) \right| \alpha_{\theta}(x,y) + |\alpha_{\theta}(x,y) - \alpha_{\vartheta}(x,y)| q_\vartheta(y|x) \; \mathrm{d}y \nonumber \\
        & \hspace{-30pt}  \leq  \int \left| q_\theta(y|x) - q_\vartheta(y|x) \right| \; \mathrm{d}y + \int \left| \alpha_{\theta}(x,y) - \alpha_{\vartheta}(x,y) \right| q_\vartheta(y|x) \; \mathrm{d}y , \label{eq: main bound in dim arg}
    \end{align}
    and we seek to bound the two integrals appearing in \eqref{eq: main bound in dim arg}.
   
    Fix $x \in \mathbb{R}^d$.
    Then the map $\vartheta \mapsto \phi_\vartheta(x)$ is locally Lipschitz, and since $\Theta_0$ is connected and compact it follows that $\vartheta \mapsto \phi_\vartheta(x)$ is Lipschitz on $\Theta_0$, with Lipschitz constant 
    $$
    L_x := \sup_{\theta \in \Theta_0} \mathrm{LocLip}_\theta(\vartheta \mapsto \phi_\vartheta(x)) .
    $$    
    Further, this Lipschitz constant can be uniformly bounded over $x \in \mathbb{R}^d$:
    \begin{align*}
    L := \sup_{x \in \mathbb{R}^d} L_x & = \sup_{x \in \mathbb{R}^d} \sup_{\theta \in \Theta_0} \mathrm{LocLip}_\theta(\vartheta \mapsto \phi_\vartheta(x)) \\
    & = \sup_{\theta \in \Theta_0} \sup_{x \in \mathbb{R}^d} \mathrm{LocLip}_\theta(\vartheta \mapsto \phi_\vartheta(x))
    < \infty ,
    \end{align*}
    where finiteness follows since we assumed local boundedness of $\theta \mapsto \sup_{x \in \mathbb{R}^d} \mathrm{LocLip}_\theta(\vartheta \mapsto \phi_\vartheta(x))$ and $\Theta_0$ is compact.
  
    Fix $\theta, \vartheta \in \Theta_0$.
    Now,
    \begin{align*}
        |q_\theta(y|x) - q_\vartheta(y|x)| & = q_\vartheta(y|x) \left| 1 - \frac{q_\theta(y|x)}{q_\vartheta(y|x)} \right| \\
        & = q_\vartheta(y|x) \left| 1 - \exp\left( -\|y - \phi_\theta(x)\|_{1,\Sigma} + \|y - \phi_\vartheta(x)\|_{1,\Sigma} \right) \right| .
    \end{align*}
    From the reverse triangle inequality, the aforementioned Lipschitz property, and \eqref{eq: norm equiv},
    \begin{align}
        \left| - \|y - \phi_\theta(x)\|_{1,\Sigma} + \|y - \phi_\vartheta(x)\|_{1,\Sigma}  \right| & \leq \| \phi_\theta(x) - \phi_\vartheta(x) \|_{1,\Sigma} 
        \leq \sqrt{d} \lambda_{\min}^{1/2}(\Sigma) \;  L \; \| \theta - \vartheta \| . \label{eq: local lips}
    \end{align}
    Further, our assumptions imply that the right hand side of \eqref{eq: local lips} is bounded since $\theta,\vartheta \in \Theta_0$ and $\Theta_0$ is compact.
    Since the exponential function is Lipschitz on any compact set, there exists an $(x,\theta,\vartheta)$-independent constant $C > 0$ such that
    \begin{align*}
        \left| 1 -  \exp\left( -\|y - \phi_\theta(x)\|_{1,\Sigma} + \|y - \phi_\vartheta(x)\|_{1,\Sigma} \right) \right| \leq C  \left| - \|y - \phi_\theta(x)\|_{1,\Sigma} + \|y - \phi_\vartheta(x)\|_{1,\Sigma}  \right|  
    \end{align*}
    and hence
    \begin{align}
        \int |q_\theta(y|x) - q_\vartheta(y|x)| \; \mathrm{d}y & \leq C \; \sqrt{d} \lambda_{\min}^{1/2}(\Sigma) \; L \; \| \theta - \vartheta \|  \label{eq: bound integral 1}
    \end{align}
    holds for all $\theta,\vartheta \in \Theta_0$ and all $x \in \mathbb{R}^d$.

    Next, from \Cref{lem: quasi-symmetric} the proposal $\{q_\theta(\cdot | x) : x \in \mathbb{R}^d\}$ is quasi-symmetric on $\Theta_0$ with constant $\rho$ defined as in \eqref{eq: quasi sym const}.
    From \eqref{eq: quasi sym A}, if $p(y) / p(x) \geq \rho$ then $y \in A_\theta(x) \cap A_\vartheta(x)$ and $\alpha_{\theta}(x,y) = \alpha_{\vartheta}(x,y) = 1$, so that $| \alpha_{\theta}(x,y) - \alpha_{\vartheta}(x,y) | = 0$.
    Otherwise, we have $p(y) / p(x) < \rho$ and from quasi-symmetry again,
    \begin{align*}
        \left| \alpha_{\theta}(x,y) - \alpha_{\vartheta}(x,y) \right| & \leq \frac{p(y)}{p(x)} \left| \frac{q_\theta(x|y)}{q_\theta(y|x)} - \frac{q_\vartheta(x|y)}{q_\vartheta(y|x)} \right| \\
        & = \frac{p(y)}{p(x)} \frac{q_\theta(x|y)}{q_\theta(y|x)} \left| 1 - \frac{q_\theta(y|x)}{q_\theta(x|y)} \frac{q_\vartheta(x|y)}{q_\vartheta(y|x)} \right| \\
        & \leq \rho^2 \left| 1 - \exp\left( \begin{array}{l} - \|y - \phi_\theta(x)\|_{1,\Sigma} + \|x-\phi_\theta(y)\|_{1,\Sigma}  \\ \qquad  -\|x-\phi_\vartheta(y)\|_{1,\Sigma} + \|y - \phi_\vartheta(x)\|_{1,\Sigma} \end{array} \right) \right| .
    \end{align*}
    From \eqref{eq: local lips} and an analogous argument to before involving the fact that the exponential function is Lipschitz on a compact set, we obtain that
    \begin{align}
        \int \left| \alpha_{\theta}(x,y) - \alpha_{\vartheta}(x,y) \right| q_\vartheta(y|x) \mathrm{d}y & \leq 2 C \rho^2 \; \sqrt{d} \lambda_{\min}^{1/2}(\Sigma) \; L \; \| \theta - \vartheta \|  \label{eq: bound integral 2}
    \end{align}
    for all $\theta , \vartheta \in \Theta_0$ and $x \in \mathbb{R}^d$.

    Substituting \eqref{eq: bound integral 1} and \eqref{eq: bound integral 2} into \eqref{eq: main bound in dim arg},
    \begin{align*}
        \| P_\theta(x,\cdot) - P_\vartheta(x,\cdot) \|_{\mathrm{TV}} & \leq C (1 + 2 \rho^2) \; \sqrt{d} \lambda_{\min}^{1/2}(\Sigma) \; L \; \| \theta - \vartheta \| 
    \end{align*}
    for all $\theta , \vartheta \in \Theta_0$ and $x \in \mathbb{R}^d$.
    It follows (a.s.) that 
    \begin{align*}
        \sup_{x \in \mathbb{R}^d} \| P_{\theta_n}(x,\cdot) - P_{\theta_{n+1}}(x,\cdot) \|_{\mathrm{TV}}  & \leq C (1 + 2 \rho^2) \; \sqrt{d} \lambda_{\min}^{1/2}(\Sigma) \; L \; \|\theta_n - \theta_{n+1}\| \rightarrow 0
    \end{align*}
    since $(\theta_n)_{n \geq 0}$ is (a.s.) convergent.  Since almost sure convergence implies convergence in probability, diminishing adaptation is established.  
\end{proof}
The main technical effort required to prove \Cref{thm: explicit} occurs in establishing conditions under which $\phi$-MH is \ac{ssage}.
This result is the content of \Cref{lem: explicit SSGE}.
For the proof we will use technical lemmas on the tail properties of sub-exponential distributions, provided as \Cref{lem: super exp tails,lem: quasi-symmetric,lem: geometry of Ax}, and a technical lemma on the interior cone condition, provided as \Cref{lem: cone}, all of which can be found in \Cref{app: auxiliary}.

For a subset $C \subset \mathbb{R}^d$, let $C \cong \mathbb{S}^{d-1}$ denote that $C = \{r(\xi) \xi : \xi \in \mathbb{S}^{d-1}\}$ for some continuous function $r : \mathbb{S}^{d-1} \rightarrow (0,\infty)$, meaning that $C$ is a hyper-surface
that can be parametrised using the $(d-1)$-dimensional sphere $\mathbb{S}^{d-1}$.
For $\epsilon > 0$, let $C_\epsilon := \{x \in \mathbb{R}^d : p(x) = \epsilon\}$ be the $\epsilon$-level set of $p(\cdot)$ and, for $\delta \geq 0$, let $C_\epsilon^\delta := \{x + sn(x) : x \in C_\epsilon, |s| \leq \delta \}$.

\begin{lemma}[Flexible mean $\phi$-MH is SSAGE]
\label{lem: explicit SSGE}
    Let $\Theta_0 \subset \Theta = \mathbb{R}^p$ be the compact set from \eqref{lem: diminishing}.
    In the setting of \Cref{thm: explicit}, $\{P_\theta\}_{\theta \in \Theta_0}$ is \ac{ssage}.
\end{lemma}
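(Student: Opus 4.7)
The plan is to invoke Theorem \ref{lem: ssge for phi mh} applied to the family $\{P_\theta\}_{\theta \in \Theta_0}$, with $\Theta_0$ the compact set from \Cref{lem: diminishing}. Positivity and joint continuity of $(\theta,x,y) \mapsto q_\theta(y|x)$ follow immediately from the Laplace form \eqref{eq: Laplace proposal} together with continuity of the maps $\theta \mapsto \phi_\theta(x)$ and $x \mapsto \phi_\theta(x)$. It therefore remains to verify the three structural conditions: quasi-symmetry, regular acceptance boundary, and minimum performance level.

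For quasi-symmetry, I would compute
$$
\log \frac{q_\theta(y|x)}{q_\theta(x|y)} = \|x - \phi_\theta(y)\|_{1,\Sigma} - \|y - \phi_\theta(x)\|_{1,\Sigma},
$$
and split $x - \phi_\theta(y) = (x-y) + (y - \phi_\theta(y))$ and $y - \phi_\theta(x) = (y - x) + (x - \phi_\theta(x))$. Two applications of the triangle inequality cancel the $x-y$ contributions and bound the log-ratio by $\|y - \phi_\theta(y)\|_{1,\Sigma} + \|x - \phi_\theta(x)\|_{1,\Sigma}$. By norm equivalence \eqref{eq: norm equiv}, the local boundedness of $\theta \mapsto \sup_x \|x - \phi_\theta(x)\|$, and compactness of $\Theta_0$, this is uniformly bounded; this is precisely (and can be extracted from) \Cref{lem: quasi-symmetric}, giving $\rho < \infty$.

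For the regular acceptance boundary, the quasi-symmetry bound shows that the acceptance ratio $[p(y)/p(x)]\cdot [q_\theta(x|y)/q_\theta(y|x)]$ equals $1$ only when $p(y)/p(x)$ lies in the interval $[\rho^{-1},\rho]$. Hence $\partial A_\theta(x)$ is trapped between two level sets of $p$ lying in a bounded ratio to $p(x)$. For $p \in \mathcal{P}_0(\mathbb{R}^d)$, \Cref{lem: cone} and the geometry described in \Cref{lem: geometry of Ax} show that for $\|x\|$ large these level sets are star-shaped hyper-surfaces $\cong \mathbb{S}^{d-1}$ whose normals are close to $n(x)$. The radial tube of width $\delta$ around such surfaces has Lebesgue thickness $O(\delta)$ in the radial direction; since the Laplace proposal, centred within bounded distance of $x$, has density uniformly bounded on such tubes and its total mass on a sufficiently thin shell tends to zero as $\delta \to 0$ uniformly in $\theta \in \Theta_0$ and in large $\|x\|$, the required bound $Q_\theta(\partial A_\theta^\delta(x)|x) < \epsilon$ can be enforced by first choosing $\delta$ small enough and then $R$ large enough. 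For minimum performance, the interior cone condition together with \Cref{lem: cone} provides, for $\|x\|$ large, a nontrivial cone of inward directions along which $p$ strictly increases, so proposals landing in a bounded inward region of this cone automatically satisfy $p(y)/p(x) \geq \rho$ and hence lie in $A_\theta(x)$; because $\|x - \phi_\theta(x)\|$ is uniformly bounded on $\Theta_0$, the Laplace proposal assigns positive mass to such a region, bounded below uniformly in $\theta \in \Theta_0$ and in $\|x\|$.

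The main obstacle is the regular acceptance boundary step: one must leverage the sub-exponential tail and interior cone geometry simultaneously with the quasi-symmetric bound to control the proposal mass on a radial tube around a moving level set, uniformly over $\theta \in \Theta_0$. The bounded-drift condition $\sup_x \|x - \phi_\theta(x)\| < \infty$ is what keeps the proposal sufficiently aligned with $x$ for this uniform control to go through; without it, the Laplace proposal could place arbitrary mass on the acceptance boundary as $\|x\| \to \infty$.
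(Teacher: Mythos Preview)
Your overall plan coincides with the paper's: invoke \Cref{lem: ssge for phi mh} on the compact $\Theta_0$ and verify quasi-symmetry, regular acceptance boundary, and minimum performance. Quasi-symmetry is exactly \Cref{lem: quasi-symmetric}, and your minimum-performance sketch via \Cref{lem: cone} is the same idea as the paper's (the paper places the cone apex on $C_{\rho p(x)}$ rather than at $x$, using \Cref{lem: super exp tails} to show this apex is within bounded distance of $x$, but this is a refinement of what you wrote).

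There is, however, a real gap in your regular-acceptance-boundary step. You argue that the radial tube $\partial A_\theta^\delta(x)$ has thickness $O(\delta)$, the Laplace density is bounded by some $q_{\sup}$, and therefore $Q_\theta(\partial A_\theta^\delta(x)\mid x)\to 0$ as $\delta\to 0$ uniformly in large $\|x\|$. This does not follow: the tube surrounds the \emph{entire} level surface $C_{p(x)}\cong\mathbb{S}^{d-1}$, whose ``surface area'' grows like $\|x\|^{d-1}$, so $\lambda_{\mathrm{Leb}}(\partial A_\theta^\delta(x))$ is of order $\delta\|x\|^{d-1}$ and the crude bound $q_{\sup}\cdot\lambda_{\mathrm{Leb}}(\partial A_\theta^\delta(x))$ blows up. The paper fixes this with a near/far split: first pick $R_2$ independent of $x$ such that $Q_\theta(B_{R_2}(x)^{\mathtt c}\mid x)<\epsilon$ for all $x,\theta$ (this is where your observation that $\phi_\theta(x)$ is within bounded distance of $x$ is actually used, together with the Laplace tail), and then control the \emph{local} piece $\partial A_\theta^\delta(x)\cap B_{R_2}(x)$ via \Cref{lem: lebesgue}, which bounds the Lebesgue measure of a radial tube intersected with a fixed-radius ball uniformly in large $\|x\|$. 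A further ingredient you did not make explicit is that the sub-exponential tail (\Cref{lem: super exp tails}) forces the radial gap between $C_{\rho p(x)}$ and $C_{\rho^{-1}p(x)}$ to be arbitrarily small for large $\|x\|$, which is what traps $\partial A_\theta^\delta(x)$ inside a thin $C_{p(x)}^\epsilon$ in the first place.
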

\begin{proof}
    The conditions of \Cref{lem: ssge for phi mh} need to be checked.
    Let $M := \sup_{\theta \in \Theta_0} \sup_{x \in \mathbb{R}^d} \|x - \phi_\theta(x) \|_{1,\Sigma}$, which exists by the assumed local boundedness of $\theta \mapsto \sup_{x \in \mathbb{R}^d} \|x - \phi_\theta(x) \|$, compactness of $\Theta_0$, and the norm-equivalence in \eqref{eq: norm equiv}.
    The form of our flexible mean Laplace proposal \eqref{eq: Laplace proposal} ensures that $q_{\sup} := \sup_{\theta \in \Theta_0,x,y \in \mathbb{R}^d} q_\theta(y|x) < \infty$.

    \smallskip

    \noindent\textit{Quasi-symmetry:}
    From \Cref{lem: quasi-symmetric}, the proposal $\{q_\theta(\cdot|x) : x \in \mathbb{R}^d\}$ is quasi-symmetric on $\Theta_0$, with the constant $\rho$ defined as in \eqref{eq: quasi sym const}.

    \smallskip

    \noindent\textit{Regular acceptance boundary:}
    Given $\epsilon > 0$, we can pick $R_1 > 0$ such that for all $x \in \mathbb{R}^d$ and all $\theta \in \Theta_0$,
    $$
    Q_\theta(B_{R_1}(\phi_\theta(x))^{\mathtt{c}} | x) < \epsilon
    $$
    due to the form of our Laplace proposal in \eqref{eq: Laplace proposal}.
    From the definition of $M$, we can pick $R_2 \geq R_1$ such that $B_{R_1}(\phi_\theta(x)) \subseteq B_{R_2}(x)$ holds simultaneously for all $x \in \mathbb{R}^d$ and all $\theta \in \Theta_0$.
    Thus in particular 
    \begin{align}
    Q_\theta(B_{R_2}(x)^{\mathtt{c}} | x) < \epsilon \label{eq: ball bound}
    \end{align}
    holds simultaneously for all $x \in \mathbb{R}^d$ and all $\theta \in \Theta_0$.
    
    From \Cref{lem: super exp tails,lem: geometry of Ax}, the assumption that $p \in \mathcal{P}(\mathbb{R}^d)$ implies there is an $R_3 > R_2$ such that for all $\|x\| \geq R_3$ we have  $C_{\rho p(x)} , \partial A_\theta(x),  C_{\rho^{-1} p(x)}  \cong \mathbb{S}^{d-1}$.
    Further, from \Cref{lem: super exp tails} with $r = \frac{4}{\epsilon} \log \rho$, and the fact $\rho \geq 1$, we may assume that $R_3$ is large enough that
    \begin{align*}
    \frac{p(x + s n(x))}{p(x)} \geq \frac{1}{\rho^2} \qquad \implies \qquad s \leq \frac{\epsilon}{2} 
    \end{align*} 
    so that the radial distance between $C_{\rho p(x)}$ and $C_{\rho^{-1} p(x)}$ is uniformly less than $\epsilon/2$ for all $\|x\| \geq R_3$.
    From \eqref{eq: quasi sym A}, both $C_{p(x)}$ and $\partial A_\theta(x)$ is contained in the region bounded by $C_{\rho p(x)}$ and $C_{\rho^{-1} p(x)}$.    
    It follows that, if $\delta \in [0,\epsilon/2)$, then $\partial A_\theta^\delta(x) \subset C_{p(x)}^{\epsilon}$ uniformly in $\|x\| \geq R_3$ and $\theta \in \Theta_0$.

    From \Cref{lem: lebesgue}, for all $\|x\| \geq R_3$ ($> R_2$) we have the bound
    \begin{align}
        \lambda_{\text{Leb}}\left( C_{p(x)}^\epsilon \cap B_{R_2}(x) \right) & \leq \epsilon \left( \frac{\|x\| + R_2}{\|x\| - R_2} \right)^{d-1} \frac{\lambda_{\text{Leb}}(B_{3R_2}(x))}{R_2} \nonumber \\
        & \leq \epsilon \underbrace{ \left( \frac{R_3 + R_2}{R_3 - R_2} \right)^{d-1} \frac{\lambda_{\text{Leb}}(B_{3R_2}(x))}{R_2} }_{=: D} \label{eq: leb bound}
    \end{align}
    where the last inequality is obtained by maximising the $x$-dependent term over $\|x\| \geq R_3$.

    Putting these results together, we have the bound
    \begin{align*}
        Q_\theta(\partial A_\theta^\delta(x) | x) & = Q_\theta(\partial A_\theta^\delta(x) \cap (B_{R_2}(x)^{\mathtt{c}}) | x ) + Q_\theta(\partial A_\theta^\delta(x) \cap B_{R_2}(x) | x )  \\
        & \leq \epsilon + q_{\sup} \; D \; \epsilon
    \end{align*}
    for all $\|x\| \geq R_3$ and all $\theta \in \Theta_0$, where for the first term we have used \eqref{eq: ball bound} and for the second term we have used \eqref{eq: leb bound}.
    Since $\epsilon > 0$ was arbitrary and our bound is $\theta$-independent, the regular acceptance boundary condition is established.

    \smallskip

    \noindent\textit{Minimum performance level:}
    From \Cref{lem: super exp tails,lem: geometry of Ax}, the assumption that $p \in \mathcal{P}(\mathbb{R}^d)$ implies there is an $R > 0$ such that for all $\|x\| \geq R$ we have  $C_{\rho p(x)} , \partial A_\theta(x),  C_{\rho^{-1} p(x)}  \cong \mathbb{S}^{d-1}$. 
    From \eqref{eq: quasi sym A}, $\partial A_\theta(x)$ is contained in the region bounded by $C_{\rho p(x)}$ and $C_{\rho^{-1} p(x)}$.
    For $x \neq 0$, let $y_x$ denote the intersection of the line $\{s x : s \geq 0\}$ with the set $C_{\rho p(x)}$.
    From \Cref{lem: super exp tails} and the norm-equivalence in \eqref{eq: norm equiv}, we may consider $R$ sufficiently large that $\|y_x - x\|_{1,\Sigma} \leq 1$ for all $\|x\| \geq R$, and in particular this leads to the bound $\|y_x - \phi_\theta(x)\|_{1,\Sigma} \leq M + 1$.

    From \Cref{lem: cone}, since $p \in \mathcal{P}_0(\mathbb{R}^d)$ we may assume $R$ is sufficiently large that for all $\|x\| \geq R$ there exists a radius-1 cone $K_{1}(y_x)$ with $x$-independent Lebesgue measure $\zeta_1 > 0$, whose apex is $y_x$, contained in the interior of the compact region bounded by $C_{\rho p(x)}$ ($\subset A_\theta(x)$).
    In particular, $K_1(y_x)$ is contained in a $\|\cdot\|_{1,\Sigma}$-ball of radius $M + 2$ centred at $\phi_\theta(x)$.
    
    Thus for all $\|x\| \geq R$ and $\theta \in \Theta_0$, 
    \begin{align*}
    Q_\theta(A_\theta(x) | x) \geq Q_\theta( A_\theta(x) \cap K_1(y_x) | x ) & = Q_\theta( K_1(y_x) | x ) \\
    & \geq \lambda_{\mathrm{Leb}}( K_1(y_x) ) \times \inf_{\|y - \phi_\theta(x) \|_{1,\Sigma} \leq M + 2} q_\theta(y|x) \\
    & \geq \zeta_1 \times \frac{1}{Z} \exp( - (M+2) ) > 0
    \end{align*}
    where $Z > 0$ is the $(x,\theta)$-independent normalisation constant of the proposal \eqref{eq: Laplace proposal}.
    Thus the final condition of \Cref{lem: ssge for phi mh} is satisfied.
\end{proof}

At last we have established \Cref{thm: explicit}:

\begin{proof}[Proof of \Cref{thm: explicit}]
From \Cref{lem: diminishing}, under our stated assumptions diminishing adaptation is satisfied and the sequence $(\theta_n)_{n \geq 0}$ is contained in a compact set $\Theta_0$.
From \Cref{lem: explicit SSGE}, under our stated assumptions, the $\phi$-MH Markov transition kernels $\{P_\theta\}_{\theta \in \Theta_0}$ are \ac{ssage}.
The ergodicity of \ac{rlmh} then follows from \Cref{thm: roberts}.    
\end{proof}

\subsection{Auxiliary Lemmas}
\label{app: auxiliary}

The following auxiliary lemmas were used in the proofs of \Cref{lem: ssge for phi mh,thm: explicit}.
The first, \Cref{lem: super exp tails}, is a well-known result that sub-exponential distributions decay uniformly quickly in the tail:

\begin{lemma}[Tails of sub-exponential distributions]
\label{lem: super exp tails}
    Let $p \in \mathcal{P}(\mathbb{R}^d)$.
    Then for all $r > 0$ there exists $R > 0$ such that
    $$
        \frac{p(x + s n(x))}{p(x)} \leq \exp(-s r), \qquad \forall \|x\| \geq R, s \geq 0
    $$
    and $C_{p(x)} = \{y : p(y) = p(x)\} \cong \mathbb{S}^{d-1}$.
\end{lemma}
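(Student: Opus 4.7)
The plan is to treat the two assertions separately, since both flow from the sub-exponential assumption \eqref{eq: def subexponential} together with a single observation about the geometry of radial rays.

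For the exponential decay bound, I would use \eqref{eq: def subexponential} to select $R_0 > 0$ so large that $n(y) \cdot (\nabla \log p)(y) \leq -r$ for every $\|y\| \geq R_0$. The key geometric observation is that along the outward ray $y_t := x + t n(x) = (1 + t/\|x\|) x$, the unit normal satisfies $n(y_t) = n(x)$ for every $t \geq 0$, and $\|y_t\| = \|x\| + t \geq \|x\|$. Hence, whenever $\|x\| \geq R_0$,
\begin{align*}
    \log\frac{p(x + s n(x))}{p(x)} = \int_0^s \frac{d}{dt} \log p(y_t) \, dt = \int_0^s n(y_t) \cdot (\nabla \log p)(y_t) \, dt \leq -s r ,
\end{align*}
which is the required inequality after exponentiating.

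For the level-set statement, I would apply the first part with any fixed $r > 0$. This immediately yields two facts: (i) along every outward ray through the origin, $t \mapsto p(t \xi)$ is strictly decreasing once $t \geq R_0$; and (ii) $p(z) \to 0$ uniformly as $\|z\| \to \infty$, since the bound above gives $p(z) \leq \exp(-r(\|z\| - R_0)) \sup_{\|y\|=R_0} p(y)$ for $\|z\| \geq R_0$. Positivity and continuity of $p$ on the compact ball $\overline{B_{R_0}(0)}$ give $m := \min_{\|z\| \leq R_0} p(z) > 0$. Using (ii), I would enlarge $R_0$ to some $R \geq R_0$ with $\sup_{\|z\| \geq R} p(z) < m$, so that for any $\|x\| \geq R$, every solution of $p(y) = p(x)$ must lie in the radially-monotone regime $\|y\| > R_0$.

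Fixing such an $x$, for each $\xi \in \mathbb{S}^{d-1}$ the map $t \mapsto p(t\xi)$ is continuous, exceeds $m > p(x)$ at $t = R_0$, and decays to $0$ as $t \to \infty$; so the intermediate value theorem supplies some $t$ with $p(t\xi) = p(x)$, and strict radial monotonicity past $R_0$ renders it unique. Calling this root $r(\xi)$, continuity (indeed $C^1$-ness) of $r : \mathbb{S}^{d-1} \to (0,\infty)$ follows from the implicit function theorem applied to $F(\xi, t) := p(t\xi) - p(x)$, since $\partial_t F(\xi, r(\xi)) = r(\xi)^{-1} \, p(r(\xi)\xi) \, [n(r(\xi)\xi) \cdot (\nabla \log p)(r(\xi)\xi)] \cdot r(\xi) < 0$ by the choice of $R_0$. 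Then $C_{p(x)} = \{r(\xi) \xi : \xi \in \mathbb{S}^{d-1}\} \cong \mathbb{S}^{d-1}$ as claimed. The only real obstacle is bookkeeping the two thresholds — the radius $R_0$ at which radial log-derivatives are sufficiently negative, versus the possibly larger $R$ needed so that the level set $C_{p(x)}$ itself lies entirely in the monotone region — but no substantial new idea is needed beyond what is delivered by the radial-invariance identity $n(y_t) = n(x)$.
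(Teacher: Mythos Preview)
Your argument is correct. The paper does not actually prove this lemma; its proof consists solely of the citation ``See Sec.~4 of \citet{jarner2000geometric}.'' Your write-up is essentially the standard argument one finds there: integrate the radial derivative of $\log p$ along the ray (using $n(y_t)=n(x)$) for the decay bound, then combine strict radial monotonicity with uniform tail decay and the intermediate value theorem to parametrise the level set over $\mathbb{S}^{d-1}$. The only cosmetic remark is that your expression for $\partial_t F$ carries a redundant factor $r(\xi)^{-1}\cdot r(\xi)$; the cleaner form is simply $\partial_t F(\xi,t)=p(t\xi)\,[n(t\xi)\cdot(\nabla\log p)(t\xi)]<0$, and indeed continuity of $r(\cdot)$ can be obtained without the implicit function theorem from strict monotonicity alone, though invoking it does no harm since $p\in\mathcal{P}(\mathbb{R}^d)$ is $C^1$.
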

\begin{proof}
    See Sec. 4. of \citealp{jarner2000geometric}.
\end{proof}

The next technical lemma, \Cref{lem: quasi-symmetric}, is a simple but novel cocntribution of this work, establishing sufficient conditions for quasi-symmetry to be satisfied:

\begin{lemma}[Quasi-symmetry for $\phi$-MH]
    \label{lem: quasi-symmetric}
    Let $\Theta_0$ be compact.
    In the setting of \Cref{thm: explicit}, the proposal $\{q_\theta(\cdot|x) : x \in \mathbb{R}^d\}$ is quasi-symmetric on $\Theta_0$, meaning that
    \begin{align}
    \rho := \sup_{\theta \in \Theta_0} \sup_{x,y \in \mathbb{R}^d} \frac{q_\theta(y|x)}{q_\theta(x|y)} < \infty . \label{eq: quasi sym const}
    \end{align}
\end{lemma}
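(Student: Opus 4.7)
The plan is to observe that the Laplace proposal \eqref{eq: Laplace proposal} has a translation-invariant shape, so the normalising constant cancels in the ratio and only the exponent matters. Explicitly,
\begin{align*}
\frac{q_\theta(y|x)}{q_\theta(x|y)} = \exp\bigl( \|x - \phi_\theta(y)\|_{1,\Sigma} - \|y - \phi_\theta(x)\|_{1,\Sigma} \bigr),
\end{align*}
so it suffices to establish a finite upper bound, uniform in $\theta \in \Theta_0$ and $x, y \in \mathbb{R}^d$, on the exponent.

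To control this exponent, I would add and subtract $\|x-y\|_{1,\Sigma}$ and apply the reverse triangle inequality twice, yielding
\begin{align*}
\|x - \phi_\theta(y)\|_{1,\Sigma} - \|y - \phi_\theta(x)\|_{1,\Sigma} &= \bigl(\|x - \phi_\theta(y)\|_{1,\Sigma} - \|x-y\|_{1,\Sigma}\bigr) + \bigl(\|x-y\|_{1,\Sigma} - \|y - \phi_\theta(x)\|_{1,\Sigma}\bigr) \\
&\leq \|y - \phi_\theta(y)\|_{1,\Sigma} + \|x - \phi_\theta(x)\|_{1,\Sigma}.
\end{align*}
The cross-term cancellation is the key algebraic move: each difference of norms is controlled by the displacement of the proposal mean from its conditioning point.

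Finally, I would invoke the hypotheses of \Cref{thm: explicit}. By assumption, $x \mapsto \|x-\phi_\theta(x)\|$ is bounded on $\mathbb{R}^d$ for each $\theta$, and $\theta \mapsto \sup_{x} \|x-\phi_\theta(x)\|$ is locally bounded; since $\Theta_0$ is compact, this supremum is in fact bounded over $\Theta_0$. The norm equivalence \eqref{eq: norm equiv} transfers this bound to the $\|\cdot\|_{1,\Sigma}$ norm, so
\begin{align*}
M := \sup_{\theta \in \Theta_0} \sup_{x \in \mathbb{R}^d} \|x - \phi_\theta(x)\|_{1,\Sigma} < \infty,
\end{align*}
whence $\rho \leq \exp(2M) < \infty$, as required. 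There is no real obstacle here; the proof is essentially a two-line application of the reverse triangle inequality combined with the uniform boundedness of $x - \phi_\theta(x)$, and its role is to justify the quasi-symmetry hypothesis of \Cref{lem: ssge for phi mh} when the downstream result \Cref{lem: explicit SSGE} is invoked.
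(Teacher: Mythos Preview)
Your proposal is correct and essentially identical to the paper's proof: both compute the ratio explicitly, bound the exponent via the reverse triangle inequality, and conclude $\rho \leq \exp(2M)$ using the same definition of $M$. The only cosmetic difference is that the paper introduces the substitution $z = x - y$, $\eta_x^\theta = x - \phi_\theta(x)$ and applies the reverse triangle inequality once, whereas you add and subtract $\|x-y\|_{1,\Sigma}$ and apply it twice; the algebra is equivalent.
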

\begin{proof}
    For this proposal,
    $$
    \frac{q_\theta(y|x)}{q_\theta(x|y)} =  \exp\left( - \|y - \phi_\theta(x)\|_{1,\Sigma} + \|x - \phi_\theta(y)\|_{1,\Sigma} \right) .
    $$
    Let $z = x - y$ and $\eta_x^\theta  = x - \phi_\theta(x)$.
    Let $M := \sup_{\theta \in \Theta_0} \sup_{x \in \mathbb{R}^d} \|x - \phi_\theta(x) \|_{1,\Sigma}$, which exists by the assumed local boundedness of $\theta \mapsto \sup_{x \in \mathbb{R}^d} \|x - \phi_\theta(x) \|$, compactness of $\Theta_0$, and the norm-equivalence in \eqref{eq: norm equiv}.
    Then we have a bound
    \begin{align*}
        \left| - \|y - \phi_\theta(x)\|_{1,\Sigma} + \|x - \phi_\theta(y) \|_{1,\Sigma} \right| & = \left| \| \eta_y^\theta - z \|_{1,\Sigma} - \|\eta_x^\theta - z\|_{1,\Sigma}  \right| \\
        & \leq \|\eta_x^\theta - \eta_y^\theta\|_{1,\Sigma}
        \leq \|\eta_x^\theta\|_{1,\Sigma} + \|\eta_y^\theta\|_{1,\Sigma} \leq 2M .
    \end{align*}
    where we have used the reverse triangle inequality.
    The result is then established with $\rho \leq \exp(2M)$ being an explicit bound on the quasi-symmetry constant.
\end{proof}

The next technical lemma, \Cref{lem: geometry of Ax}, concerns the geometry of the acceptance set $A_\theta(x)$.
In the case of a random walk proposal, $\rho = 1$ and the first part of \Cref{lem: geometry of Ax} reduces to the existing \Cref{lem: super exp tails}.
A novel contribution of this paper is to study the geometry of the acceptance set in the case of a more general Metropolis--Hastings proposal.

\begin{lemma}[Geometry of $\partial A_\theta(x)$]
\label{lem: geometry of Ax}
    Let $p \in \mathcal{P}(\mathbb{R}^d)$ and $\Theta_0 \subset \Theta = \mathbb{R}^p$.
    Suppose that $(\theta,x) \mapsto \phi_\theta(x)$ is continuous and $B = \sup_{\theta \in \Theta_0} \mathrm{Lip}(\phi_\theta) < \infty$.
    Assume quasi-symmetry with parameter $\rho$.
    Then for all $x$ large enough, and all $\theta \in \Theta_0$, $\partial A_\theta(x) \cong \mathbb{S}^{d-1}$.
\end{lemma}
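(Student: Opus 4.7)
The plan is to show that $\partial A_\theta(x)$ is a radial graph over $\mathbb{S}^{d-1}$, meaning that for each direction $\xi \in \mathbb{S}^{d-1}$ there is a unique $r_\theta(x,\xi) > 0$ with $r_\theta(x,\xi)\xi \in \partial A_\theta(x)$, and the map $\xi \mapsto r_\theta(x,\xi)$ is continuous. The starting point is that $\partial A_\theta(x) = \{y : p(y) q_\theta(x|y) = p(x) q_\theta(y|x)\}$, and quasi-symmetry sandwiches this set inside the annular region $\{y : p(y) \in [p(x)/\rho, \rho p(x)]\}$. By \Cref{lem: super exp tails}, for $\|x\|$ large enough every level set $C_c$ with $c \in [p(x)/\rho, \rho p(x)]$ is star-shaped, so the annulus forms a single topological shell radially separated from the origin.

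For each $\xi$ I would introduce the radial profile
\[
F_\xi(t) := \log p(t\xi) - \log p(x) + \log q_\theta(x|t\xi) - \log q_\theta(t\xi|x) ,
\]
so that intersections of the ray $\{t\xi : t \geq 0\}$ with $\partial A_\theta(x)$ correspond to zeros of $F_\xi$. Because $|\log q_\theta(x|t\xi) - \log q_\theta(t\xi|x)| \leq \log \rho$ by quasi-symmetry, existence of a zero follows from the intermediate value theorem: at the inner radius $t^-(\xi)$ defined by $p(t^-(\xi)\xi) = \rho p(x)$ one has $F_\xi(t^-(\xi)) \geq 0$, while at the outer radius $t^+(\xi)$ with $p(t^+(\xi)\xi) = p(x)/\rho$ one has $F_\xi(t^+(\xi)) \leq 0$; that both $t^\pm(\xi)$ exist and depend continuously on $\xi$ is given by \Cref{lem: super exp tails}.

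Uniqueness will follow from strict monotonicity of $F_\xi$ on $[t^-(\xi), t^+(\xi)]$. Differentiating gives $F_\xi'(t) = \xi \cdot (\nabla \log p)(t\xi) + \partial_t[\log q_\theta(x|t\xi) - \log q_\theta(t\xi|x)]$. The sub-exponential hypothesis \eqref{eq: def subexponential} forces the first term to $-\infty$ uniformly as $\|x\| \to \infty$, since inside the annulus $\|t\xi\|$ is comparable to $\|x\|$. For the second term, the Laplace form of the proposal (inherited from the context of \Cref{thm: explicit}) combined with the uniform Lipschitz bound $B$ on $\phi_\theta$ yields a bound independent of $\theta \in \Theta_0$, $\xi \in \mathbb{S}^{d-1}$, and $t$ in the annulus. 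Thus for $\|x\|$ large, $F_\xi'(t) < 0$ throughout the annulus, giving a unique zero $r_\theta(x,\xi)$ and, via the implicit function theorem, continuity of $r_\theta(x,\cdot)$. The main obstacle is making the upper bound on $|\partial_t \log(q_\theta(x|t\xi)/q_\theta(t\xi|x))|$ genuinely uniform in $(\theta, \xi)$: the compactness of $\Theta_0$, the norm-equivalence \eqref{eq: norm equiv}, and a reverse triangle inequality argument of the flavour used in the proof of \Cref{lem: diminishing} should combine to deliver this, at which point the star-shaped conclusion $\partial A_\theta(x) \cong \mathbb{S}^{d-1}$ follows.
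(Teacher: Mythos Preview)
Your proposal is correct and follows essentially the same route as the paper: sandwich $\partial A_\theta(x)$ between the level sets $C_{\rho p(x)}$ and $C_{\rho^{-1} p(x)}$ via quasi-symmetry, obtain existence of a radial intersection by the intermediate value theorem, and obtain uniqueness by showing that the radial derivative of $\log p$ eventually dominates the Lipschitz constant of the proposal log-ratio (which is controlled by $B$ and the norm equivalence). One small caveat: because the Laplace proposal involves $\|\cdot\|_{1,\Sigma}$, the map $t \mapsto \log q_\theta(x|t\xi) - \log q_\theta(t\xi|x)$ is only Lipschitz, not everywhere differentiable, so your appeal to $F_\xi'$ and the implicit function theorem should be replaced (as the paper does) by comparing the derivative of the $\log p$ part against the Lipschitz constant of the proposal part, with continuity of $\xi \mapsto r_\theta(x,\xi)$ read off directly from continuity of the data rather than the IFT.
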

\begin{proof}
    From \Cref{lem: super exp tails}, the assumption that $p \in \mathcal{P}(\mathbb{R}^d)$ implies there exists $R_1 > 0$ such that, for all $\|x\| \geq R_1$, we have $C_{\rho p(x)} \cong \mathbb{S}^{d-1}$.
    From \eqref{eq: def subexponential}, there exists $R_2 \geq R_1$ sufficiently large that
    \begin{align}
        \sup_{\|x\| \geq R_2} n(x) \cdot \nabla \log p(x) < - \sqrt{d} \lambda_{\min}^{-1/2}(\Sigma) (1+B) . \label{eq: suppose R big}
    \end{align}
    From \Cref{lem: super exp tails} with $r = \log \rho$, and recalling the fact that $\rho \geq 1$, there exists $R_3 \geq R_2 + 1$ large enough that
    \begin{align*}
    \frac{p(x + s n(x))}{p(x)} \geq \frac{1}{\rho} \qquad \implies \qquad s \leq 1 
    \end{align*} 
    so that the radial distance between $C_{p(x)}$ and $C_{\rho p(x)}$ is uniformly at most 1 for all $\|x\| \geq R_3$.
    Finally, since $p(\cdot)$ is positive and continuous with $p(x) \rightarrow 0$ as $\|x\| \rightarrow \infty$ (from \Cref{lem: super exp tails}), there exists $R_4 \geq R_3$ such that for each $\|x\| \geq R_4$, we have $\|z\| \geq R_3$ for all $z \in C_{\rho p(x)}$.
    For $x,y \in \mathbb{R}^d$ with $\|x\| \geq R_4$, $y \neq 0$, and $p(y) \leq \rho p(x)$, let $r_{x,y}$ denote the (unique) positive constant such that $r_{x,y} n(y) \in C_{\rho p(x)}$.
    Then, for all $\|x\| \geq R_4$,
    \begin{align}
    \sup_{y : p(y) \leq \rho p(x)} \partial_s \log p(y + sn(y)) <  - \sqrt{d} \lambda_{\min}^{-1/2}(\Sigma) (1+B)  \label{eq: big gradient}
    \end{align}
    since for any $y$ with $p(y) \leq \rho p(x)$, we have that $\|y\| \geq \|r_{x,y} n(y)\| - 1 \geq R_3 - 1 \geq R_2$, and thus \eqref{eq: suppose R big} will hold.
    In the sequel we assume that $\|x\| \geq R_4$.

    From \eqref{eq: quasi sym A} the set $\partial A_\theta(x)$ is contained in the region bounded by $C_{\rho p(x)}$ and $C_{\rho^{-1} p(x)}$. 
    Let $\xi \in \mathbb{S}^{d-1}$ and consider the intersection of the set
    $\partial A_\theta(x)$ with the line segment $\gamma(\xi) = \{r \xi : r_{\min} \leq r \leq r_{\max}\}$ where $p(r_{\min} \xi) = \rho p(x)$ and $p(r_{\max} \xi) = \rho^{-1} p(x)$.
    Our first task is to establish that this intersection is a singleton set.
    
    Let $f_{x,\xi}(r) := \log p(r\xi) - \log p(x)$ and $g_{x,\xi,\theta}(r) := \log q_\theta(r\xi|x) - \log q_\theta(x|r\xi)$, so that we seek solutions to $f_{x,\xi}(r) = g_{x,\xi,\theta}(r)$ with $r \in [r_{\min},r_{\max}]$.
    Now $f_{x,\xi}$ is continuous with $f_{x,\xi}(r_{\min}) = \log \rho$ and $f_{x,\xi}(r_{\max}) = - \log \rho$.
    On the other hand, $g_{x,\xi,\theta} : [r_{\min},r_{\max}] \rightarrow \mathbb{R}$ is continuous and takes values only in $[- \log \rho , \log \rho]$, so the intersection $\partial A_\theta(x) \cap \gamma(\xi)$ is a non-empty set, and we can pick $r_\theta(\xi) \xi \in \partial A_\theta(x) \cap \gamma(\xi)$. 
    Since the region bounded by $C_{\rho p(x)}$ and $C_{\rho^{-1}p(x)}$ does not contain 0, it follows that $r_\theta(\xi) \in (0,\infty)$.

    Our next task to argue that $r_\theta(\xi)$ is the only element of this set.
    Now, 
    \begin{align*}
        g_{x,\xi,\theta}(r) & = \log q_\theta(r\xi|x) - \log q_\theta(x | r\xi) 
        = - \| r\xi - \phi_\theta(x) \|_{1,\Sigma} + \|x - \phi_\theta(r\xi) \|_{1,\Sigma} 
    \end{align*}
    from which it follows that $\mathrm{Lip}(g_{x,\xi,\theta}) \leq \sqrt{d} \lambda_{\min}^{-1/2}(\Sigma) (1+B)$, where we have used the norm-equivalence in \eqref{eq: norm equiv}.
    So $f_{x,\xi}(r)$ and $g_{x,\xi,\theta}(r)$ are equal at $r = r_\theta(\xi)$ and cannot be equal again on $r \in [r_{\min},r_{\max}]$ since from \eqref{eq: big gradient} the gradient of $f_{x,\xi}(r)$ is everywhere lower than the Lipschitz constant of $g_{x,\xi,\theta}(r)$ on $[r_{\min},r_{\max}]$.
    Thus $r = r_\theta(\xi)$ is the only solution to $f_{x,\xi}(r) = g_{x,\xi,\theta}(r)$.    

    Lastly we note that continuity of the map $\xi \mapsto r_\theta(\xi)$ follows from continuity of the maps $\xi \mapsto f_{x,\xi}$ and $\xi \mapsto g_{x,\xi,\theta}$, which in turn follows from the continuity of $x \mapsto p(x)$ and $(\theta,x) \mapsto \phi_\theta(x)$ that we assumed.
\end{proof}

The next technical lemma, \Cref{lem: lebesgue}, is a basic bound on the Lebesgue measure of the intersection of any set $C \cong \mathbb{S}^{d-1}$ with a Euclidean ball.
The proof closely follows an argument used within the proof of Theorem 4.1 in \citet{jarner2000geometric}, but we present it here to keep the paper self-contained:

\begin{lemma}
    \label{lem: lebesgue}
    Suppose that $C \cong \mathbb{S}^{d-1}$ and let $C^\epsilon := \{x + s n(x) : |s| \leq \epsilon \}$.
    Then, for all $R > 0$ and all $\|x\| > R$, 
    \begin{align*}
        \lambda_{\mathrm{Leb}}\left( C^\epsilon \cap B_{R}(x) \right) & \leq \epsilon \left( \frac{\|x\| + R}{\|x\| - R} \right)^{d-1} \frac{\lambda_{\mathrm{Leb}}(B_{3R}(x))}{R} .
    \end{align*}
\end{lemma}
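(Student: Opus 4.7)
The plan is to parametrise everything in spherical coordinates $(\rho,\xi)$ about the origin, since both $C$ and $C^\epsilon$ are naturally radial objects. Writing $C = \{r(\xi)\xi : \xi \in \mathbb{S}^{d-1}\}$ for a continuous positive $r$, one has $n(r(\xi)\xi) = \xi$ and therefore
$$
C^\epsilon \;=\; \{\rho\xi : \xi \in \mathbb{S}^{d-1},\; \rho \in [r(\xi)-\epsilon,\,r(\xi)+\epsilon]\}.
$$
Let $\Xi^\ast \subset \mathbb{S}^{d-1}$ be the set of directions $\xi$ for which the half-line $\{\rho\xi : \rho > 0\}$ meets $C^\epsilon \cap B_R(x)$, and let $\sigma$ denote surface measure on $\mathbb{S}^{d-1}$. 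The strategy is to sandwich the quantity of interest by bounds expressed through $\sigma(\Xi^\ast)$, then eliminate the solid angle by dividing.

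For the \emph{upper bound}, note that $\rho\xi \in B_R(x)$ forces $\rho \in [\|x\|-R,\|x\|+R]$ by the reverse triangle inequality, so in every direction $\xi \in \Xi^\ast$ the set of admissible radii is a subinterval of $[r(\xi)-\epsilon, r(\xi)+\epsilon]$ of length at most $2\epsilon$, on which $\rho^{d-1} \leq (\|x\|+R)^{d-1}$. Fubini in spherical coordinates (Jacobian $\rho^{d-1}$) then gives
$$
\lambda_{\mathrm{Leb}}(C^\epsilon \cap B_R(x)) \;\leq\; 2\epsilon\,(\|x\|+R)^{d-1}\,\sigma(\Xi^\ast).
$$

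For the \emph{lower bound} on $\lambda_{\mathrm{Leb}}(B_{3R}(x))$, the key geometric observation is that the cone-like set
$$
K^\ast := \{\rho\xi : \xi \in \Xi^\ast,\; \rho \in [\|x\|-R,\|x\|+R]\}
$$
is contained in $B_{3R}(x)$. Indeed, for each $\xi \in \Xi^\ast$ pick $\rho^\ast(\xi)\xi \in C^\epsilon \cap B_R(x)$; then $\rho^\ast(\xi) \in [\|x\|-R,\|x\|+R]$, so for any $\rho$ in the same interval, $|\rho - \rho^\ast(\xi)| \leq 2R$ and the triangle inequality yields $\|\rho\xi - x\| \leq 2R + R = 3R$. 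Because the radial integration range is contained in $[\|x\|-R,\infty)$ thanks to $\|x\|>R$, we get $\rho^{d-1} \geq (\|x\|-R)^{d-1}$ and hence
$$
\lambda_{\mathrm{Leb}}(B_{3R}(x)) \;\geq\; \lambda_{\mathrm{Leb}}(K^\ast) \;\geq\; 2R\,(\|x\|-R)^{d-1}\,\sigma(\Xi^\ast).
$$

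Dividing the two bounds eliminates $\sigma(\Xi^\ast)$ and delivers exactly the inequality in the statement. The only nontrivial step is the containment $K^\ast \subset B_{3R}(x)$; everything else is a straightforward Fubini calculation in polar form. I expect no obstacles beyond keeping the elementary triangle-inequality bookkeeping tidy, and verifying that $\|x\|>R$ is used precisely to ensure the radial integration domain $[\|x\|-R,\|x\|+R]$ lies in $(0,\infty)$, which is what licenses the uniform lower bound on $\rho^{d-1}$.
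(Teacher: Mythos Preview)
Your proposal is correct and follows essentially the same approach as the paper: both parametrise in spherical coordinates, obtain an upper bound $2\epsilon(\|x\|+R)^{d-1}$ times a solid angle and a lower bound $2R(\|x\|-R)^{d-1}$ times the same solid angle (via the containment in $B_{3R}(x)$), and then divide. The only superficial difference is that the paper takes the angular set to be all directions meeting $B_R(x)$, whereas you use the smaller set of directions meeting $C^\epsilon \cap B_R(x)$; either choice works, since the solid angle cancels in the ratio.
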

\begin{proof}
    Recall that $C \cong \mathbb{S}^{d-1}$ means that we can parametrise $C$ as $C = \{r(\xi) \xi \; : \; \xi \in \mathbb{S}^{d-1}\}$ for some function $r : \mathbb{S}^{d-1} \rightarrow (0,\infty)$.
    Let $T(x) := \{ \xi \in \mathbb{S}^{d-1} : r \xi \in B_R(x) \; \text{for some} \; r \geq 0 \}$ and $S(x) := \{ r \xi : \xi \in T(x), \|x\| - R \leq r \leq \|x\| + R \}$.
    Then $B_R(x) \subset S(x) \subset B_{3R}(x)$.
    Let $\omega_d$ denote the surface measure on $\mathbb{S}^{d-1}$.
    The first inclusion leads to the bound
    \begin{align}
        \lambda_{\text{Leb}}(C^\epsilon \cap B_R(x)) = \int \mathrm{1}_{ C^\epsilon \cap B_R(x) }(y) \mathrm{d}y 
        & = \int_{T(x)} \left( \int_0^\infty \mathrm{1}_{C^\epsilon \cap B_R(x)}(r\xi) \; r^{d-1} \mathrm{d}r \right) \omega_d(\mathrm{d}\xi) \nonumber \\
        & \leq \int_{T(x)} \left( \int_{\|x\| - R}^{\|x\| + R} \mathrm{1}_{C^\epsilon}(r\xi) \; r^{d-1} \mathrm{d}r \right) \omega_d(\mathrm{d}\xi) \nonumber \\
        & \leq 2 \epsilon (\|x\| + R)^{d-1} \omega_d(T(x)) \label{eq: leb bound 1}
    \end{align}
    where for the final inequality we have used the fact that $C \cong \mathbb{S}^{d-1}$.
    The second inclusion leads to the bound
    \begin{align}
        \lambda_{\text{Leb}}(B_{3R}(x)) & \geq \lambda_{\text{Leb}}(S(x)) = \omega_d(T(x)) \int_{\|x\| - R}^{\|x\| + R} r^{d-1} \mathrm{d}r \nonumber \\
        & \geq \omega_d(T(x)) 2 R (\|x\| - R)^{d-1} . \label{eq: leb bound 2}
    \end{align}
    Combining \eqref{eq: leb bound 1} and \eqref{eq: leb bound 2} leads to the stated result.
\end{proof}

Our final auxiliary lemma is a standard property of distributions satisfying an interior cone condition, which appears in the standard convergence analysis of Metropolis--Hastings:

\begin{lemma}[Interior cone condition for $C_{p(x)}$]
\label{lem: cone}
    Let $p \in \mathcal{P}_0(\mathbb{R}^d)$ and fix $\epsilon > 0$.
    Then there exists $\delta > 0$ and $R > 0$ such that, for all $\|x\| \geq R$, the cones
    $$
    K_{\epsilon}(x) = \{ x - s \xi : 0 < s < \epsilon , \xi \in \mathbb{S}^{d-1} , \|\xi - n(x)\| \leq \delta / 2  \}
    $$
    lie in the interior of the compact region bounded by $C_{p(x)}$ and each have a Lebesgue measure $\zeta_\epsilon > 0$ that is $x$-independent.
\end{lemma}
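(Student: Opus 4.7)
The plan is to leverage the interior cone condition defining $\mathcal{P}_0(\mathbb{R}^d)$, $\limsup_{\|x\|\to\infty} n(x) \cdot (\nabla p)(x)/\|(\nabla p)(x)\| < 0$, which says the gradient of $p$ has a uniformly inward-pointing component in the tail. The first step is to extract constants $c > 0$ and $R_1 > 0$ with $n(y)\cdot(\nabla p)(y)/\|(\nabla p)(y)\| \leq -2c$ for all $\|y\| \geq R_1$ (this forces $(\nabla p)(y) \neq 0$ there), to set $\delta := c$, and then to take $R$ sufficiently large that $R - \epsilon \geq R_1$ and $2\epsilon/(R - \epsilon) \leq c/2$, so that the whole segment $\{y_t := x - t\xi : t \in [0,\epsilon]\}$ remains in the tail region where the interior cone condition is effective, and the spherical projection $n(\cdot)$ varies only slightly along it.

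The core computation is then to show that each candidate cone lies in the strict super-level set $\{p > p(x)\}$. For $\|x\| \geq R$, $t \in [0,\epsilon]$, and $\xi$ with $\|\xi - n(x)\| \leq \delta/2$, a standard bound for the normalisation map (using $\|y_t - x\| = t$ and $\|y_t\| \geq \|x\| - \epsilon$) gives $\|n(y_t) - n(x)\| \leq 2\epsilon/(R-\epsilon) \leq c/2$, hence $\|\xi - n(y_t)\| \leq c$. Splitting $\xi \cdot (\nabla p)(y_t) = n(y_t) \cdot (\nabla p)(y_t) + (\xi - n(y_t)) \cdot (\nabla p)(y_t)$ and using the interior cone bound yields $\xi \cdot (\nabla p)(y_t) \leq -c \|(\nabla p)(y_t)\|$. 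The fundamental theorem of calculus then delivers
\[
p(x - s\xi) - p(x) = -\int_0^s \xi \cdot (\nabla p)(x - t\xi) \, \mathrm{d}t \geq c \int_0^s \|(\nabla p)(x - t\xi)\| \, \mathrm{d}t > 0
\]
for all $s \in (0,\epsilon]$, so every point of $K_\epsilon(x)$ sits strictly inside $\{p > p(x)\}$. Since $p$ is sub-exponential, \Cref{lem: super exp tails} guarantees that, for $\|x\|$ large, the super-level set $\{p \geq p(x)\}$ is compact with boundary $C_{p(x)} \cong \mathbb{S}^{d-1}$, so $\{p > p(x)\}$ is exactly its interior.

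For the Lebesgue measure claim, $K_\epsilon(x)$ is the image of the fixed reference cone $\{r\eta : 0 < r < \epsilon, \ \eta \in \mathbb{S}^{d-1}, \ \|\eta - e_1\| \leq \delta/2\}$ under a translation (by $x$) composed with a rotation sending $e_1$ to $-n(x)$, both of which preserve Lebesgue measure. Polar coordinates give its measure as $(\epsilon^d/d) \cdot \omega_d\bigl(\{\eta \in \mathbb{S}^{d-1} : \|\eta - e_1\| \leq \delta/2\}\bigr)$, a positive constant $\zeta_\epsilon$ depending only on $\epsilon$ and $\delta$, independent of $x$.

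The main obstacle is the second step: maintaining a uniform angular gap simultaneously for all directions $\xi$ in the cone and all base points $y_t$ along the segment. Quantitative continuity of the unit normal map $y \mapsto n(y)$, plus the triangle inequality, must not exhaust the budget of $2c$ supplied by the interior cone condition, which is what forces $R$ to grow with $\epsilon/c$. Once this calibration is made, the remaining arguments are essentially algebraic manipulations of the gradient bound and an invariance argument for Lebesgue measure.
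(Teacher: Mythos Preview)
Your argument is correct. The paper itself gives no proof here but simply defers to the proof of Theorem~4.3 in \citet{jarner2000geometric}; what you have written is essentially a self-contained version of that classical computation --- extracting the uniform angle $2c$ from the interior cone condition, propagating it along the segment via the Lipschitz bound for $n(\cdot)$, and integrating the gradient inequality to land in $\{p > p(x)\}$ --- together with the rotation/translation invariance observation for the Lebesgue measure.
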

\begin{proof}
    This is the content of the proof of Theorem 4.3 in \citet{jarner2000geometric}.
\end{proof}

\section{Implementation Detail}
\label{app: implementation}

This section explains how all algorithms referred to in the main text were implemented.
\Cref{subsec: ddpg} contains a brief introduction to \ac{ddpg}; an established approach to approximating the policy gradient.
\Cref{app: adaptive mcmc} contains full details for the \ac{arwmh} algorithm that was discussed in the main text.
\Cref{app: parametrisation} explains how our proposal was parametrised for \ac{rlmh} to ensure that the regularity conditions of \Cref{thm: explicit} were satisfied.
The specific implementational details that are required to reproduce our results, together with a discussion of the associated computational costs, are contained in \Cref{subsec: training detail}.

\subsection{Deep Deterministic Policy Gradient}
\label{subsec: ddpg}

This section describes the \ac{ddpg} algorithm for maximising $J(\phi_{\theta})$ based on the \emph{deterministic policy gradient theorem} of \cite{silver2014deterministic}.
Since the algorithm itself is quite detailed, we simply aim to landmark the main aspects of \ac{ddpg} and refer the reader to the original paper of \citet{lillicrap2015continuous} for further detail.

The deterministic policy gradient theorem states that
\begin{equation}
\label{eq:dpg_theorem}
\nabla_\theta J(\phi_\theta) = \mathbb{E}_{s\sim D_{\pi}}\left[\nabla_{\theta}\pi(s) \left .\nabla_{a}\matheuvm{Q}_{\pi}(s,a)  \right|_{a = \pi(s)}  \right],
\end{equation}
where the expectation here is taken with respect to the stationary distribution $s \sim D_{\pi}$ of the \ac{mdp} when the policy $\pi$ is fixed. 
The \emph{action-value function} $\matheuvm{Q}_{\pi}(s,a)$ gives the expected discounted cumulative reward from taking an action $a$ in state $s$ and following policy $\pi$ thereafter\footnote{Here, the notation for the action-value function $\matheuvm{Q}$ is not to be confused with the notation $Q$, which denotes the Metropolis--Hastings proposal distribution in the main text.}. 

Under \ac{ddpg}, the policy $\pi$ and the action-value function $\matheuvm{Q}$ are parameterised by neural networks whose parameters are updated via an \emph{actor-critic algorithm} based on \cite{silver2014deterministic}. Specifically, an \emph{actor network} $\pi_{\theta}(s)$ is updated in the direction of the policy gradient in \eqref{eq:dpg_theorem}, and a \emph{critic network} that approximates the action-value function, $\matheuvm{Q}_{w}(s,a) \approx \matheuvm{Q}_{\pi}(s,a)$, is updated by solving the \emph{Bellman equation} via stochastic approximation with off-policy samples from a \emph{replay buffer}. 
The Bellman equation is the name given to the recursive relationship
\begin{equation*}
\matheuvm{Q}_{\pi}(s_n, a_n) = \mathbb{E}_{s_{n+1} \sim D_{\pi}}\left[r_n + \gamma \matheuvm{Q}_{\pi}(s_{n+1}, \pi(s_{n+1})) \right],
\end{equation*}
and in \ac{ddpg} the critic network is trained by solving the optimisation problem
\begin{equation}
\label{eq:q_approx}
\argmin_{w} \mathbb{E}_{s_n \sim D_{\tilde{\pi}}, a_n \sim \tilde{\pi}} \left[(\matheuvm{Q}_{w}(s_n,a_n) - y_n)^2\right],
\end{equation}
where $y_n = r_n + \gamma \matheuvm{Q}_{w}(s_{n+1}, \pi(s_{n+1}))$, $a_n$ is generated from a stochastic \emph{behaviour policy} $\tilde{\pi}$, $D_{\tilde{\pi}}$ is the stationary state distribution according to $\tilde{\pi}$, and $s_{n+1}$ is resulted from interacting with the environment using $a_n \sim \tilde{\pi}(s_n)$. 
The expectation in \eqref{eq:q_approx} is approximated by sampling a mini-batch from a replay buffer that stores the experience tuples $\mathcal{R} := \{(s_n, a_n, r_n, s_{n+1})\}_{n=1}^{H}$. A common choice for $\tilde{\pi}$ is a noisy version of the deterministic policy $\pi$, e.g., $\tilde{\pi}(s) = \pi(s) + \varepsilon$, where $\varepsilon \sim \mathcal{N}(0, \Sigma_{\tilde{\pi}})$ with the covariance matrix $\Sigma_{\tilde{\pi}}$ that must be specified. 
Another popular choice is for $\varepsilon$ to follow an Ornstein–Uhlenbeck process. 
\cite{plappert2017parameter} noticed that \ac{ddpg} is still capable of learning successful policies even when run \emph{on-policy}, meaning that $\varepsilon = 0$. This relates to the fact that the replay buffer is naturally off-policy, by keeping experiences from policies that were previously visited. 
The implementation of \ac{ddpg} is summarised in Algorithm~\ref{alg: ddpg}, where $d_w$ and $d_{\theta}$ are determined by the architecture of the corresponding neural networks, and $\mathtt{Env}(\cdot)$ denotes a function that encapsulates the environment.

There are several aspects of \ac{ddpg} that are non-trivial, such as the distinction between \emph{target networks} $\matheuvm{Q}_{w'}$ and $\pi_{\theta'}$ and the current actor and critic networks $\matheuvm{Q}_w$ and $\pi_\theta$, and how these networks interact via the \emph{taming factor} $\tau$; see \citet{lillicrap2015continuous} for further detail.
For the purpose of this work we largely relied on default settings for \ac{ddpg} as implemented in \texttt{Matlab} R2024a; full details are contained in \Cref{subsec: training detail}.
The specific design of \ac{rl} methods for use in adaptive \ac{mcmc} was not explored, but might be an interesting avenue for future work.

\begin{algorithm}[h]
    \caption{DDPG; Algorithm 1 in \citealp{lillicrap2015continuous}}
    \label{alg: ddpg}
    \begin{algorithmic}
        \Require $s_1 \in \mathcal{S}$ (initial state), $w_1 \in \mathbb{R}^{d_{w}}$ (initial critic parameters), $\theta_1 \in \mathbb{R}^{d_{\theta}}$ (initial actor parameters), $T \in \mathbb{N}$ (number of iterations), $M \in \mathbb{N}$ (mini-batch size), $\gamma \in (0,1)$ (discount factor), $(\varsigma_i)_{i=1}^T \subset [0,\infty)$ (critic learning rate), $(\varrho_i)_{i=1}^T \subset [0,\infty)$ (actor learning rate), $\tau \in (0,1)$ (taming factor)

        \item \textbf{Initialise:} $w'_1 = w_1$, $\theta'_1 = \theta_1$, $\mathcal{R}_1 = \{\}$

        \For{$n=1$ \textbf{to} $T$}
            \State $a_n \sim \tilde{\pi}(s_n)$ \Comment{sample action from behaviour policy}
            \State $(r_n, s_{n+1}) \leftarrow \mathtt{Env}(a_n)$ \Comment{interact with environment}
            \State $\mathcal{R}_{n+1} \leftarrow \mathcal{R}_n \cup \{(s_n, a_n, r_n, s_{n+1})\}$ \Comment{append experience to replay buffer}
            \State $\{(s_{(i)}, a_{(i)}, r_{(i)}, s_{(i+1)})\}_{i=1}^M \sim \mathcal{R}_{n+1}$ \Comment{sample a mini-batch uniformly from replay buffer}
            \State $\{y_i \leftarrow r_{(i)} + \gamma \matheuvm{Q}_{w'}(s_{(i+1)}, \pi_{\theta'}(s_{(i+1)}))\}_{i=1}^M$ \Comment{compute target values using target networks}
            \State $w_{n+1} \leftarrow w_n - \varsigma_n \frac{1}{M}\sum_{i=1}^M \left. \nabla_{w} (\matheuvm{Q}_{w}(s_{(i)},a_{(i)}) - y_i)^2 \right|_{w = w_n}$  \Comment{update critic network}
            \State $\theta_{n+1} \leftarrow \theta_n + \varrho_n \frac{1}{M}\sum_{i=1}^M \!\!\left.\nabla_{\theta}\pi_{\theta}(s)\right|_{\substack{s=s_{(i)} \\ \theta=\theta_n}}\!\!\left.\nabla_{a}\matheuvm{Q}_{w}(s,a)\right|_{\substack{s=s_{(i)}, a=\pi(s_{(i)}) \\ w=w_{n+1}}}$ \Comment{update actor network}
            \State $w'_{n+1} \leftarrow \tau w_{n+1} + (1-\tau) w'_n$ \Comment{update target critic network}
            \State $\theta'_{n+1} \leftarrow \tau \theta_{n+1} + (1-\tau) \theta'_n$ \Comment{update target actor network}
        \EndFor
    \end{algorithmic}
\end{algorithm}

\subsection{Adaptive Metropolis--Hastings}
\label{app: adaptive mcmc}

This appendix describes the classical \ac{arwmh} algorithm was used to warm-start \ac{rlmh}, and as a comparator in the empirical assessment reported in \Cref{sec: empirical}.
The algorithm we used is formally called an \emph{adaptive Metropolis algorithm with global adaptive scaling} in \citet{andrieu2008tutorial}, and full pseudocode is presented in \Cref{alg: vamcmc}.

\begin{algorithm}[H]
    \caption{AMH; Algorithm 4 in \citealp{andrieu2008tutorial}}
    \label{alg: vamcmc}
    \begin{algorithmic}
        \Require $\alpha_\star \in (0,1)$ (target acceptance rate, here 0.234), $m \in \mathbb{N}$ (number of iterations), $(\gamma_i)_{i \geq 0} \subset [0,\infty)$ (learning rate)

        \item \textbf{Initialise:} $x_{0} = 0$, $\mu_{0} = 0$, $\Sigma_{0} = I$,  $\lambda_{0} = 1$

        \For{$i = 1$ \textbf{to} $m$}
            \State $x^{\star}_i \sim \mathcal{N}(x_{i-1} | \lambda_{i-1}\Sigma_{i-1})$ \Comment{propose next state}
            \State $\alpha_i \leftarrow \min(1, p(x^{\star}_i) / p(x_{i-1}) )$ \Comment{acceptance probability}
            \State $x_i \leftarrow x_i^{\star}$ with probability $\alpha_i$, else $x_{i} \leftarrow x_{i-1}$ \Comment{accept/reject}
            \State $\log(\lambda_i) \leftarrow  \log(\lambda_{i-1}) + \gamma_{i-1} ( \alpha_i - \alpha_\star  ) $ \Comment{refine proposal scale}
            \State $\mu_i \leftarrow \mu_{i-1} + \gamma_{i-1} (x_i - \mu_{i-1})$ \Comment{update mean approximation}
            \State $\Sigma_{i} \leftarrow \Sigma_{i-1} + \gamma_{i-1} \left[ (x_{i}-\mu_{i-1})(x_{i}-\mu_{i-1})^\top - \Sigma_{i-1} \right]$ \Comment{update covariance approximation}
        \EndFor
    \end{algorithmic}
\end{algorithm}

In brief, \Cref{alg: vamcmc} constructs a sequence of approximations $\Sigma_n$ to the covariance matrix of the target $p(\cdot)$, and then proposes new states $x_{n+1}^\star$ using a Gaussian distribution centred at the current state $x_n$ with covariance $\lambda_n \Sigma_n$.
The prefactor $\lambda_n$ is selected in such a manner that the proportion of accepted proposals aims to approach $0.234$, a value that is theoretically supported \citep{gelman1997weak,yang2020optimal}.

\begin{remark}[AMH as $\phi$-MH]
The \ac{arwmh} algorithm in \Cref{alg: vamcmc} can be viewed as an instance of $\phi$-MH with the building block proposals $q_\varphi(\cdot|x)$ being Gaussian with mean $x$ and covariance $\varphi \in \mathrm{S}_d^+$, where the algorithm attempts to learn a \emph{constant} function $\phi : \mathbb{R}^d \rightarrow \mathrm{S}_d^+$.
\end{remark}

For warm starting of \ac{rlmh} we performed $m = 10^4$ iterations of \ac{arwmh} to obtain samples $(x_i)_{i=-m+1}^0$, and we took the matrix $\Sigma$ appearing in \eqref{eq: Laplace proposal} to be the matrix $\Sigma_m$ obtained as the sample average of the final third of samples generates from \ac{arwmh}.
This approach enables us to exploit a rough approximation of the covariance of $p(\cdot)$, which is important in higher-dimensional problems, while removing the burden of simultaneously learning a proposal covariance, in addition to a proposal mean, in our set-up for \ac{rlmh}.

\subsection{Parametrisation of the Policy}
\label{app: parametrisation}

The aim of this appendix is to explain how the maps $\phi_\theta$ were parametrised for the experiments that we performed, and to explain how our choice of parametrisation ensured the conditions of \Cref{thm: explicit} were satisfied.

Let $\psi_\theta : \mathbb{R}^d \rightarrow \mathbb{R}^d$ be a collection of functions indexed by $\theta \in \mathbb{R}^p$ such that $(\theta,x) \mapsto \psi_\theta(x)$ is locally Lipschitz over $(\theta,x) \in \mathbb{R}^p \times \mathbb{R}^d$.
Let $C \subset \mathbb{R}^d$ be a compact set and let $\gamma_C : \mathbb{R}^d \rightarrow \mathbb{R}$ be a smooth function with $\gamma_C(x) = 1$ on $x \in C^{\mathtt{c}}$.
Armed with these tools, we propose to set
\begin{align}
\phi_\theta(x) := \psi_\theta(x) + \gamma_C(x) [ x - \psi_\theta(x) ] \label{eq: construction}
\end{align}
for all $x \in \mathbb{R}^d$ and $\theta \in \mathbb{R}^p$.
The construction in \eqref{eq: construction} ensures that the regularity conditions (i)-(iii) of \Cref{thm: explicit} are satisfied, so that the ergodicity of \ac{rlmh} can be guaranteed.
Intuitively, the proposal \eqref{eq: Laplace proposal} will default to a random walk proposal $\phi_\theta(x) = x$ when the state $x$ is outside of the set $C$, as a consequence $\gamma_C(x) = 1$ in \eqref{eq: construction}, while when $x \in C$ there is an opportunity to learn a flexible mean $\psi_\theta(x)$ for the proposal \eqref{eq: Laplace proposal}.
First we establish that the regularity requirements (i)-(iii) of \Cref{thm: explicit} are indeed satisfied, and then specific choices for $\psi_\theta$, $C$, and $\gamma_C$ will be presented.

\medskip

\noindent \textbf{Property (i):}
For any compact $\Theta_0 \subset \mathbb{R}^p$,
$$
\sup_{\theta \in \Theta_0} \sup_{x \in \mathbb{R}^d} \|x - \phi_\theta(x)\| \leq \sup_{\theta \in \Theta_0} \sup_{x \in C} \|x - \phi_\theta(x)\| < \infty ,
$$
where the first inequality holds since $x - \phi_\theta(x)$ vanishes when $x \in C^{\mathtt{c}}$, and the second inequality holds since $(\theta , x) \mapsto x - \phi_\theta(x)$ is continuous, and hence bounded on the compact set $\Theta_0 \times C$.

\noindent \textbf{Property (ii):}
For any compact $\Theta_0 \subset \mathbb{R}^p$,
\begin{align}
\sup_{\theta \in \Theta_0} \text{Lip}(x \mapsto \phi_\theta(x)) & = 
\sup_{\theta \in \Theta_0} \sup_{x \in \mathbb{R}^d} \text{LocLip}_x (y \mapsto \phi_\theta(y)) \nonumber \\
& \leq 1 + \sup_{\theta \in \Theta_0} \sup_{x \in C} \text{LocLip}_x (y \mapsto \phi_\theta(y)) \nonumber \\
& \leq 1 + \underbrace{ \sup_{\theta \in \Theta_0} \sup_{x \in C} \text{LocLip}_{(\theta,x)}( (\vartheta,y) \mapsto \phi_\vartheta(y) ) }_{(*)} <  \infty , \label{eq: lips ineq}
\end{align}
where the first equality is the definition of the Lipschitz constant, the first inequality holds since $\phi_\theta(x)$ is the identity on $x \in C^{\mathtt{c}}$ with unit Lipschitz constant, the second inequality holds via set inclusion, and the final inequality holds since $\Theta_0 \times C$ is compact and therefore $(\theta,x) \mapsto \phi_\theta(x)$ is Lipschitz when restricted to $\Theta_0 \times C$, with $(*)$ the corresponding Lipschitz constant.

\noindent \textbf{Property (iii):}
For any compact $\Theta_0 \subset \mathbb{R}^p$,
\begin{align*}
    \sup_{\theta \in \Theta_0} \sup_{x \in \mathbb{R}^d} \text{LocLip}_\theta (\vartheta \mapsto \phi_\vartheta(x)) & = \sup_{\theta \in \Theta_0} \sup_{x \in C} \text{LocLip}_x (y \mapsto \phi_\theta(y)) \\
    & \leq \sup_{\theta \in \Theta_0} \sup_{x \in C} \text{LocLip}_{(\theta,x)}( (\vartheta,y) \mapsto \phi_\vartheta(y) ) < \infty , 
\end{align*}
where the equality holds since $\phi_\theta(x)$ is constant in $\theta$ when $x \in C^{\mathtt{c}}$, the first inequality holds by set inclusion, and the final inequality was established in \eqref{eq: lips ineq}.

\medskip

To implement the construction in \eqref{eq: construction} we need to specify a parametric map $\psi_\theta : \mathbb{R}^d \rightarrow \mathbb{R}^d$, a compact set $C \subset \mathbb{R}^d$, and a smooth function $\gamma_C$ that vanishes on $C^{\mathtt{c}}$.
For the experiments that we report in this manuscript we took:
\begin{itemize}
    \item $\psi_\theta(x) = \bar{x} + \Sigma^{1/2} \nu_\theta(x)$, where $\bar{x}$ is the mean and $\Sigma \in \mathrm{S}_d^+$ is the covariance matrix obtained from the warm-up samples $(x_i)_{i=-m+1}^0$ as explained in \Cref{sec: empirical}, and $\nu_\theta : \mathbb{R}^d \rightarrow \mathbb{R}^d$ is a neural network whose architecture and initialisation are detailed in \Cref{subsec: training detail}. 
    \item the set $C$ was taken to be an ellipsoid 
    $$
    C = \{x \in \mathbb{R}^d : \eta(x) \leq 1 \}, \qquad \eta(x) := \frac{ \|\Sigma^{-1/2}(x-\bar{x})\|^2 }{ \ell^2 } ,
    $$
    where $\ell > 0$ is a radius to be specified.
    \item the map $\gamma_C$ was taken to be the smooth transition function $\gamma_C(x) = \gamma(\eta(x))$ where 
    $$
    \gamma(\eta) := \left\{ \begin{array}{ll} 0 & \eta \in [0,1/2] \\
    \left[ 1 + \exp\left( - \frac{ 4\eta-3 }{ 4\eta^2 - 6\eta + 2 } \right) \right]^{-1} & \eta \in [1/2,1] \\
    1 & \eta \in [1,\infty) \end{array} \right.
    $$
    which satisfies the smoothness requirement and is identically one when $x \in C^{\mathtt{c}}$.
\end{itemize}
The radius $\ell$ of the ellipsoid $C$ was set to $\ell = 10$, representing approximately 10 standard deviations from the mean of $p(\cdot)$, which ensures that the symmetric random walk behaviour (that occurs under \eqref{eq: construction} when $x \in C^{\mathtt{c}}$) is rarely encountered.

\subsection{Training Details}
\label{subsec: training detail}

This section contains the implementational details required to reproduce the experimental results reported in \Cref{sec: empirical}.

\paragraph{Parametrisation of $\phi_\theta$}

As explained in \Cref{app: parametrisation}, the function $\phi_\theta$ was parametrised using a flexible parametric map $\nu_\theta : \mathbb{R}^d \rightarrow \mathbb{R}^d$.
For our experiments we took $\nu_\theta$ to be a fully-connected two-layer neural network with the ReLU activation function and 32 features in the hidden layer; a total of $p = (32 + d) (d+1)$ parameters to be inferred.

\paragraph{Pre-training of $\phi_\theta$}

The parameters $\theta$ of the neural network $\nu_\theta$ were initialised by pre-training against the loss 
$$
\theta \mapsto \frac{1}{m} \sum_{i = 1}^m \left\| \Sigma^{-\frac{1}{2}} (\bar{x} - x_{i-m}) - \nu_\theta(x_{i-m}) \right\|^2 ,
$$ 
computed over the warm-up samples $(x_i)_{i=-m+1}^0$ generated from \ac{arwmh}, so that the proposal corresponding to $\nu_\theta$ approximates the anti-correlated behaviour illustrated in \Cref{ex: independence sampler} of the main text.

Optimisation was conducted using the Deep Learning toolbox in \texttt{Matlab} R2024a, with the ADAM optimisation method \citep{kingma2014adam}.
The default settings of the toolbox were employed.
Pre-training was terminated once either the mean squared error validation loss (computed using a held out subset of 30\% of the dataset) fell below the threshold `1', or a maximum of 2000 epochs was reached. 
Upon termination, the network with the best validation loss was returned.

\paragraph{Training of $\phi_\theta$}

\ac{rlmh} was performed using the implementation of \ac{ddpg} provided in the \ac{rl} toolbox of \texttt{Matlab} R2024a.
The default settings for training $\phi_\theta$ using this toolbox were employed with:
\begin{itemize}
    \item 100 episodes, each consisting of $500$ iterations of \ac{mcmc}
    \item standard deviation of the Ornstein-Uhlenbeck process noise is 0 (c.f. \Cref{rem: on policy})
    \item actor learning rate = $10^{-6}$
    \item experience buffer length = $\frac{d}{ \|\Sigma\|_{F}^{2}} \land 10^{-5}$ ,
\end{itemize}
where $\|\Sigma\|_F$ denotes the Frobenius norm of $\Sigma \in \mathrm{S}_d^+$.

\paragraph{Parametrisation of the Critic $\matheuvm{Q}$}

For all experiments we took $\matheuvm{Q} : \mathbb{R}^{2d} \times \mathbb{R}^{2d} \rightarrow \mathbb{R}$ to be a fully-connected two-layer neural network with the ReLU activation function and 8 features in the hidden layer; a total of $(8 + 4d) (4d+1)$ parameters to be inferred.

\paragraph{Training of the Critic $\matheuvm{Q}$}

The default settings for training the critic $\matheuvm{Q}$ using the \ac{rl} toolbox in \texttt{Matlab} R2024a were employed, except for the maximum size of the replay buffer which was set to be $10^6$; large enough to retain the full sample path of the Markov chain.

\paragraph{Computation}

All computation was performed on a desktop PC with a 12th generation Intel i9-12900F (24) @ 2.419GHz CPU, 32GB RAM, and NVIDIA GeForce RTX 3060 Ti GPU.
The (median) average time required to perform \ac{rlmh} on a single task from the \texttt{PosteriorDB} benchmark was 132 seconds.

Note that such specifications are not required to run the experiments that we report; in particular it is not required to have access a GPU.

\section{Additional Empirical Details and Results}
\label{app: full empirical}

The sensitivity of our experiments to the choice of the neural network architecture is examined in \Cref{app: choice of network}.
A selection of additional illustrations of \ac{rlmh} are presented in \Cref{app: additional illus}.
\Cref{subsec: mala} describes how \ac{mala} was implemented.
The performance measures that we used for assessment are precisely defined in \Cref{app: performance measures}.
Full results for \texttt{PosteriorDB} are contained in \Cref{subsec: full results}.

\subsection{Choice of Neural Network}
\label{app: choice of network}

The sophistication of modern \ac{rl} methodologies, such as \ac{ddpg}, means that in practice there are several design choices to be specified.
For the present work we largely relied on the default settings provided in the \ac{rl} toolbox of \texttt{Matlab} R2024a, but it is still necessary for us to select the neural architectures that are used.
The aim of this appendix is to briefly explore the consequences of varying the neural architecture for $\phi_\theta$ in the context of the simple example from \Cref{fig: illustration}, to understand the sensitivity of \ac{rlmh} to the choice of neural network.

Results are displayed in \Cref{fig: sensitivity to network}.
For these experiments all settings were identical to that of \Cref{fig: illustration}, with the exception of gradient clipping; since the number of parameters $\text{dim}(\theta)$ in the neural network $\phi_\theta$ depends on the architecture of the neural network, the gradient clipping threshold $\tau$ in \Cref{alg: rlmh} was adjusted accordingly.
These results broadly indicate an insensitivity to the architecture of the neural network used to construct the proposal mean $\phi_\theta$ in \ac{rlmh}.
Specifically, both narrower and wider architectures, and also deeper architectures, all led to the same global mode-hopping proposal reported in \Cref{fig: illustration} of the main text.

\begin{figure}[!ht]
    \centering
    \begin{subfigure}{0.95\textwidth}
    \includegraphics[width=\linewidth,clip,trim = 2.8cm 0cm 2.7cm 0cm]{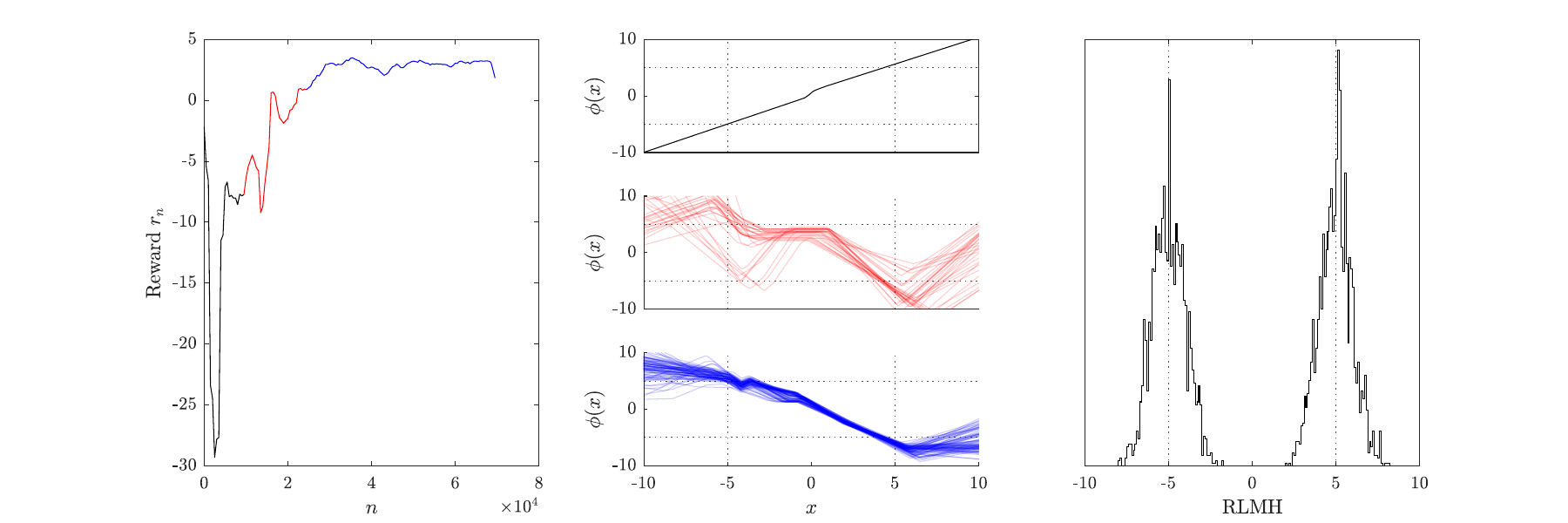}
    \subcaption{$h = 1$, $w = 16$}
    \end{subfigure}
    \begin{subfigure}{0.95\textwidth}
    \includegraphics[width=\linewidth,clip,trim = 2.8cm 0cm 2.7cm 0cm]{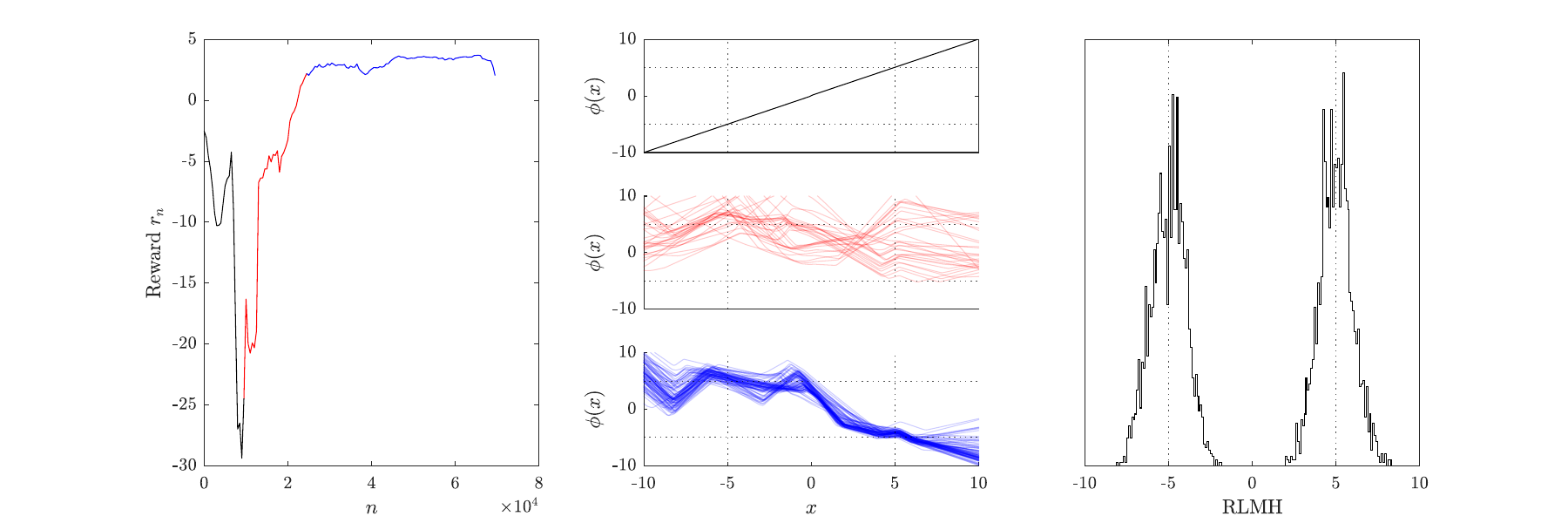}
    \subcaption{$h = 1$, $w = 64$}
    \end{subfigure}
    \caption{Investigating sensitivity to the architecture of the neural network $\phi$ in \ac{rlmh}.
    For the experiment presented in \Cref{fig: illustration} of the main text we employed a two layer (i.e. $h = 1$ hidden layer) neural network with width $w = 32$.
    The same experiment was performed with the architecture dimensions $(h,w)$ changed to (a) (1,16), (b) (1,64), (c) (1,256), (d) (2,32), and (e) (3,32); in all cases similar conclusions were obtained.
    [The colour convention and the interpretation of each panel is identical to that of \Cref{fig: illustration} in the main text.]}
    \label{fig: sensitivity to network}
\end{figure}

\begin{figure}[!ht] \ContinuedFloat
    \centering
    \begin{subfigure}{0.95\textwidth}
    \includegraphics[width=\linewidth,clip,trim = 2.8cm 0cm 2.7cm 0cm]{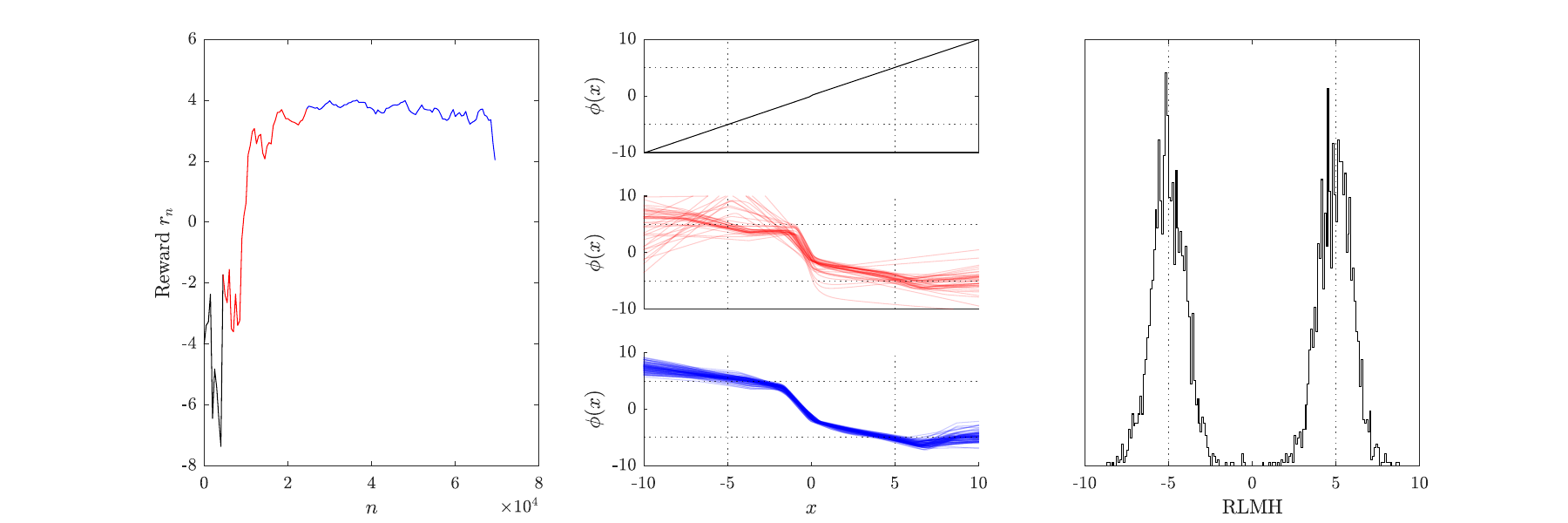}
    \subcaption{$h = 1$, $w = 256$}
    \end{subfigure}
    \begin{subfigure}{0.95\textwidth}
    \includegraphics[width=\linewidth,clip,trim = 2.8cm 0cm 2.7cm 0cm]{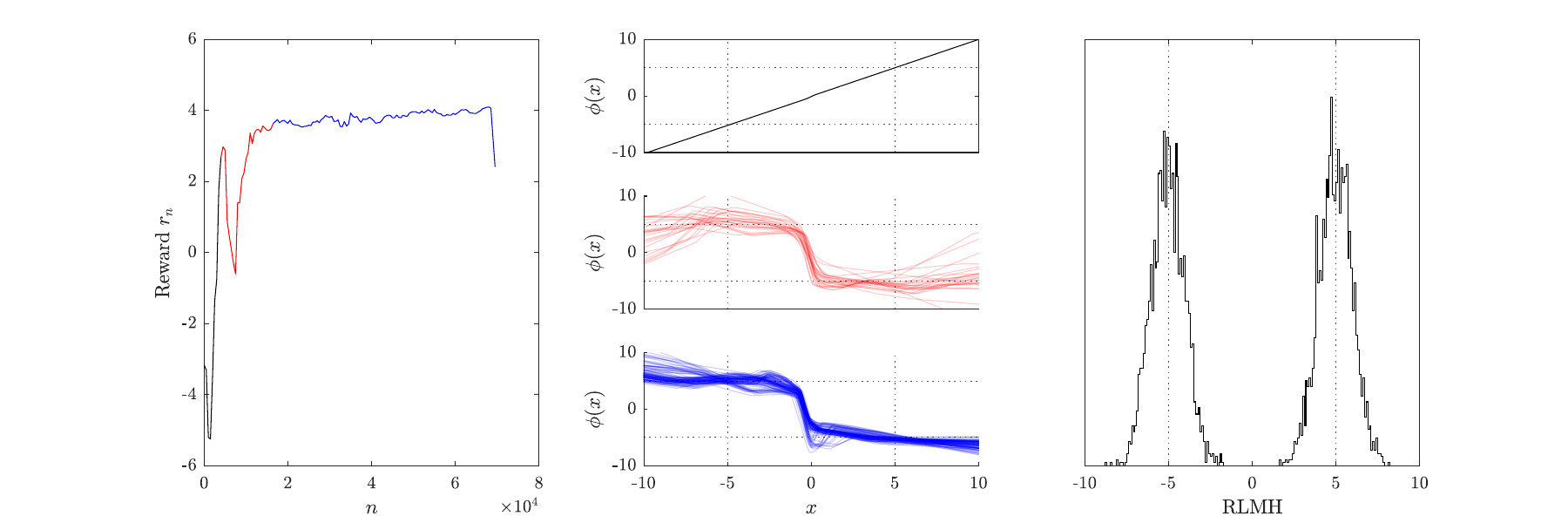}
    \subcaption{$h = 2$, $w = 32$}
    \end{subfigure}
    \begin{subfigure}{0.95\textwidth}
    \includegraphics[width=\linewidth,clip,trim = 2.8cm 0cm 2.7cm 0cm]{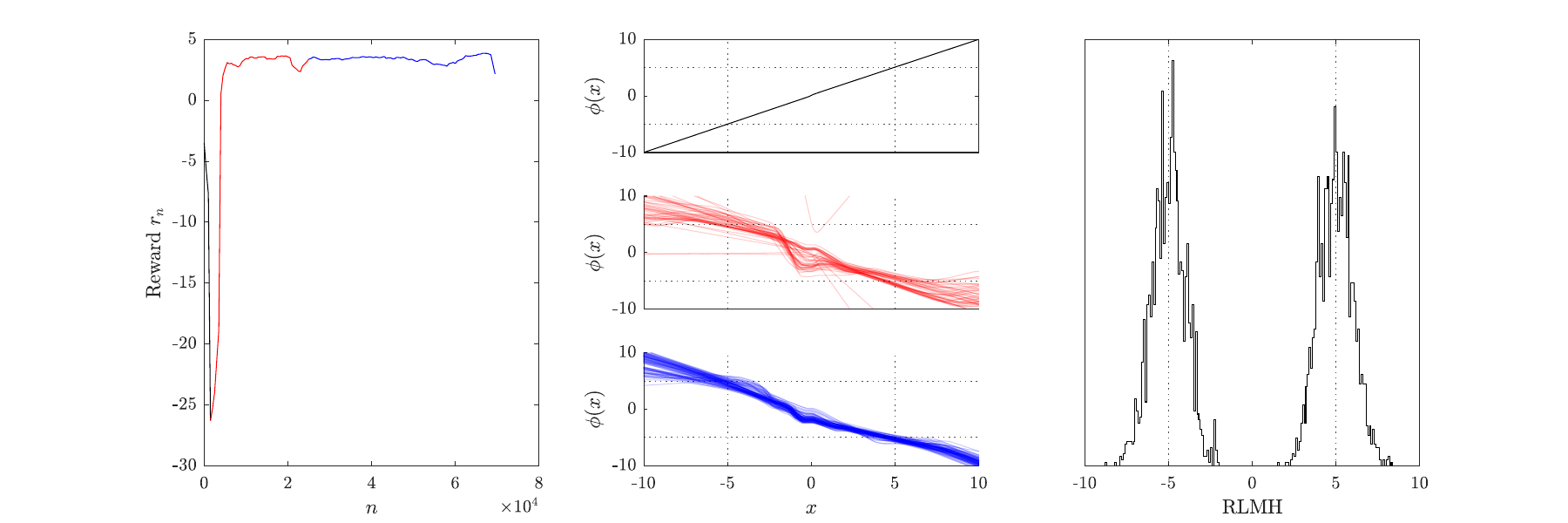}
    \subcaption{$h = 3$, $w = 32$}
    \end{subfigure} 
    \caption{Investigating sensitivity to the architecture of the neural network $\phi$ in \ac{rlmh}, continued.}
\end{figure}

\FloatBarrier

\subsection{Additional Illustrations in 1D and 2D}
\label{app: additional illus}

This appendix supplements \Cref{fig: illustration} in the main text with additional illustrations, corresponding to different target distributions $p(\cdot)$ in dimensions $d = 1$ and $d = 2$.
Specifically, we consider (a) a skewed target, (b) a skewed multimodal target, and (c) an unequally-weighted mixture model target in dimension $d = 1$, and a Gaussian mixture model target in dimension $d = 2$.
Results are reported in \Cref{fig: more illustrations} (for $d = 1$) and \Cref{fig: 2d illustration} (for $d = 2$).
These examples suggest that the gradient-free version of \ac{rlmh} can learn rapidly mixing Markov transition kernels for a range of different targets.

\begin{figure}[t!]
    \centering
    \begin{subfigure}{0.95\textwidth}
    \includegraphics[width=\linewidth,clip,trim = 2.8cm 0cm 2.7cm 0cm]{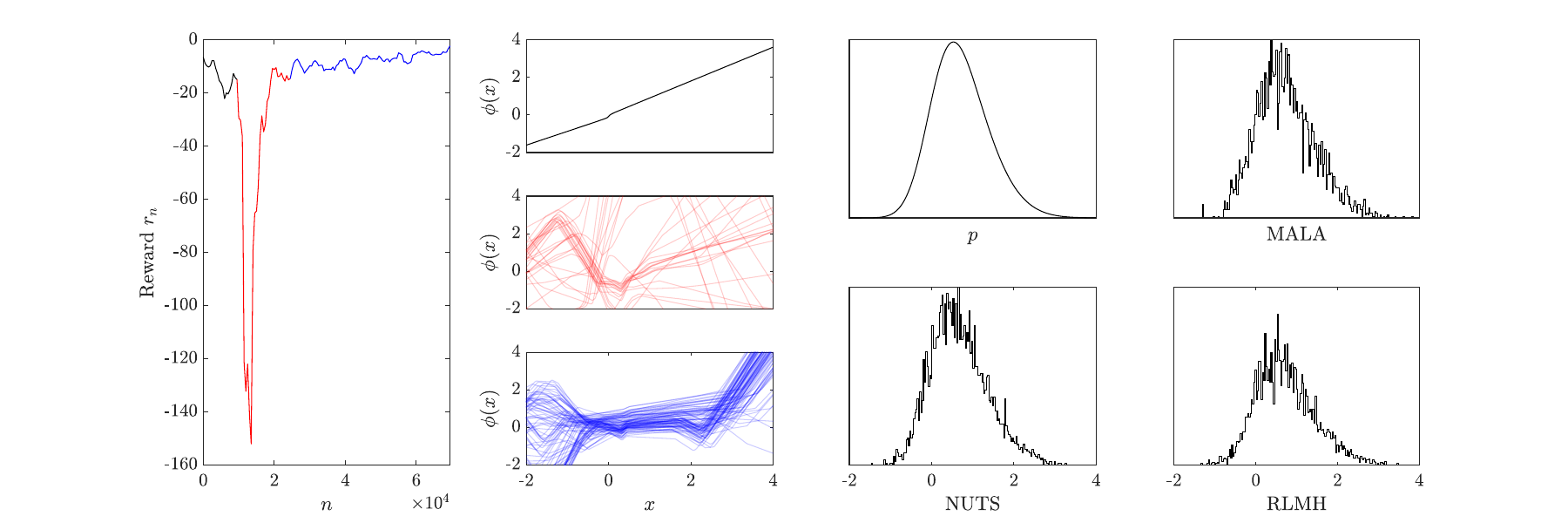}
    \subcaption{Skewed}
    \end{subfigure}

    \caption{\ac{rlmh}, illustrated.
    Here we considered (a) a skewed target, (b) a skewed multimodal target, and (c) an unequally-weighted mixture model target.
    Left: The reward sequence $(r_n)_{n \geq 0}$, where $r_n$ is the logarithm of the expected squared jump distance corresponding to iteration $n$ of \ac{rlmh}.
    Middle:  Proposal mean functions $x \mapsto \phi(x)$, at initialisation (top), and corresponding to the rewards indicated in red (middle) and blue (bottom).
    Right:  The density $p(\cdot)$, and histograms of the last $n = 5,000$ samples produced using \ac{mala}, \ac{nuts}, and \ac{rlmh}.
    [A smoothing window of length 5 was applied to the reward sequence to improve clarity of this plot.]
    \label{fig: more illustrations}
}
\end{figure}

\begin{figure}[t!] \ContinuedFloat
    \centering
    
     \begin{subfigure}{0.95\textwidth}
    \includegraphics[width=\linewidth,clip,trim = 2.8cm 0cm 2.7cm 0cm]{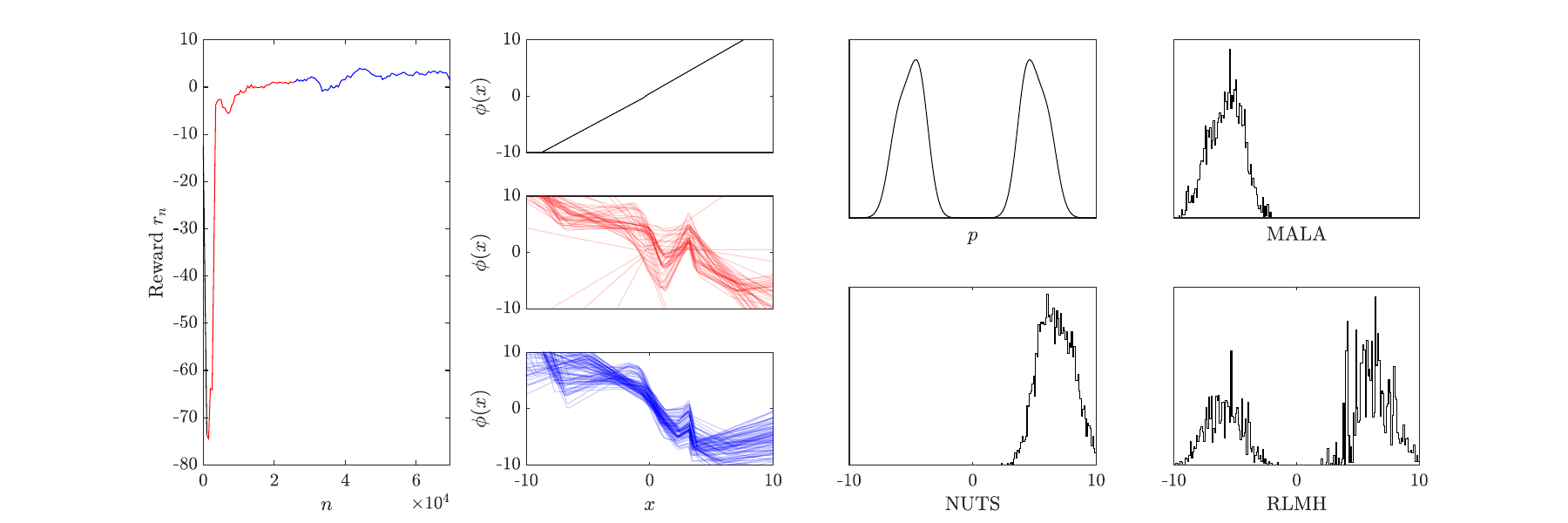}
    \subcaption{Skewed multimodal}
    \end{subfigure}

    \begin{subfigure}{0.95\textwidth}
    \includegraphics[width=\linewidth,clip,trim = 2.8cm 0cm 2.7cm 0cm]{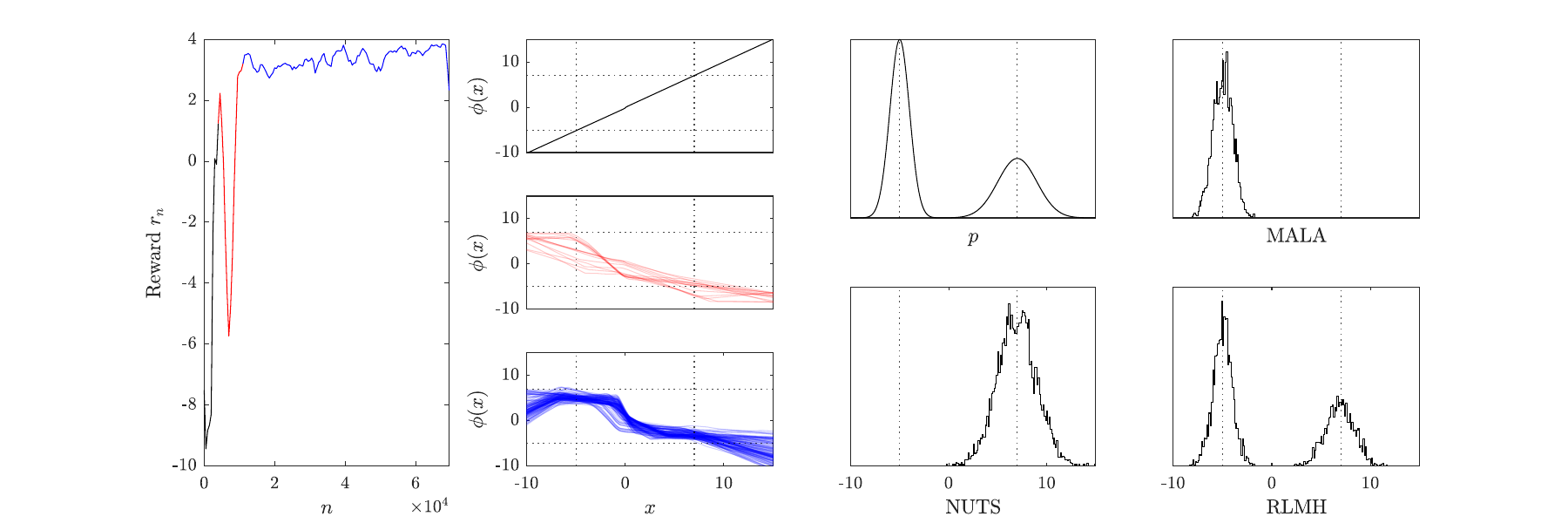}
    \subcaption{Unequal mixture model}
    \end{subfigure}

    \caption{\ac{rlmh}, illustrated, continued.
    Here we considered (a) a skewed target, (b) a skewed multimodal target, and (c) an unequally-weighted mixture model target.
    Left: The reward sequence $(r_n)_{n \geq 0}$, where $r_n$ is the logarithm of the expected squared jump distance corresponding to iteration $n$ of \ac{rlmh}.
    Middle:  Proposal mean functions $x \mapsto \phi(x)$, at initialisation (top), and corresponding to the rewards indicated in red (middle) and blue (bottom).
    Right:  The density $p(\cdot)$, and histograms of the last $n = 5,000$ samples produced using \ac{mala}, \ac{nuts}, and \ac{rlmh}.
    [A smoothing window of length 5 was applied to the reward sequence to improve clarity of this plot.]
    \label{fig: more illustrations}
}
\end{figure}

\begin{figure}[t!]
    \centering
    \includegraphics[width=\linewidth]{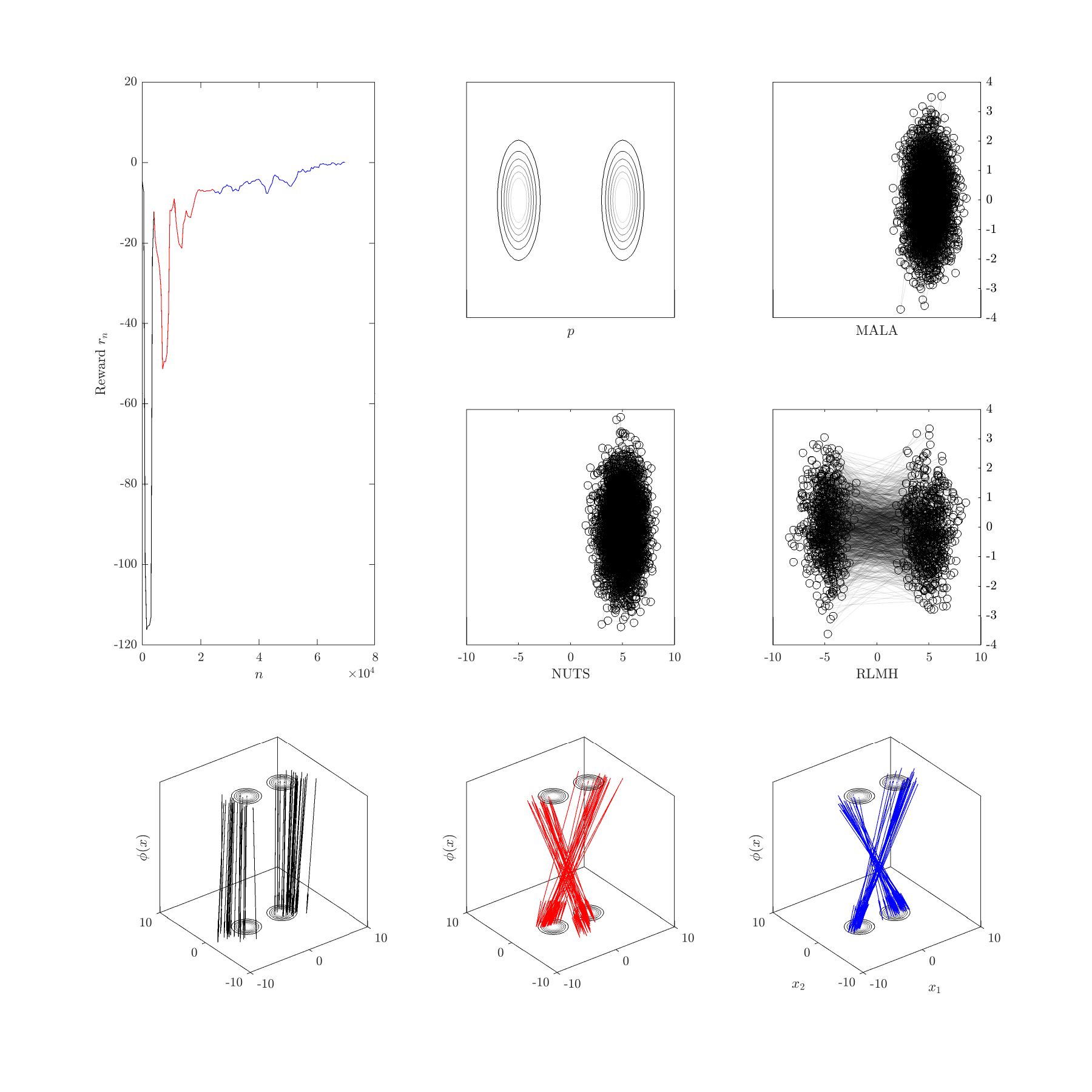}
    \caption{\ac{rlmh}, illustrated.
    Here we considered a two-dimensional Gaussian mixture model target.
    Top Left: The reward sequence $(r_n)_{n \geq 0}$, where $r_n$ is the logarithm of the expected squared jump distance corresponding to iteration $n$ of \ac{rlmh}.
    Top Right:  The density $p(\cdot)$, and histograms of the last $n = 5,000$ samples produced using \ac{mala}, \ac{nuts}, and \ac{rlmh}.
    Bottom:  Proposal mean functions $x \mapsto \phi(x)$, at initialisation (left), and corresponding to a typical policy from the period whose rewards are indicated in red (middle) and blue (right).
    [A smoothing window of length 5 was applied to the reward sequence to improve clarity of this plot.  In the bottom row we indicate the current state $x_n$ and the proposed state $x_{n+1}^\star$ of the Markov chain using a directed arrow from $x_n$ to $x_{n+1}^\star$.]
    \label{fig: 2d illustration}
    }
\end{figure}

\FloatBarrier

\FloatBarrier

\subsection{Adaptive Metropolis-Adjusted Langevin Algorithm}
\label{subsec: mala}

\Cref{alg: mala} contains pseudocode for an adaptive version of the (preconditioned) \acf{mala} algorithm of \citet{robert1996exponential}.
For the experiments that we report, we implemented this \ac{amala} in \texttt{Matlab} R2024a.

\begin{algorithm}[h]
    \caption{$\texttt{MALA}(x_0, \epsilon, \Sigma, n )$; \citealp{robert1996exponential}}
    \label{alg: mala}
    \begin{algorithmic}
        \Require $x_0 \in \mathbb{R}^d$ (initial state), $\epsilon > 0$ (scale of proposal), $\Sigma \in \mathrm{S}_d^+$ (preconditioner matrix), $n \in \mathbb{N}$ (number of iterations)
        \For{$i = 1$ \textbf{to} $n$}
            \State $x^{\star}_{i} \gets \underbrace{ x_{i-1} + \epsilon \Sigma (\nabla \log p)(x_{i-1}) }_{=: \nu(x_{i-1})} + (2 \epsilon \Sigma)^{1/2} Z_i$ \Comment{propose new state; $Z_i \sim \mathcal{N}(0,I)$}
            \State $\displaystyle \alpha_{i} \leftarrow \min\left(1, \frac{ p(x^{\star}_{i}) }{ p(x_{i-1}) } \frac{ \exp( - \frac{1}{4 \epsilon} \| \Sigma^{-1/2} ( x_{i-1} - \nu(x_i^\star) ) \|^2 ) }{ \exp( - \frac{1}{4 \epsilon} \| \Sigma^{-1/2} ( x_i^\star - \nu(x_{i-1}) ) \|^2 ) } \right)$ \Comment{acceptance probability}
            \State $x_{i} \leftarrow x_{i}^{\star}$ with probability $\alpha_{i}$, else $x_{i} \leftarrow x_{i-1}$ \Comment{accept/reject}
        \EndFor
        
        \State \textbf{Return:} $\{x_1,\dots,x_n\}$
    \end{algorithmic}
\end{algorithm}

For implementation of (non-adaptive) \ac{mala}, we are required to specify a step size $\epsilon > 0$ and a preconditioner matrix $\Sigma \in \mathrm{S}_d^+$ in \Cref{alg: mala}.
In general, suitable values for both of these parameters will be problem-dependent, and eliciting suitable values can be difficult \citep{livingstone2022barker}.
Standard practice is to perform some form of manual or automated tuning to arrive at parameter values for which the average acceptance rate is close to 0.574, motivated by the asymptotic analysis of \citet{roberts1998optimal}.
For the purpose of this work we implemented a particular adaptive version of \ac{mala} used in recent work such as \citet{wang2023stein}, which for completeness is described in \Cref{alg: ada_mala}.

\begin{algorithm}[h]
\begin{algorithmic}[1]
\Require $x_{0,0} \in \mathbb{R}^d$ (initial state), $\epsilon_0 > 0$ (initial scale of proposal), $\Sigma_0 \in \mathrm{S}_d^+$ (initial preconditioner matrix), $\{n_i\}_{i=0}^{h-1}$ (epoch lengths), $(\alpha_i)_{i=1}^{h-1} \subset (0 , \infty)$ (learning schedule), $E \in \mathbb{N}$ (number of epochs)

\State $\{x_{0,1} \ldots, x_{0,n_0}\} \leftarrow \texttt{MALA}(x_{0,0}, \epsilon_0, \Sigma_0, n_0)$
\For{$i=1,\dots,E-1$}
    \State $x_{i,0} \leftarrow x_{i-1, n_{i-1}}$
    \State $\rho_{i-1} \leftarrow \frac{1}{n_{i-1}} \sum_{j=1}^{n_{i-1}} 1_{x_{i-1,j} \neq x_{i-1,j-1}}$ \Comment{average acceptance rate for epoch $i$}
    \State $\epsilon_{i} \leftarrow \epsilon_{i-1} \exp(\rho_{i-1} - 0.574)$ \Comment{update scale of propsoal}
    \State $\Sigma_i \leftarrow \alpha_i \Sigma_i + (1 - \alpha_i) \mathrm{cov}(\{x_{i-1,1} \ldots, x_{i-1,n_{i-1}}\})$ \Comment{update preconditioner matrix}
    \State $\{x_{i,1} \ldots, x_{i,n_i}\} \leftarrow \texttt{MALA}(x_{i,0}, \epsilon_i, \Sigma_i, n_i)$
\EndFor 

\State \textbf{Return:} $\{x_{0,1},\dots,x_{E-1,n_{E-1}}\}$
\end{algorithmic}
\caption{Adaptive MALA}
\label{alg: ada_mala}
\end{algorithm}

For the pseudocode in \Cref{alg: ada_mala}, we use $\texttt{MALA}(x, \epsilon, \Sigma, n)$ to denote the output from \Cref{alg: mala}, and we use $\mathrm{cov}(\cdot)$ to denote the sample covariance matrix.
The algorithm monitors the average acceptance rate and increases or decreases it according to whether it is below or above, respectively, the 0.574 target.
For the preconditioner matrix, the sample covariance matrix of samples obtained from the previous run of \ac{mala} is used.
For all experiments that we report using \ac{amala}, we employed identical settings to those used in \citet{wang2023stein}.
Namely, we set $\epsilon_0=1$, $\Sigma_0=I_d$, $E=10$, and $\alpha_1 = \cdots = \alpha_9 = 0.3$.
The warm-up epoch lengths were $n_0 = \dots = n_8 = 1,000$ and the final epoch length was $n_9 = 10^5$.
The samples $\smash{ \{x_{E-1,1} , \dots , x_{E-1,n_{E-1}}\} }$ from the final epoch were returned, and constituted the output from \ac{amala} that was used for our experimental assessment.

\FloatBarrier

\subsection{Performance Measures}
\label{app: performance measures}

This section precisely defines the performance measures that were used as part of our assessment; \acf{esjd}, and \acf{mmd}.
For the comparison of adaptive \ac{mcmc} methods, we disabled adaptation after the initial training period in order to generate additional pairs $\{(x_{i-1},x_i^\star)\}_{i=1}^n$ with $n = 5,000$, from which the \ac{esjd} and \ac{mmd} were calculated.

\paragraph{Expected Squared Jump Distance}

The \ac{esjd} in each case was consistently estimated using
\begin{align*}
    \frac{1}{n} \sum_{i=1}^n \|x_i - x_{i-1} \|^2 ,
\end{align*}
the actual squared jump distance averaged over the sample path.
In principle a Rao--Blackwellised estimator could also be used, analogous to how the rewards $r_n$ are calculated in \ac{rlmh}, but we preferred to use the above simpler estimator as it generalises as a performance metric beyond Metropolis--Hastings \ac{mcmc}.

\paragraph{Maximum Mean Discrepancy}

The \ac{mmd} $D(P_m,Q_n)$ between a pair of empirical distributions $P_m = \frac{1}{m} \sum_{i=1}^m \delta_{x_i}$ and $Q_n = \frac{1}{n} \sum_{j=1}^n \delta_{y_j}$ is defined via the formula
\begin{align*}
    \mathrm{MMD}(P_m, Q_n)^2 & := \frac{1}{m^2} \sum_{i=1}^m \sum_{i' = 1}^m k(x_i,x_{i'}) - \frac{2}{mn} \sum_{i=1}^m \sum_{j=1}^n k(x_i, y_j) + \frac{1}{n^2} \sum_{j=1}^n \sum_{j'=1}^n k(y_j,y_{j'}) ,
\end{align*}
and for this work we took the kernel $k$ to be the Gaussian kernel
\begin{align*}
    k(x,y) & := \exp\left( - \frac{\|x - y\|^2}{\ell^2} \right)
\end{align*}
where the length-scale $\ell > 0$ was set according to the \emph{median heuristic}
\begin{align*}
    \ell & := \frac{1}{2} \mathrm{median}\{ \|y_i - y_j\| : 1 \leq i , j \leq n \}
\end{align*}
following \citet{garreau2017large}.
Here $P_m$ represents the approximation to the target $p(\cdot)$ produced using an adaptive \ac{mcmc} algorithm, and $Q_n$ represents a gold-standard set of $n = 10^4$ samples from the target, which are provided in \texttt{PosteriorDB}.

\subsection{Details for \texttt{PosteriorDB}}
\label{subsec: full results}

\texttt{PosteriorDB} is an attempt toward standardised benchmarking, consisting of a collection of posteriors to be numerically approximated \citep{magnusson2022posterior}. The test problems are defined in the \texttt{Stan} probabilistic programming language, and \texttt{BridgeStan} \citep{roualdes2023bridgestan} was used to directly access posterior densities and their gradients as required. 
At the time we conducted our research, \texttt{PosteriorDB} was at version 0.5.0 and contained 120 models, all of which came equipped with a gold-standard sample of size $n = 10^4$, generated from a long run of \ac{nuts}. 
Of these models, a subset of 44 were found to be compatible with \texttt{BridgeStan}, which was at version 2.4.1 at the time this research was performed. 
The version of \texttt{Stan} that we used was \texttt{Stanc3} version 2.34.0 (Unix). 
Thus we used a total of 44 test problems for our empirical assessment.

To implement \ac{arwmh} it is required to specify the learning rate $(\gamma_i)_{i \geq 0}$ appearing in \Cref{alg: vamcmc}.
Following Section 4.2.2 of \citet{andrieu2008tutorial}, we implemented a learning rate of the form
\begin{align}
\gamma_i = \frac{1}{ 2 \cdot (i+1)^{\beta} } \label{eq: simple learning rate}
\end{align}
where the exponent $\beta \in (0,1)$ was manually selected on a per-task basis to deliver the best performance for each of the 44 tasks from \texttt{PosteriorDB}.
Though it is possible to determine $\gamma_i$ at runtime, using techniques such as those described in \citet{delyon1993accelerated}, the simple schedule in \eqref{eq: simple learning rate} is most widely-used and performed reasonably well for most of the 44 tasks we considered.
The task-specific exponents $\beta$ that we used are included in the code used to produce these results, included as part of the electronic supplement.

\end{document}